\title{Semi-Online Bipartite Matching}
\titlerunning{Semi-Online Bipartite Matching}%optional, please use if title is longer than one line
\author{Ravi Kumar}{Google, Mountain View, CA, USA}{ravi.k53@gmail.com}{}{}%mandatory, please use full name; only 1 author per \author macro; first two parameters are mandatory, other parameters can be empty.
\author{Manish Purohit}{Google, Mountain View, CA, USA}{mpurohit@google.com}{}{}
\author{Aaron Schild}{University of California, Berkeley, CA, USA}{aschild@berkeley.edu}{}{}
\author{Zoya Svitkina}{Google, Mountain View, CA, USA}{zoya@cs.cornell.edu}{}{}
\author{Erik Vee}{Google, Mountain View, CA, USA}{erikvee@google.com}{}{}
\authorrunning{R. Kumar, M. Purohit, A. Schild, Z. Svitkina, and E. Vee}%mandatory. First: Use abbreviated first/middle names. Second (only in severe cases): Use first author plus 'et al.'
\subjclass{Theory of Computation $\rightarrow$ Design and Analysis of Algorithms $\rightarrow$ Online Algorithms}% mandatory: Please choose ACM 2012 classifications from https://www.acm.org/publications/class-2012 or https://dl.acm.org/ccs/ccs_flat.cfm . E.g., cite as "General and reference $\rightarrow$ General literature" or \ccsdesc[100]{General and reference~General literature}. 
\keywords{Semi-Online Algorithms, Bipartite Matching}%mandatory
\newcommand{\size}[1]{\ensuremath{|#1|}}
\newcommand\calS{\mathcal{S}}
\newcommand\calD{\mathcal{D}}
\newcommand{\setpair}[1]{#1}
\newcommand{\ignore}[1]{}
\begin{document}

\maketitle

\begin{abstract}
In this paper we introduce the \emph{semi-online} model that generalizes the classical online computational model. The semi-online model postulates that the unknown future has a predictable part and an adversarial part; these parts can be arbitrarily interleaved.  An algorithm in this model operates as in the standard online model, i.e., makes an irrevocable decision at each step.

We consider bipartite matching in the semi-online model, for both integral and fractional cases.  Our main contributions are competitive algorithms for this problem that are close to or match a hardness bound.  The competitive ratio of the algorithms nicely interpolates between the truly offline setting (no adversarial part) and the truly online setting (no predictable part).
\end{abstract}

\section{Introduction}
\label{sec:intro}

Modeling future uncertainty in data while ensuring that the model remains both realistic and tractable has been a formidable challenge facing the algorithms research community.  One of the more popular, and reasonably realistic, such models is the online computational model.  In its classical formulation, data arrives one at a time and upon each arrival, the algorithm has to make an irrevocable decision agnostic of future arrivals.  Online algorithms boast a rich literature and problems such as caching, scheduling, and matching---each of which abstracts common practical scenarios---have been extensively investigated~\cite{borodin2005online, mehta2013online}.   Competitive analysis, which measures how well an online algorithm performs compared to the best offline algorithm that knows the future, has been a linchpin in the study of online algorithms. 

While online algorithms capture some aspect of the future uncertainty in the data, the notion of competitive ratio is inherently worst-case and hence the quantitative guarantees it offers are often needlessly pessimistic.  A natural question that then arises is: how can we avoid modeling the worst-case scenario in online algorithms?  Is there a principled way to incorporate some knowledge we have about the future?  There have been a few efforts trying to address this point from different angles.  One line of attack has been to consider oracles that offer some advice on the future; such oracles, for instance, could be based on machine-learning methods.  This model has been recently used to improve the performance of online algorithms for reserve price optimization, caching, ski rental, and scheduling~\cite{sergei1, sergei2, KPS18}.  Another line of attack posits a distribution on the data~\cite{BS12,MNS12,MGZ12} or the arrival model; for instance, random arrival models have been popular in online bipartite matching and are known to beat the worst-case bounds~\cite{karande2011online,mahdian2011online}.  A different approach is to assume a distribution on future inputs; the field of stochastic online optimization focuses on this setting~\cite{RussellPascal}.  The advice complexity model, where the partial information about the future is quantified as advice bits to an online algorithm, has been studied as well in complexity theory~\cite{advice}.

In this work we take a different route.  At a very high level, the idea is to tease the future data apart into a predictable subset and the remaining adversarial subset.  As the names connote, the algorithm can be assumed to know everything about the former but nothing about the latter.  Furthermore, the predictable and adversarial subsets can arrive arbitrarily interleaved yet the algorithm still has to operate as in the classical online model, i.e., to make irrevocable decisions upon each arrival.   Our model thus offers a natural interpolation between the traditional offline and online models;  we call ours the \emph{semi-online} model.  Our goal is to study algorithms in the semi-online model and to analyze their competitive ratios; unsurprisingly, the bounds depend on the size of the adversarial subset. Ideally, the competitive ratio should approach the offline optimum bounds if the adversarial fraction is vanishing and should approach the online optimum bounds if the predictable fraction is vanishing.  

\subparagraph{Bipartite Matching.}

As a concrete problem in the semi-online setting, we focus on bipartite matching, in both its integral and fractional versions. In the well-known online version of the problem, which is motivated by online advertising, there is a bipartite graph with an offline side that is known in advance and an online side that is revealed one node at a time together with its incident edges.  In the semi-online model, the nodes in the online side are partitioned into a predicted set of size $n-d$ and an adversarial set of size $d$.  The algorithm knows the incident edges of all the nodes in the former but nothing about the nodes in the latter.  We can thus also interpret the setting as online matching with partial information and predictable uncertainty (pardon the oxymoron).
In online advertising applications, there are many predictably unpredictable events.  For example, during the soccer world cup games, we know the nature of web traffic will be unpredictable but nothing more, since the actual characteristics will depend on how the game progresses and which team wins. %Note that some events are inherently unpredictable (e.g., natural disasters, celebrity news).

We also consider a variant of semi-online matching in which the algorithm does not know which nodes are predictable and which are adversarial.  In other words, the algorithm receives a prediction for all online nodes, but the predictions are  correct only for some $n-d$ of them.  We call this the \emph{agnostic} case.

\subparagraph{Main Results.}

We initially assume that the optimum solution on the bipartite graph, formed by the offline nodes on one side and by the predicted and adversarial nodes on the other, is a perfect matching (this is later extended to the general case).
We present two algorithms and a hardness result for the integral semi-online bipartite matching problem.  Let $\delta = d/n$ be the fraction of adversarial nodes.  The Iterative Sampling algorithm, described in Section~\ref{sec:alg1}, obtains a competitive ratio of $(1 - \delta + \frac{\delta^2}{2}(1-1/e))$. (Observe that an algorithm that ignores all the adversarial nodes and outputs a maximum matching in the predicted graph achieves a competitive ratio of only $1-\delta$.)  Our algorithm ``reserves'' a set of offline nodes to be matched to the adversarial nodes by repeatedly selecting a random offline node that is unnecessary for matching the predictable nodes.  It is easy to see that algorithms that deterministically reserve a set of offline nodes can easily be thwarted by the adversary.

The second algorithm, described in Section \ref{sec:structured-sampling}, achieves an improved competitive ratio of $(1-\delta+\delta^2(1-1/e))$. This algorithm samples a maximum matching in the predicted graph by first finding a matching skeleton~\cite{goel2012communication,LeeSingla17} and then sampling a matching from each component in the skeleton using the dependent rounding scheme of~\cite{gandhi2006dependent}. 
This allows us to sample a set of offline nodes that, in expectation, has a large overlap with the set matched to adversarial nodes in the optimal solution.
Surprisingly, in Section \ref{sec:sets-alg}, we show that it is possible to sample from arbitrary set systems so that the same ``large overlap'' property is maintained.
We prove the existence of such distributions using LP duality and believe that this result may be of independent interest.  %However, finding the distribution to sample from requires solving an exponentially-sized LP in the preprocessing phase, and thus the algorithm does not run in polynomial time.

We next consider the fractional version of the semi-online bipartite matching problem, for which we obtain a competitive ratio of $(1 - \delta e^{-\delta}) \approx (1 - \delta + \delta^2 - \delta^3/2 + \cdots)$ using the matching skeleton decomposition combined with a primal-dual analysis of the water-filling algorithm. Note that this expression coincides with the best offline bound (i.e, $1$) and the best online bound (i.e., $1-1/e$) at the extremes of $\delta=0$ and $\delta=1$, respectively.

For both the integral and fractional cases, a hardness result that appears in~\cite[Theorem 3.23]{esfandiari2018}  for a model with stochastic predictions applies to our semi-online model as well. It shows a hardness of $1 - \delta e^{-\delta}$, implying that our algorithm for the fractional case is tight, and the ones for the integral case are nearly tight. We conjecture $1 - \delta e^{-\delta}$ to be the optimal bound for the integral problem as well. 
%In Section \ref{sec:na_hardness}, we give an alternative proof of the hardness result. 

%To complement the algorithms, in Section~\ref{sec:na_hardness} we obtain a hardness result showing that the above competitive ratios are near-optimal.  In particular, no randomized algorithm can achieve a competitive ratio better than $(1 - \delta e^{-\delta})$.  Note that this expression coincides with the best offline bound (i.e, $1$) and the best online bound (i.e., $1-1/e$) at the extremes of $\delta=0$ and $\delta=1$, respectively.  We conjecture this to be the optimal bound for the problem.

\subparagraph{Extensions.}

In Section~\ref{sec:extensions}, we explore variants of the semi-online matching model, including the agnostic version and fractional matchings, and present upper and lower bounds in those settings.  To illustrate the generality of our semi-online model, we consider a semi-online version of the classical ski rental problem.  In this version, the skier knows whether or not she'll ski on certain days while other days are uncertain. Interestingly, there is an algorithm with a competitive ratio of the same form as our hardness result for matchings, namely $1 - (1-x)e^{-(1-x)}$, where $(1-x)$ is a parameter analogous to $\delta$ in the matching problem.  We wonder if this form plays a role in semi-online algorithms similar to what $(1-1/e)$ has in many online algorithms~\cite{karlin2003stories}.

\subparagraph{Other Related Work.}

The use of (machine learned) predictions for revenue optimization problems was first proposed in~\cite{sergei1}.  The concepts were formalized further and applied to online caching in~\cite{sergei2} and ski rental and non-clairvoyant scheduling in~\cite{KPS18}. Online matching with forecasts was first studied in~\cite{vee2010matching}; however, that paper is on forecasting the demands rather than the structure of the graph as in our case.  The problem of online matching when edges arrive in batches was considered in~\cite{LeeSingla17} where a $(1/2 + 2^{-O(s)})$-competitive ratio is shown, with $s$ the number of stages.
However, the batch framework differs from ours in that in our case, the nodes arrive one at a time and are arbitrarily interleaved.  Online allocation problem in a setting that interpolates between stochastic and adversarial models has also been studied~\cite{esfandiari2018}.
 
There has been a lot of work on online bipartite matching and its variants. The RANKING algorithm~\cite{karp1990optimal} selects a random permutation of the offline nodes and then matches each online node to its unmatched neighbor that appears earliest in the permutation. It is well-known to obtain a competitive ratio of $(1 - 1/e)$, which is best possible. For a history of the problem and significant advances, see the monograph~\cite{mehta2013online}.  The ski rental problem has also been extensively studied; the optimal randomized algorithm has ratio $e/(e-1)$~\cite{karlin1994competitive}.  The term ``semi-online'' has been used in scheduling when an online scheduler knows the sum of the jobs’ processing times (e.g., see~\cite{semionline1}) and in online bin-packing when a lookahead of the next few elements is available (e.g., see~\cite{semionline2}); our use of the term is more quantitative in nature.

\section{Model}
\label{sec:model}

We now formally define the \emph{semi-online bipartite matching} problem.
We have a bipartite graph $G = (\setpair{U, V}, E_G)$ where $U$ is the set of nodes available offline and nodes in $V$ arrive online. Further, the online set $V$ is partitioned into the \emph{predicted} nodes $V_P$ and the \emph{adversarial} nodes $V_A$. The \emph{predicted graph} $H = (\setpair{U, V_P}, E_H)$ is the subgraph of $G$ induced on the nodes in $U$ and $V_P$. Initially, an algorithm only knows $H$ and is unaware of edges between $U$ and $V_A$.
 The algorithm is allowed to preprocess $H$ before any online node actually arrives. In the online phase, at each step, one node of $V$ is revealed with its incident edges, and has to be either irrevocably matched to some node in $U$ or abandoned. Nodes of $V$ are revealed in an arbitrary order\footnote{The arrival order can be adversarial, including interleaving the nodes in $V_P$ and $V_A$.} and the process continues until all of $V$ has been revealed. 

We note that when a node $v\in V$ is revealed, the algorithm can ``recognize'' it by its edges, i.e., if there is some node $v' \in V_P$ that has the same set of neighbors as $v$ and has not been seen before, then $v$ can be assumed to be $v'$. There could be multiple identical nodes in $V_P$, but it is not important to distinguish between them. If an online node comes that is not in $V_P$, then the algorithm can safely assume that it is from $V_A$. (In Section \ref{sec:extensions}, we consider a model where the predicted graph can have errors and hence this assumption is invalid.)

We introduce a quantity $\delta$ to measure the level of knowledge that the algorithm has about the input graph $G$. Competitive ratios that we obtain are functions of $\delta$. For any graph $I$, let $\nu(I)$ denote the size of the maximum matching in $I$. Then we define $\delta = \delta(G) = 1 - \frac{\nu(H)}{\nu(G)}$. Intuitively, the closer $\delta$ is to $0$, the more information the predicted graph $H$ contains and the closer the instance is to an offline problem. Conversely, $\delta$ close to $1$ indicates an instance close to the pure online setting. Note that the algorithm does not necessarily know $\delta$, but we use it in the analysis to bound the competitive ratio. 
For convenience, in this paper we assume that the input graph $G$ contains a perfect matching. Let $n = |U| = |V|$ be the number of nodes on each side and $d = |V_A|$ be the number of adversarial online nodes.
In this case, $\delta$ simplifies to be the fraction of online nodes that are adversarial, i.e., $\delta = \frac{\size{V_A}}{\size{V}} = \frac{d}{n}$. In Appendix \ref{sec:app_imperfect}, we extend our techniques to handle the general version of the problem when $G$ may not contain a perfect matching.

\section{Iterative Sampling Algorithm}
\label{sec:alg1}

In this section we give a simple polynomial time algorithm for bipartite matching in the semi-online model. We describe the algorithm in two phases: a \emph{preprocessing phase} that finds a maximum matching $M$ in the predicted graph $H$ and an \emph{online phase} that processes each node upon its arrival to find a matching in $G$ that extends $M$.

\paragraph*{Preprocessing Phase} The goal of the preprocessing phase is to find a maximum matching in the predicted graph $H$. However, if we deterministically choose a matching, the adversary can set up the neighborhoods of $V_A$ so that all the neighbors of $V_A$ are used in the chosen matching, and hence the algorithm is unable to match any node from $V_A$.
Algorithm \ref{alg:preprocessing} describes our algorithm to sample a (non-uniform) random maximum matching from $H$.

\begin{algorithm}
\SetAlgoVlined
\newcommand{\forcond}{$i=0$ \KwTo $n$}
\SetKwFunction{FRecurs}{Preprocess}%
\SetKwProg{Fn}{Function:}{\string:}{}
\Fn(){\FRecurs{H}}{
\KwData{Predicted graph $H$}
\KwResult{Maximum matching in $H$, a sequence of nodes from $U$}
\BlankLine
Let $H_0 \leftarrow H, U_0 \leftarrow U$ \\
\For{$i = 1, 2, \ldots, d$}{
$U_i \leftarrow \{u \in U_{i-1} ~\mid~ \nu(H_{i-1} \setminus \{u\}) = n - d\}$ \tcc*[f]{set of nodes whose removal does not change the size of the maximum matching} \\
Let $u_i$ be a uniformly random node in $U_i$ \\
$H_i \gets H_{i-1}\setminus \{u_i\}$  
} 
$M \leftarrow $ Arbitrary maximum matching in $H_{d}$ \\
$R \leftarrow $  Uniformly random permutation of $\{u_1, \ldots, u_d\}$ \\
\KwRet{M, R}
}
\caption{Iterative Sampling: Preprocessing Phase.}
\label{alg:preprocessing}
\end{algorithm}

\paragraph*{Online Phase} In the online phase nodes from $V$ arrive one at a time and we are required to maintain a matching such that the online nodes are either matched irrevocably or dropped. In this phase, we match the nodes in $V_P$ as per the matching $M$ obtained from the preprocessing phase, i.e., we match $v \in V_P$ to node $M(v) \in U$ where $M(v)$ denotes the node matched to $v$ by matching $M$. The adversarial nodes in $V_A$ are matched to nodes in $R$ that are not used by $M$ using the RANKING algorithm~\cite{karp1990optimal}. Algorithm~\ref{alg:online} describes the complete online phase of our algorithm.

\begin{algorithm}
\SetAlgoVlined
\newcommand{\forcond}{$i=0$ \KwTo $n$}
\SetKwFunction{FRecurs}{Preprocess}%
\SetKwProg{Fn}{Function:}{\string:}{}

\BlankLine
$M, R \leftarrow \FRecurs{H}$ \\
\For{$v \in V$ arriving online}
{
\eIf(\tcc*[f]{predicted node}) {$v \in V_P$} 
{Match $v$ to $M(v)$} 
(\tcc*[f]{adversarial node})
{Match $v$ to the first unmatched neighbor in $R$, if one exists \tcc*[f]{RANKING}}
}
\caption{Online Phase.}
\label{alg:online}
\end{algorithm}

\paragraph*{Analysis} For the sake of analysis, we construct a sequence of matchings $\{M_i^*\}_{i=0}^d$ as follows.
Let $M_0^*$ be an arbitrary perfect matching in $G$. 
For $i \geq 1$, by definition of $U_i$, there exists a matching $M'_i$ in $H_i$ of size $n-d$ that does not match node $u_i$.
Hence, $M'_i \cup M_{i-1}^*$ is a union of disjoint paths and cycles such that $u_i$ is an endpoint of a  path $P_i$. Let $M^*_i = M^*_{i-1} \oplus P_i$, i.e., obtain $M^*_{i}$ from $M^*_{i-1}$ by adding and removing alternate edges from $P_i$.
It's easy to verify that $M^*_i$ is indeed a matching and $|M^*_i| \geq |M^*_{i-1}| - 1$. Since $|M_0^*| = n$, this yields $|M^*_{i}| \geq n - i, \ \forall\ 0 \leq i \leq d$. Further, by construction, $M^*_i$ does not match any nodes in $\{u_1, \ldots, u_i\}$.

\begin{lemma}
\label{lem:matched-adversary}
For all $0 \leq i \leq d$, all nodes $v \in V_P$ are matched by $M^*_i$. Further, $|M^*_i(V_A)| \geq d - i$, i.e., at least $d - i$ adversarial nodes are matched by $M^*_i$.
\end{lemma}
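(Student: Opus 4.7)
The plan is to prove both statements simultaneously by induction on $i$. The base case $i = 0$ is immediate, since $M_0^*$ is a perfect matching in $G$, so every vertex in $V = V_P \cup V_A$ is matched; in particular $|M_0^*(V_A)| = d$.

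For the inductive step, I first establish the key structural observation that $M'_i$ matches every vertex of $V_P$. This follows because $M'_i$ is a matching in $H_i$ of size $n - d$, and $H_i$ contains no edges incident to $V_A$, so $M'_i$ must cover exactly $n - d$ vertices in $V$; since $|V_P| = n - d$, all of them lie in $V_P$. Combined with the inductive hypothesis that $M_{i-1}^*$ matches every vertex of $V_P$, this implies that every vertex in $V_P$ has degree exactly $2$ in $M'_i \cup M_{i-1}^*$ and is therefore an internal vertex of some path or cycle in the path/cycle decomposition. In particular, if the endpoint $w$ of $P_i$ opposite to $u_i$ lies in $V$, then $w \notin V_P$, so $w \in V_A$; moreover, since $M'_i$ has no edges incident to $V_A$, no vertex of $V_A$ can appear as an internal vertex of $P_i$.

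Next I would analyze the effect of the XOR $M_i^* = M_{i-1}^* \oplus P_i$. Vertices not on $P_i$ keep their $M_{i-1}^*$-partners. Internal vertices of $P_i$, and in particular all $V_P$-vertices on $P_i$, remain matched (with partners swapped between the two matchings). The only vertex whose matched status can change adversely is the far endpoint $w$: if $w \in V_A$, then the single edge of $P_i$ at $w$ must come from $M_{i-1}^*$, so after the XOR this edge is removed and $w$ becomes unmatched in $M_i^*$. Combining these observations yields both conclusions: every $v \in V_P$ is still matched by $M_i^*$, and at most one vertex of $V_A$ has been unmatched relative to $M_{i-1}^*$, so $|M_i^*(V_A)| \geq |M_{i-1}^*(V_A)| - 1 \geq (d - i + 1) - 1 = d - i$.

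The main place one has to be careful is the case analysis of where $P_i$ terminates; the cleanly resolving fact is the size identity $|M'_i| = n - d = |V_P|$, which forces $M'_i$ to saturate $V_P$ and thereby pins down the endpoints of $P_i$ to lie in $U \cup V_A$. Once this is in hand, the inductive step reduces to bookkeeping on which vertices of $P_i$ can be internal versus endpoints.
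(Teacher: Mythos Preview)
Your proposal is correct and follows essentially the same inductive approach as the paper: both argue that every $v\in V_P$ has degree two in $M'_i\cup M^*_{i-1}$ and hence cannot be an endpoint of $P_i$, so $V_P$ stays fully matched after the XOR. The only minor difference is in deriving the bound on $|M^*_i(V_A)|$: you track the local change (at most one $V_A$-vertex, namely the far endpoint of $P_i$, can become unmatched), whereas the paper uses the global count $|M^*_i(V_A)| = |M^*_i| - |M^*_i(V_P)| \ge (n-i)-(n-d)$; both arguments are straightforward and equivalent.
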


\begin{proof}
We prove the claim by induction. Since $M^*_0$ is a perfect matching, the base case is trivially true. By the induction hypothesis, we assume that $M^*_{i-1}$ matches all of $V_P$. 
Recall that $M'_i$ also matches all of $V_P$ and $M^*_i = M^*_{i-1} \oplus P_i$ where $P_i$ is a maximal path in $M'_i \cup M^*_{i-1}$. Since each node $v \in V_P$ has degree two in $M'_i \cup M^*_{i-1}$, $v$ cannot be an end point of $P_i$. Hence, all nodes $v \in V_P$ remain matched in $M^*_i$. Further, we have $|M^*_i(V_A)| = |M^*_i| - |M^*_i(V_P)| \geq (n-i) - (n-d) = d - i$ as desired.
\end{proof}

Equipped with the sequence of matchings $M^*_i$, we are now ready to prove that, in expectation, a large matching exists between the set $R$ of nodes left unmatched by the preprocessing phase and the set $V_A$ of adversarial nodes.

\begin{lemma}
\label{lem:large-reserved}
$\mathbb{E}[\nu(G[R\cup V_A])] \ge d^2/(2n)$ where $G[R\cup V_A]$ is the graph induced by the reserved vertices $R$ and the adversarial vertices $V_A$.
\end{lemma}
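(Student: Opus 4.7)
The plan is to lower-bound $\nu(G[R \cup V_A])$ by exhibiting, on the fly with the random draws $u_1, \dots, u_d$, an explicit random matching $T \subseteq G[R \cup V_A]$ with $\mathbb{E}[|T|] \ge d^2/(2n)$. For each $i \in \{1,\dots,d\}$ let $w_i := M^*_{i-1}(u_i)$; this is well-defined because Lemma~\ref{lem:matched-adversary}, together with the trivial upper bound $|M^*_{i-1}| \le n-i+1$ and the fact that $M^*_{i-1}$ avoids $u_1, \dots, u_{i-1}$, forces $M^*_{i-1}$ to match \emph{all} of $U_{i-1} \ni u_i$. I then set
$$T := \{(u_i, w_i) : 1 \le i \le d,\ w_i \in V_A\},$$
and reduce the lemma to showing (i) that $T$ is in fact a matching and (ii) that $\mathbb{E}[|T|] \ge d^2/(2n)$.

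For (i) the $u_i$'s are distinct by construction, so the task is to show that whenever $w_i, w_j \in V_A$ with $i < j$ one has $w_i \ne w_j$. The key structural observation is that $V_A$ is disjoint from every predicted graph $H_k$, hence no edge of any $M'_k$ is incident to a vertex of $V_A$. Therefore, when $w_i \in V_A$, the endpoint $w_i$ has degree exactly one in $M^*_{i-1} \cup M'_i$, the alternating path $P_i$ starting at $u_i$ is just the single edge $(u_i, w_i)$, and $w_i$ is left unmatched in $M^*_i$. Running the same degree argument for each subsequent $k > i$ (at each stage $w_i$ has degree zero in $M^*_{k-1} \cup M'_k$, so $P_k$ cannot touch it) shows inductively that $w_i$ remains unmatched in every $M^*_j$ with $j \ge i$. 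In particular, $w_j = M^*_{j-1}(u_j)$ is matched in $M^*_{j-1}$, so $w_j \ne w_i$.

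For (ii) I introduce $A_i := \{u \in U_{i-1} : M^*_{i-1}(u) \in V_A\}$, so that $u_i \in A_i$ iff $w_i \in V_A$. Combining Lemma~\ref{lem:matched-adversary} with $|M^*_{i-1}| = n-i+1$ and $|V_P| = n-d$ shows that $M^*_{i-1}$ matches exactly $d-i+1$ nodes of $U_{i-1}$ into $V_A$, i.e.\ $|A_i| = d-i+1$. A short observation gives $A_i \subseteq U_i$: for any $u \in A_i$ the restriction of $M^*_{i-1}$ to edges with a $V_P$-endpoint is a matching of size $n-d$ in $H_{i-1}$ that avoids $u$, witnessing $\nu(H_{i-1} \setminus \{u\}) = n-d$. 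Since $u_i$ is drawn uniformly from $U_i$ and $|U_i| \le |U_{i-1}| = n-i+1$,
$$\Pr[u_i \in A_i \mid u_1, \dots, u_{i-1}] \;=\; \frac{|A_i|}{|U_i|} \;\ge\; \frac{d-i+1}{n-i+1} \;\ge\; \frac{d-i+1}{n},$$
and summing over $i$ yields $\mathbb{E}[|T|] \ge \tfrac{1}{n}\sum_{i=1}^d (d-i+1) = d(d+1)/(2n) \ge d^2/(2n)$.

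I expect the main obstacle to be the distinctness argument in (i): it is the one place where one really has to use the structure of the updates $M^*_i = M^*_{i-1} \oplus P_i$ and the fact that once a $V_A$-vertex $w_i$ is ``released'' it can never be re-matched, because the predicted matchings $M'_j$ are entirely oblivious to $V_A$. The remaining pieces are a direct counting calculation and a standard conditional-expectation sum built on top of Lemma~\ref{lem:matched-adversary}.
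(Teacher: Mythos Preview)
Your proof is correct and follows essentially the same approach as the paper: your matching $T$ is precisely the paper's $N_d$, the distinctness argument (once a $V_A$-vertex is released by $M^*_i$ it stays unmatched in all later $M^*_j$ because the $M'_k$ never touch $V_A$) is the same, and the expectation calculation via $M^*_{i-1}(V_A)\subseteq U_i$ matches the paper's. One minor slip: $|U_{i-1}|$ need not equal $n-i+1$, but since you only use $|U_i|\le n$ this does not affect the argument.
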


\begin{proof}
We construct a sequence of sets of edges $\{N_i\}_{i=0}^d$ as follows. Let $N_0 = \emptyset$. If $M_{i-1}^*(u_i)\in V_A$, let $e_i = \{u_i, M_{i-1}^*(u_i)\}$ be the edge of $M_{i-1}^*$ incident with $u_i$ and let $N_i = N_{i-1}\cup \{e_i\}$. Otherwise, let $N_i = N_{i-1}$. In other words, if the node $u_i$ chosen during the $i$th step is matched to an adversarial node by the matching $M^*_{i-1}$, add the matched edge to set $N_i$.

We show by induction that $N_i$ is a matching for all $i\ge 0$. $N_0$ is clearly a matching. When $i > 0$, either $N_i = N_{i-1}$ (in which case we are done by the inductive hypothesis), or $N_i = N_{i-1}\cup \{e_i\}$. Let $e_i = (u_i, v_i)$ and consider any other edge $e_j = (u_j, v_j) \in N_{i-1}$. Since $u_j \notin H_{i-1}$, we have $u_j \neq u_i$. By definition, node $v_i$ is matched in $M^*_{i-1}$. By construction, this implies that $v_i$ must be matched in all previous matchings in this sequence, in particular, $v_i$ must be matched in $M^*_j$ (since a node $v \in V_A$ that is unmatched in $M^*_{k-1}$ can never be matched by $M^*_k$). However, since $v_j = M^*_{j-1}(u_j)$, the matching $M^*_j = M^*_{j-1} \setminus \{e_j\}$ and hence $v_j$ is not matched in $M^*_j$. Hence $v_i \neq v_j$. Thus we have shown that $e_i$ does not share an endpoint with any $e_j \in N_{i-1}$ and hence $N_i$ is a matching.

By linearity of expectation we have the following.
\begin{align*}
    \mathbb{E}[|N_i|] \enspace &= \enspace \mathbb{E}[|N_{i-1}|] + \Pr_{u_i}[ M^*_{i-1}(u_i) \in V_A].
\intertext{However, by Lemma \ref{lem:matched-adversary}, since $M^*_{i-1}$ matches all of $V_P$, we must have $M^*_{i-1}(V_A) \subseteq U_i$. Hence,}
    \mathbb{E}[|N_i|] \enspace &\geq \enspace \mathbb{E}[|N_{i-1}|] + \frac{|M^*_{i-1}(V_A)|}{|U_i|} 
    \enspace \geq \enspace \mathbb{E}[|N_{i-1}|] + \frac{d-(i-1)}{n}.
    \intertext{Solving the recurrence with $|N_0| = 0$ gives}
    \mathbb{E}[|N_d|] \enspace &\geq \enspace \sum_{i=1}^{d} \frac{i}{n} \enspace \geq \enspace \frac{d(d+1)}{2n}.
\end{align*}
The lemma follows since $N_d$ is a matching in $G[R \cup V_A]$.
\end{proof}

\begin{theorem}\label{thm:algo-non-agnostic-weaker}
There is a randomized algorithm for the semi-online bipartite matching problem with a competitive ratio of at least $(1 - \delta + (\delta^2/2)(1-1/e))$ in expectation when the input graph $G$ has a perfect matching.
\end{theorem}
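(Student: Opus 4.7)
The plan is to decompose the matching produced by the algorithm into two parts: the part matching predicted nodes to the chosen maximum matching $M$, and the part RANKING produces between the reserved offline nodes $R$ and the adversarial nodes $V_A$. First I would observe that, by construction, the preprocessing phase returns a matching $M$ of size exactly $n-d$ in $H$, and the online phase matches every $v \in V_P$ to $M(v)$. Since the model lets the algorithm correctly recognize predicted versus adversarial arrivals (each predicted node is identified by its neighborhood signature in $H$), all $n-d$ nodes in $V_P$ are matched deterministically. Thus the total matching size produced by the algorithm is $(n-d) + X$, where $X$ is the size of the matching produced by RANKING in the subgraph $G[R \cup V_A]$.

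Next I would bound $\mathbb{E}[X]$. Conditioned on the set $R$, the nodes of $V_A$ arrive in an arbitrary (adversarial) order, and by the definition of Algorithm~\ref{alg:online}, RANKING is executed on $G[R \cup V_A]$ with $R$ in a uniformly random permutation (this is exactly the distribution used in the preprocessing phase). Applying the classical guarantee of \cite{karp1990optimal},
\[
\mathbb{E}[X \mid R] \;\ge\; (1-1/e)\,\nu(G[R \cup V_A]).
\]
Taking expectation over $R$ and invoking Lemma~\ref{lem:large-reserved}, I get
\[
\mathbb{E}[X] \;\ge\; (1-1/e)\,\mathbb{E}[\nu(G[R \cup V_A])] \;\ge\; (1-1/e)\cdot\frac{d^2}{2n}.
\]

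Combining the two contributions and using that $\mathrm{OPT} = n$ (since $G$ has a perfect matching) and $\delta = d/n$ yields
\[
\frac{\mathbb{E}[(n-d)+X]}{n} \;\ge\; \frac{n-d}{n} + \frac{(1-1/e)}{n}\cdot\frac{d^2}{2n} \;=\; 1 - \delta + \frac{\delta^2}{2}(1-1/e),
\]
which is the claimed competitive ratio.

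The only genuinely nontrivial obstacle has already been handled by Lemma~\ref{lem:large-reserved}, whose proof sets up the sequence of matchings $M_i^*$ carefully so that $|M_{i-1}^*(V_A)| \ge d-(i-1)$ and $M_{i-1}^*(V_A) \subseteq U_i$, giving the lower bound on the probability that the sampled $u_i$ is matched to an adversarial node. The remaining step here is essentially bookkeeping: verifying that the decomposition into the deterministic $(n-d)$ part and the RANKING part is valid (in particular, that predicted nodes are correctly identified online so they never consume reserved nodes) and then invoking RANKING's bound together with Lemma~\ref{lem:large-reserved} via the tower rule.
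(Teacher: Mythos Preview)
Your proposal is correct and follows essentially the same approach as the paper: split the algorithm's output into the deterministic $(n-d)$ matching on $V_P$ plus the RANKING matching on $G[R\cup V_A]$, apply the $(1-1/e)$ guarantee of RANKING conditionally on $R$, and then invoke Lemma~\ref{lem:large-reserved} to lower-bound $\mathbb{E}[\nu(G[R\cup V_A])]$ by $d^2/(2n)$. Your write-up is in fact slightly more explicit than the paper's in spelling out the conditioning on $R$ and the use of the tower rule, but the argument is identical in substance.
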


\begin{proof}
Algorithm \ref{alg:preprocessing} guarantees that the matching $M$ found in the preprocessing phase matches all predicted nodes and has size $n-d = n(1-\delta)$. Further, in the online phase, we use the RANKING~\cite{karp1990optimal} algorithm on the graph $G[R \cup V_A]$. Since RANKING is $(1-1/e)$-competitive, the expected number of adversarial nodes matched is at least $(1 - 1/e) \nu(G[R \cup V_A])$. By Lemma \ref{lem:large-reserved}, this is at least $(1-1/e)(\frac{d^2}{2n}) = (\delta^2 n / 2)(1-1/e)$.

Therefore, the total matching has expected size $n(1 - \delta + (\delta^2 / 2)(1-1/e))$ as desired.
\end{proof}

Using a more sophisticated analysis, we can show that the iterative sampling algorithm yields a tighter bound of $(1 - \delta + \delta^2/2 - \delta^3/2)$. However we omit the proof because the next section presents an algorithm with an even better guarantee.

\section{Structured Sampling}
\label{sec:structured-sampling}

In this section we give a polynomial time algorithm for the semi-online bipartite matching that yields an improved competitive ratio of $(1 - \delta + \delta^2(1 - 1/e))$. We first discuss the main ideas in Section \ref{sec:intuition} and then describe the algorithm and its analysis in Section \ref{sec:sampling-proof}.

\begin{theorem}
\label{thm:main-alg}
There is a randomized algorithm for the semi-online bipartite matching problem with a competitive ratio of at least $(1-\delta+\delta^2(1-1/e))$ in expectation when the input graph $G$ has a perfect matching.
\end{theorem}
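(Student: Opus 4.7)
The plan is to upgrade the preprocessing of Section 3: instead of iteratively carving out a reserved set $R$ by removing random redundant offline nodes, I would directly sample a maximum matching $M$ of the predicted graph $H$ from a carefully chosen distribution and then set $R = U \setminus M(V_P)$. The online phase is unchanged from Algorithm \ref{alg:online}: predicted arrivals are matched via $M$, and RANKING handles adversarial arrivals on $G[R \cup V_A]$. The improvement is obtained by strengthening Lemma \ref{lem:large-reserved} to
\[
\mathbb{E}\bigl[\nu(G[R\cup V_A])\bigr] \;\geq\; \frac{d^2}{n},
\]
so that RANKING's $(1-1/e)$ guarantee produces a total expected matching size $n(1-\delta) + (1-1/e)\cdot d^2/n = n(1 - \delta + \delta^2(1-1/e))$, which is exactly the target bound.

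The distribution on maximum matchings of $H$ I would use comes from the \emph{matching skeleton} decomposition of Goel-Kapralov~\cite{goel2012communication} combined with the dependent rounding scheme of Gandhi-Khuller-Parthasarathy-Srinivasan~\cite{gandhi2006dependent}. The skeleton peels $H$ into a laminar family of densest-subgraph layers, each of which carries a fractional perfect matching whose offline endpoints are saturated to a common degree $\alpha_i \le 1$; dependent rounding inside each layer converts this fractional matching into a random integral matching whose per-edge marginals match the fractional values and whose edge-incidence variables at each offline node are negatively correlated. Concatenating across layers yields a random maximum matching $M$ of $H$ in which each $u \in U$ belongs to $R$ with probability exactly $1 - \alpha_i$, and the total expected reserved mass is $\sum_i |U_i|(1-\alpha_i) = d$.

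The heart of the analysis is then the overlap bound $\mathbb{E}[|R \cap M^*(V_A)|] \geq d^2/n$, where $M^*$ is any fixed perfect matching of $G$. Since $R$ restricted to a single skeleton layer is sampled essentially uniformly among subsets of the prescribed size (by the structure of dependent rounding), each \emph{individual} target vertex is reserved with probability $d/n$; the issue is that the $d$ vertices of $M^*(V_A)$ can be split across layers in a way chosen by the adversary. This is exactly where the abstract ``large overlap'' result promised in Section~\ref{sec:sets-alg} is invoked: it says that for any set system whose marginals aggregate to $d$, one may realize them by a distribution such that $\mathbb{E}[|R \cap T|] \geq |T|\cdot d/n$ simultaneously for every target $T \subseteq U$ with $|T|\le d$. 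Applied with $T = M^*(V_A)$, this delivers the desired bound, and combining with $\nu(G[R\cup V_A]) \geq |R \cap M^*(V_A)|$ and RANKING finishes the proof.

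The main obstacle is establishing this abstract overlap guarantee. I would formulate it as a two-player LP: the primal chooses a distribution over feasible reserved sets $R$ to maximize the minimum expected overlap over target sets $T$ of size $d$, subject to the prescribed marginals from the skeleton layers; dualizing yields a hitting-set-style certificate that one must exhibit via a suitable fractional hitting set built from the layer marginals. Once the LP-duality argument is in place, everything else follows immediately from the preceding steps and the classical RANKING analysis.
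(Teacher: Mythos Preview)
Your overall architecture matches the paper exactly: skeleton decomposition plus dependent rounding to sample a maximum matching of $H$, set $R = U \setminus M(V_P)$, then run RANKING on $G[R\cup V_A]$; and you correctly isolate the key inequality $\mathbb{E}[|R\cap M^*(V_A)|]\ge d^2/n$. The gap is in how you plan to prove that inequality.

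First, a factual slip: a vertex $u\in T_i$ is reserved with probability $d_i/n_i$, not $d/n$; these ratios genuinely differ across layers, which is the whole difficulty. Second, and more seriously, your proposed LP is vacuous as stated. The expected overlap $\mathbb{E}[|R\cap T|]=\sum_{u\in T}\Pr[u\in R]$ depends \emph{only} on the marginals, so once you ``prescribe the marginals from the skeleton layers'' there is nothing left to optimize---every feasible distribution gives the same overlap with every $T$. The LP-duality argument of Section~\ref{sec:sets-alg} does something different: it optimizes over \emph{all} distributions on the family of feasible reserved sets (no marginal constraints) and proves existence of \emph{some} good distribution; but that distribution need not be the skeleton one, and, as the paper notes, sampling from it is not polynomial time. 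So you cannot simultaneously commit to the skeleton marginals and appeal to Section~\ref{sec:sets-alg}.

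What actually closes the argument for the skeleton distribution is a structural constraint on the adversary that you have not identified. The neighborhood property $\Gamma(\bigcup_{i<j}S_i)=\bigcup_{i<j}T_i$ forces every perfect matching to send $\bigcup_{i\le t}S_i$ into $\bigcup_{i\le t}T_i$, which caps the number $\ell_i$ of marked nodes in $T_i$ by $\sum_{i\le t}\ell_i\le\sum_{i\le t}d_i$ for all $t$ (Lemma~\ref{lem:adv_matches}). With the $d_i/n_i$ increasing in $i$, a rearrangement argument (Lemma~\ref{lem:sequence}) shows the adversary minimizes $\sum_i (d_i/n_i)\ell_i$ by taking $\ell_i=d_i$, and then Cauchy--Schwarz gives $\sum_i d_i^2/n_i\ge d^2/n$. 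That combinatorial chain is the missing piece; no LP duality is needed or helpful here.
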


\subsection{Main Ideas and Intuition}
\label{sec:intuition}

As with the iterative sampling algorithm, we randomly choose a matching of size $n-d$ (according to some distribution), and define the \emph{reserved} set $R$ to be the set of offline nodes that are not matched. As online nodes arrive, we follow the matching for the predicted nodes; for adversarial nodes, we run the RANKING algorithm on the reserved set $R$. 

Let $M^*$ be a perfect matching in $G$. For a set  of nodes $S$, let $M^*(S)$ denote the set of nodes matched to them by $M^*$.
Call a node $u \in U$ \emph{marked} if it is in $M^*(V_A)$, i.e., it is matched to an adversarial node by the optimal solution.
We argue that the number of marked nodes in the set $R$ chosen by our algorithm is at least $d^2/n$ in expectation. Since RANKING finds a matching of at least a factor $(1-1/e)$ of optimum in expectation, this means that we find a matching of size at least $d^2 / n \cdot (1-1/e)$ on the reserved nodes in expectation. Combining this with the matching of size $n-d$ on the predicted nodes, this gives a total  of  $n-d + d^2 / n \cdot (1-1/e) = n (1 - \delta + \delta^2(1-1/e))$.

The crux of the proof lies in showing that $R$ contains many marked nodes. 
Ideally, we would like to choose a random matching of size $n-d$ in such a way that  each node of $U$ has probability $d/n$ of being in $R$. Since there are $d$ marked nodes total, $R$ would contain $d^2/n$ of them in expectation. However, such a distribution over matchings does not always exist.

Instead, we use a graph decomposition to guide the sampling process. The marginal probabilities for nodes of $U$ to be in $R$ may differ, but nevertheless $R$ gets the correct total number of marked nodes in expectation.  
$H$ is decomposed into bipartite pairs $(S_i, T_i)$, with $|S_i| \leq |T_i|$, so that the sets $S_i$ partition $V_P$ and the sets $T_i$ partition $U$. This decomposition allows one to choose a random matching between $S_i$ and $T_i$ of size $|S_i|$ so that each node in $T_i$ is reserved with the same probability. Letting $n_i = |T_i|$ and $d_i = |T_i| - |S_i|$, this probability is precisely $d_i/n_i$.
Finally, we argue that the adversary can do no better than to mark $d_i$ nodes in $T_i$, for each $i$. Hence, the expected number of nodes in $R$ that are marked is at least $\sum_i (d_i^2/n_i)$, which we  lower bound by $d^2/n$.

\subsection{Proof of Theorem \ref{thm:main-alg}}
\label{sec:sampling-proof}
We decompose the graph $H$ into more structured pieces using a construction from~\cite{goel2012communication} and utilize the key observation that the decomposition implies a \emph{fractional matching}. Recall that a {fractional matching} is a function $f$ that assigns a value in $[0,1]$ to each edge in a graph, with the property that $\sum_{e \ni v} f(e) \leq 1$ for all nodes $v$. The quantity $\sum_{e \ni v} f(e)$ is referred to as the \emph{fractional degree} of $v$. We use $\Gamma(S)$ to denote the set of neighbors of nodes in $S$.

\begin{lemma}\label{lem:skeleton} [Restatement of Lemma 2 from \cite{LeeSingla17}]
Given a bipartite graph $H = (\setpair{U, V_P}, E_H)$ with $|U| \geq |V_P|$ and a maximum matching of size $|V_P|$, there exists a partition of $V_P$ into sets $S_0$, \ldots, $S_m$ and a partition of $U$ into sets $T_0$, \ldots, $T_m$ for some $m$
such that the following holds:
\begin{itemize}
    \item $\Gamma(\bigcup_{i < j} S_i) = \bigcup_{i<j} T_i$ for all $j$.
    \item For all $i < j$, $\frac{|S_i|}{|T_i|} > \frac{|S_j|}{|T_j|}$.
    \item There is a fractional matching in $H$ of size $|V_P|$, where for all $i$, the fractional degree of each node in $S_i$ is 1 and the fractional degree of each node in $T_i$ is $|S_i| / |T_i|$. In this matching, nodes in $S_i$ are only matched with nodes in $T_i$ and vice versa.
\end{itemize}
Further, the $(S_i, T_i)$ pairs can be found in polynomial time.
\end{lemma}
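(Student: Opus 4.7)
The plan is to build the partition greedily from coarsest to finest. Starting with $H^{(0)} := H$, at stage $i$ I define $S_i$ to be the \emph{largest} subset of the remaining $V_P$-side that maximizes the density $\rho(S) := |S|/|\Gamma_{H^{(i)}}(S)|$, set $T_i := \Gamma_{H^{(i)}}(S_i)$, and let $H^{(i+1)}$ be the bipartite subgraph obtained by removing $S_i \cup T_i$ together with their incident edges. The greedy choice is well-defined because $|\Gamma(\cdot)|$ is submodular, so the collection of $\rho$-maximizers at each stage is closed under union. Since $H$ has a matching saturating $V_P$, Hall's theorem gives $\rho(S) \le 1$ for every $S$, and hence $|S_i| \le |T_i|$ at every stage.

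I then verify the three conclusions. The recursion stays legal (Hall's condition on the residual $V_P$-side still holds) by contradiction: any Hall-violating $S$ in the residual would make $S_i \cup S$ strictly denser than $\rho(S_i)$ in $H^{(i)}$, contradicting maximality. Property~1 follows by induction on $j$: by construction $S_i$'s neighbors in $H^{(i)}$ all lie in $T_i$, so collecting neighborhoods across stages gives $\Gamma_H(\bigcup_{i<j} S_i) = \bigcup_{i<j} T_i$. Property~2 is enforced by the choice of $S_i$ as the \emph{largest} maximizer: the equality $\rho(S_i) = \rho(S_{i+1})$ would force (by the mediant identity applied to the disjoint pair $(T_i, T_{i+1})$) the set $S_i \cup S_{i+1}$ to achieve density $\rho(S_i)$ in $H^{(i)}$, contradicting maximality.

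The main technical content is property~3, the fractional matching. I would construct it block by block, using only edges of $H^{(i)}$ restricted to $(S_i, T_i)$. A fractional matching in which every $S_i$-node has degree $1$ and every $T_i$-node has degree $|S_i|/|T_i|$ exists, by LP duality (a weighted version of Hall's theorem), if and only if every $S \subseteq S_i$ satisfies $|\Gamma_{H^{(i)}}(S)| \ge |S| \cdot |T_i|/|S_i|$, which is exactly the statement that no subset of $S_i$ beats the density $|S_i|/|T_i|$ in $H^{(i)}$; this holds by the very definition of $S_i$. The main obstacle is precisely this compatibility---the same maximality that defines $S_i$ must simultaneously underwrite the Hall-type condition for the fractional matching within the block. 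For the polynomial-time claim I would invoke the standard reduction of densest-subset problems on bipartite graphs to parametric max-flow / min-cut, which extracts each $(S_i, T_i)$ in polynomial time; the recursion terminates in at most $|V_P|$ stages.
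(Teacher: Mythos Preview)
The paper does not prove this lemma; it is imported verbatim from \cite{LeeSingla17} (and ultimately \cite{goel2012communication}) and used as a black box, with only a brief remark explaining why no negative-index blocks appear when $H$ admits a $V_P$-saturating matching. So there is nothing in this paper to compare your argument against.

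That said, your construction---iteratively extracting the inclusion-maximal densest set $S_i$, setting $T_i = \Gamma_{H^{(i)}}(S_i)$, and recursing on the residual graph---is exactly the standard way this skeleton is built, and your verification of the three properties is sound. In particular, the step you flag as the ``main obstacle'' is indeed the crux: the maximality of $S_i$ inside $H^{(i)}$ is precisely the scaled Hall condition $|\Gamma_{H^{(i)}}(S)| \ge |S|\cdot |T_i|/|S_i|$ for every $S\subseteq S_i$, and via max-flow/min-cut this is equivalent to the existence of the block-wise fractional matching with uniform degree $|S_i|/|T_i|$ on $T_i$. Your arguments for Hall's condition persisting in the residual graph and for the strict density decrease (using the mediant together with the ``largest maximizer'' tiebreak) are correct as sketched.

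One loose end: your recursion halts when the $V_P$-side is exhausted, at which point the $T_i$'s cover exactly $\Gamma_H(V_P)$, not necessarily all of $U$ (an offline node whose only $G$-neighbors lie in $V_A$ is isolated in $H$). To get an honest partition of $U$ as the lemma asserts, append a final pair $(\emptyset,\, U\setminus \bigcup_i T_i)$ when that set is nonempty; the ratio $0$ is strictly smallest, property~1 still holds because $\bigcup_{i\le m} T_i = \Gamma_H(V_P)$, and the fractional matching on this block is empty. This is exactly the role played by $T_\infty$ in the paper's more general Lemma~\ref{lem:app_skeleton}.
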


In \cite{goel2012communication} and \cite{LeeSingla17}, the sets in the decomposition with $|S_i| < |T_i|$ are indexed with positive integers $i>0$, the sets with $|S_0| = |T_0|$ get an index of $0$, and the ones with $|S_i| > |T_i|$ get  negative indices $i<0$.
Under our assumption that $H$ supports a matching that matches all nodes of $V_P$, the decomposition does not contain sets with $|S_i| > |T_i|$, as the first such set would have $|S_i|>|\Gamma(S_i)|$, violating Hall's theorem. So we start the indices from $0$.

Equipped with this decomposition, we choose a random matching between $S_i$ and $T_i$ such that each node in $T_i$ is reserved\footnote{Recall that we say a node $u$ is reserved by an algorithm if $u$ is \emph{not} matched in the predicted graph $H$.} with the same probability. Since each $(S_i, T_i)$ pair has a fractional matching, the dependent randomized rounding scheme of~\cite{gandhi2006dependent} allows us to do exactly that.

\begin{lemma}\label{lem:rounding}
Fix an index $i$ and let $S_i, T_i$ be defined as in Lemma~\ref{lem:skeleton}. Then there is a distribution over matchings with size $|S_i|$ between $S_i$ and $T_i$ such that for all $u \in T_i$, the probability that the matching contains $u$ is $|S_i| / |T_i|$.
\end{lemma}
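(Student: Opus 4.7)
The plan is to invoke the dependent rounding scheme of Gandhi, Khuller, Parthasarathy, and Srinivas~\cite{gandhi2006dependent} directly on the fractional matching supplied by Lemma~\ref{lem:skeleton}. Recall that their scheme takes as input a fractional assignment $f: E \to [0,1]$ on a bipartite graph and produces a random integral assignment $F: E \to \{0,1\}$ satisfying two key properties: (i) \emph{marginal preservation}, $\Pr[F(e) = 1] = f(e)$ for every edge $e$, and (ii) \emph{degree preservation}, for every vertex $v$ the integral degree $\sum_{e \ni v} F(e)$ lies in $\{\lfloor \sum_{e \ni v} f(e) \rfloor, \lceil \sum_{e \ni v} f(e) \rceil\}$ with probability $1$.

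First I would restrict attention to the subgraph of $H$ induced on $S_i \cup T_i$, together with the fractional matching $f$ guaranteed by the third bullet of Lemma~\ref{lem:skeleton}: the fractional degree of each $s \in S_i$ is exactly $1$, and the fractional degree of each $u \in T_i$ is exactly $|S_i|/|T_i| \le 1$. Applying dependent rounding to $f$ yields a random subset of edges $F$. By the degree-preservation property, each $s \in S_i$ has integral degree in $\{\lfloor 1 \rfloor,\lceil 1 \rceil\} = \{1\}$, so $s$ is matched exactly once; each $u \in T_i$ has integral degree in $\{0,1\}$ (using $|S_i|/|T_i|\le 1$, where equality forces degree $1$). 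Thus $F$ is a matching, and since every $s \in S_i$ is matched, $|F| = |S_i|$.

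Finally, for the marginal statement, fix $u \in T_i$ and compute
\[
\Pr[u \text{ is matched by } F] \;=\; \sum_{e \ni u} \Pr[F(e)=1] \;=\; \sum_{e \ni u} f(e) \;=\; \frac{|S_i|}{|T_i|},
\]
where the first equality uses that $u$ is incident to at most one edge of $F$ (so the events are disjoint), the second uses marginal preservation, and the third is the fractional degree of $u$. This is precisely the claim.

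The only subtlety is ensuring that the hypotheses of the dependent rounding theorem are satisfied by the fractional matching from Lemma~\ref{lem:skeleton}; this amounts to verifying that all fractional degrees are in $[0,1]$, which is immediate since $|S_i|\le|T_i|$ and each $s\in S_i$ has degree exactly $1$. There is no real obstacle here; the work has already been done in producing the fractional matching, and the rounding scheme is an off-the-shelf tool.
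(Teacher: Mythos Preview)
Your proposal is correct and follows essentially the same approach as the paper: both invoke the dependent rounding scheme of~\cite{gandhi2006dependent} on the fractional matching from Lemma~\ref{lem:skeleton}, using that each $s\in S_i$ has fractional degree~$1$ and each $u\in T_i$ has fractional degree~$|S_i|/|T_i|$. Your write-up is in fact more detailed than the paper's, spelling out the degree-preservation and marginal-preservation properties explicitly where the paper simply states that dependent rounding matches each vertex with probability equal to its fractional degree.
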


\begin{proof}
Given any bipartite graph $G'$ and a fractional matching over $G'$, the dependent rounding scheme of~\cite{gandhi2006dependent} yields an integral matching such that the probability that any node $v \in G'$ is matched exactly equals its fractional degree. Since Lemma \ref{lem:skeleton} guarantees a fractional matching such that the fractional degree of each node in $S_i$ is 1 and the fractional degree of each node in $T_i$  is $|S_i|/|T_i|$, the proof follows.
\end{proof}

We are now ready to complete the description of our algorithm. Algorithm \ref{alg:structured_preprocessing} is the preprocessing phase, while the online phase remains the same as earlier (Algorithm \ref{alg:online}). In the preprocessing phase, we  find a decomposition of the predicted graph $H$, and  sample a matching using Lemma \ref{lem:rounding} for each component in the decomposition. In the online phase, we match all predicted online nodes using the sampled matching and use  RANKING  to match the adversarial online nodes.

\begin{algorithm}
\DontPrintSemicolon
\SetAlgoVlined
\newcommand{\forcond}{$i=0$ \KwTo $n$}
\SetKwFunction{FRecurs}{Preprocess}%
\SetKwProg{Fn}{Function:}{\string:}{}
\Fn(){\FRecurs{H}}{
\KwData{Predicted graph $H$}
\KwResult{Maximum matching in $H$, sequence of nodes from $U$}
\BlankLine
Decompose $H$ into $\{(S_i, T_i)\}_{i=0}^m$ pairs using Lemma~\ref{lem:skeleton}\;
$M_i \leftarrow $ Random matching between $S_i$ and $T_i$ using Lemma~\ref{lem:rounding}\;
$\displaystyle M \leftarrow \bigcup_i M_i$\;
Let $R_{\text{set}} \subseteq U$ be the set of nodes unmatched by $M$\;
$R \leftarrow$ Uniform random permutation of $R_{\text{set}}$\;
\KwRet{M, R}
}
\caption{Structured Sampling: Preprocessing Phase.}
\label{alg:structured_preprocessing}
\end{algorithm}

Let $n_i = |T_i|$, $d_i = |T_i| - |S_i|$, and let $R_i = R \cap T_i$ be the set of reserved nodes in $T_i$. Then Lemma~\ref{lem:rounding} says that each node in $T_i$ lands in $R_i$ with probability $d_i/n_i$ (although not independently). 
We now argue in Lemmas~\ref{lem:adv_matches} and~\ref{lem:sequence} that the adversary can do no better than to choose $d_i$ marked nodes in each $T_i$.

\begin{lemma}\label{lem:adv_matches}
Let $\ell_i = \size{M^*(V_A) \cap T_i}$, i.e., let $\ell_i$ be the number of marked nodes in $T_i$. 
Then for all $t \geq 0$,
$$
\sum_{0 \leq i\leq t} \ell_i \leq  \sum_{0 \leq i\leq t} d_i.
$$
\ignore{\sum_{i\leq t} \Big(\size{T_i}-\size{S_i}\Big) =}
\end{lemma}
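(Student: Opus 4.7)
The plan is to derive the inequality directly from the first structural bullet of Lemma~\ref{lem:skeleton}, which says that $\Gamma\!\left(\bigcup_{i\le t}S_i\right)=\bigcup_{i\le t}T_i$ for every $t\ge 0$. Write $V_{P,\le t}:=\bigcup_{i\le t}S_i$ and $U_{\le t}:=\bigcup_{i\le t}T_i$. Because every edge of $G$ incident to a predicted node already lies in $H$, this identity forces $M^*(V_{P,\le t})\subseteq U_{\le t}$. Since $M^*$ is a matching, $|M^*(V_{P,\le t})|=|V_{P,\le t}|=\sum_{i\le t}|S_i|$.

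Next I would split $U_{\le t}$ according to which side of the $V_P/V_A$ partition its nodes are matched into under $M^*$: let $\alpha_t=|M^*(V_P)\cap U_{\le t}|$ and $\beta_t=|M^*(V_A)\cap U_{\le t}|$. Because $M^*$ is a perfect matching of $G$, every node of $U_{\le t}$ is matched to either a predicted or an adversarial node, so $\alpha_t+\beta_t=|U_{\le t}|=\sum_{i\le t}|T_i|$. By the previous paragraph $M^*(V_P)\cap U_{\le t}$ already contains all of $M^*(V_{P,\le t})$, hence $\alpha_t\ge\sum_{i\le t}|S_i|$. Putting these together gives
\[
\sum_{i\le t}\ell_i \;=\; \beta_t \;=\; |U_{\le t}|-\alpha_t \;\le\; \sum_{i\le t}|T_i|-\sum_{i\le t}|S_i| \;=\; \sum_{i\le t}d_i,
\]
which is exactly the stated bound.

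The only delicate point is the inequality $\alpha_t\ge\sum_{i\le t}|S_i|$. The first bullet of Lemma~\ref{lem:skeleton} controls only edges leaving $\bigcup_{i\le t}S_i$; it is perfectly possible for some $u\in T_i$ with $i\le t$ to be matched by $M^*$ to a predicted node $v\in S_j$ with $j>t$. Such $u$ only inflate $\alpha_t$, so the inequality remains valid in the needed direction. Aside from keeping track of this asymmetry, the argument is a Hall-style counting that uses the perfectness of $M^*$ in one place and the skeleton's neighborhood property in the other, and I do not foresee a real obstacle.
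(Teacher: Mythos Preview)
Your argument is correct and is essentially the same as the paper's: both fix $t$, use the skeleton property $\Gamma\bigl(\bigcup_{i\le t}S_i\bigr)=\bigcup_{i\le t}T_i$ to conclude $M^*\bigl(\bigcup_{i\le t}S_i\bigr)\subseteq\bigcup_{i\le t}T_i$, and then use perfectness of $M^*$ to split $|U_{\le t}|$ into its $V_P$- and $V_A$-matched parts and bound the former below by $\sum_{i\le t}|S_i|$. Your final paragraph making explicit why $\alpha_t\ge\sum_{i\le t}|S_i|$ is only an inequality is a nice clarification, but the substance of the proof is identical.
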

\begin{proof}
Fix $t$. Note that since the $T_i$'s partition $U$, we have $\sum_{i\leq t} \ell_i = |M^*(V_A) \cap \bigcup_{i\leq t} T_i|$. Furthermore, since $M^*$ is a perfect matching, we have that 
$$\Big|M^*(V_A) \cap \bigcup_{i\leq t} T_i\Big| + \Big|M^*(V_P)\cap \bigcup_{i\leq t} T_i\Big| = \Big|\bigcup_{i\leq t} T_i\Big|.$$
So let us consider $M^*(V_P)\cap \bigcup_{i\leq t} T_i$. By Lemma~\ref{lem:skeleton}, $\Gamma(\bigcup_{i\leq t} S_i) = \bigcup_{i\leq t} T_i$. Since $M^*$ is a perfect matching, every node in $\bigcup_{i\leq t} S_i$ must be matched to a node in $\bigcup_{i\leq t} T_i$. There are $\sum_{i\leq t} (n_i - d_i)$ nodes in $\bigcup_{i\leq t} S_i \subseteq V_P$, hence
$|M^*(V_P)\cap \bigcup_{i\leq t} T_i| \geq \sum_{i\leq t} (n_i - d_i)$. Putting this together,
\begin{align*}
\sum_{i\leq t} \ell_i = 
\size{M^*(V_A) \cap T_i} &= \Big|\bigcup_{i\leq t} T_i\Big| - \Big|M^*(V_P)\cap \bigcup_{i\leq t} T_i\Big| \\
    &\leq \sum_{i\leq t} n_i - \sum_{i\leq t} (n_i - d_i)
    = \sum_{i \leq t} d_i,
\end{align*}
as we wanted.

\ignore{
Fix $t \leq m$, and let $U' = U - \bigcup_{i\leq t} T_i$. Since there is a perfect matching in the realized graph $G$, Hall's Theorem guarantees that there must be at least $|U'| - |\Gamma_H(U')|$ marked nodes in $U'$ where $\Gamma_H(U')$ denotes the set of neighbors of  $U'$ in the predicted graph $H$. That is,
\begin{align*}
\sum_{i > t} \ell_i &\geq |U'| - |\Gamma_H(U')|
\intertext{But Lemma~\ref{lem:skeleton} tells us that $\Gamma(\bigcup_{i\leq t} S_i) = \bigcup_{i\leq t} T_i$, hence there is no edge between $U'$ and $\bigcup_{i\leq t} S_i$. That is,
$\Gamma_H(U') \subseteq V_P - \bigcup_{i\leq t} S_i$. Hence,}
|\Gamma_H(U')| 
	&\leq |V_P| - \sum_{i\leq t} |S_i| 
    = n-d-\sum_{i\leq t} (n_i-d_i)
\intertext{Further, $|U'| = |U| - |\bigcup_{i\leq t} T_i| = n - \sum_{i\leq t} n_i$.
Putting this together,}
\sum_{i > t} \ell_i 
	&\geq |U'| - |\Gamma_H(U')|\\
    &\geq n - \sum_{i\leq t} n_i - \bigg(n-d-\sum_{i\leq t} (n_i-d_i)\bigg)\\
    &= d - \sum_{i\leq t} d_i
\end{align*}
Recalling that $\sum_{i\leq m} \ell_i = d$, we see that
$\sum_{i\leq t} \ell_i = d - \sum_{i > t} \ell_i \leq \sum_{i\leq t} d_i$, as desired.
}
\end{proof}

\begin{lemma}\label{lem:sequence}
Let $0 < a_0 \leq a_1 \leq \cdots \leq a_m$ be a non-decreasing sequence of positive numbers, and $\ell_0, \ldots, \ell_m$ and $k_0, \ldots, k_m$  be non-negative integers, such that
$\sum_{i=0}^m \ell_i = \sum_{i=0}^m k_i$ and for all $t \leq m$,
$\sum_{i\leq t} \ell_i \leq \sum_{i \leq t} k_i$.
Then 
$$
\sum_{i=0}^m \ell_i a_i \geq \sum_{i=0}^m k_i a_i.
$$
\end{lemma}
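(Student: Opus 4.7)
This is a standard majorization-style inequality and the natural tool is Abel summation (summation by parts). Let $L_t = \sum_{i\le t} \ell_i$ and $K_t = \sum_{i\le t} k_i$ be the partial sums, with the convention $L_{-1}=K_{-1}=0$. The hypotheses then say $L_t \le K_t$ for every $t\le m$ and $L_m = K_m$, while the weights $a_i$ are non-decreasing.

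The plan is to apply summation by parts to each side. Writing $\ell_i = L_i - L_{i-1}$ and re-indexing,
\begin{equation*}
\sum_{i=0}^m \ell_i a_i \;=\; \sum_{i=0}^m (L_i - L_{i-1})a_i \;=\; L_m a_m - \sum_{i=0}^{m-1} L_i (a_{i+1}-a_i),
\end{equation*}
and identically $\sum_{i=0}^m k_i a_i = K_m a_m - \sum_{i=0}^{m-1} K_i(a_{i+1}-a_i)$. Subtracting, and using $L_m = K_m$ to cancel the boundary term, gives
\begin{equation*}
\sum_{i=0}^m \ell_i a_i - \sum_{i=0}^m k_i a_i \;=\; \sum_{i=0}^{m-1} (K_i - L_i)(a_{i+1}-a_i).
\end{equation*}
Each factor on the right is non-negative: $K_i - L_i \ge 0$ by the prefix hypothesis, and $a_{i+1}-a_i \ge 0$ by monotonicity of $a$. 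The inequality follows.

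There is no real obstacle here; the only thing to be careful about is the bookkeeping at the endpoints of the Abel summation (the $L_{-1}=0$ convention and the cancellation at $i=m$ using $L_m=K_m$). Note that the hypothesis $a_0 > 0$ is not actually needed for the inequality, only the monotonicity of $(a_i)$; the positivity is presumably used elsewhere in the application. An alternative, essentially equivalent, write-up is to rearrange $\sum_{i=0}^m \ell_i a_i - \sum_{i=0}^m k_i a_i = \sum_{i=0}^m (\ell_i - k_i)a_i$ and telescope against the prefix-sum inequalities, but the Abel-summation form above is the cleanest.
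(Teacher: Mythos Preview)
Your proof is correct. The Abel summation identity
\[
\sum_{i=0}^m \ell_i a_i - \sum_{i=0}^m k_i a_i \;=\; \sum_{i=0}^{m-1} (K_i - L_i)(a_{i+1}-a_i)
\]
is verified exactly as you wrote it, and both factors on the right are non-negative by hypothesis, so the inequality follows immediately. Your remark that the positivity of $a_0$ is not needed is also correct.

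The paper takes a different route: an exchange argument. It fixes $\{k_i\}$ and argues that among all admissible $\{\ell_i\}$ the sum $\sum_i \ell_i a_i$ is minimized at $\ell_i = k_i$. Assuming otherwise, it picks the lexicographically largest minimizer $\{\ell_i\}\neq\{k_i\}$, locates indices $j<j'$ with $\ell_j<k_j$ and $\ell_{j'}>k_{j'}$, and moves one unit from position $j'$ to position $j$; this preserves all the prefix constraints, does not increase the objective (since $a_j\le a_{j'}$), and produces a lexicographically larger sequence, a contradiction. Your Abel-summation proof is shorter, avoids the auxiliary lexicographic ordering, and works verbatim for real (not just integer) $\ell_i,k_i$; the paper's swap argument, as written, uses integrality to move one unit at a time, though it could be adapted. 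Either approach is fine for the application.
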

\begin{proof}
We claim that for any fixed sequence $k_0, \ldots, k_m$, the minimum of the left-hand side ($\sum_i \ell_i a_i$) is attained when $\ell_i = k_i$ for all $i$. Suppose for contradiction that $\{\ell_i\}$ is the lexicographically-largest minimum-attaining assignment that is not equal to $\{k_i\}$ and let $j$ be the smallest index with $\ell_j \neq k_j$.
It must be that $\ell_j < k_j$ to satisfy $\sum_{i\leq j} \ell_i \leq \sum_{i \leq j} k_i$. Also, $\sum_{i=0}^m \ell_i = \sum_{i=0}^m k_i$ implies that $j < m$ and that there must be an index $j'>j$ such that $\ell_{j'} > k_{j'}$. Let $j'$ be the lowest such index.

Let $\ell'_i = \ell_i$ for all $i \notin \{j, j'\}$. Set $\ell'_j = \ell_j + 1$ and $\ell'_{j'} = \ell_{j'}-1$.
Notice that we still have $\sum_{i\leq t} \ell'_i \leq \sum_{i \leq t} k_i$ for all $t$ and $\sum_{i=0}^m \ell'_i = \sum_{i=0}^m k_i$, and $\{\ell'_i\}$ is lexicographically larger than $\{\ell_i\}$. In addition,
$$
\sum_i \ell'_i a_i 
	= \sum_i \ell_i a_i + a_j - a_{j'}
    \leq \sum_i \ell_i a_i,
$$
which is a contradiction.
\end{proof}

We need one last technical observation before the proof of the main result.
\begin{lemma}\label{lem:cauchy}
Let $d_i, n_i$ be positive numbers with $\sum_i d_i = d$ and $\sum_i n_i = n$. Then
$$
\sum_i \frac{d_i^2}{n_i} \geq \frac{d^2}{n}.
$$
\end{lemma}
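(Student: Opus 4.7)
The plan is to recognize this as a textbook application of the Cauchy--Schwarz inequality in its Engel (Titu) form, sometimes called Sedrakyan's inequality: for positive $b_i$, $\sum_i a_i^2/b_i \geq (\sum_i a_i)^2/\sum_i b_i$. Specializing with $a_i = d_i$ and $b_i = n_i$ gives exactly $\sum_i d_i^2/n_i \geq d^2/n$.

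Concretely, I would apply Cauchy--Schwarz to the pair of vectors $\left(d_i/\sqrt{n_i}\right)_i$ and $\left(\sqrt{n_i}\right)_i$. This yields
\[
\left(\sum_i d_i\right)^2 \;=\; \left(\sum_i \frac{d_i}{\sqrt{n_i}} \cdot \sqrt{n_i}\right)^2 \;\leq\; \left(\sum_i \frac{d_i^2}{n_i}\right)\left(\sum_i n_i\right).
\]
Substituting the hypotheses $\sum_i d_i = d$ and $\sum_i n_i = n$ and dividing by $n > 0$ yields the stated inequality.

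There is essentially no obstacle here: positivity of the $n_i$ is given, so the division is legal, and Cauchy--Schwarz is applied to real vectors so no sign issues arise. If one preferred a self-contained derivation, the same inequality follows from convexity of $x \mapsto x^2$ (Jensen applied with weights $n_i/n$ to the values $d_i/n_i$), which is an equivalent one-line argument.
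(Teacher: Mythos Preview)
Your proof is correct and essentially identical to the paper's own argument: the paper also invokes Cauchy--Schwarz with the vectors $u_i = d_i/\sqrt{n_i}$ and $v_i = \sqrt{n_i}$, obtaining $\|u\|^2 \geq |u\cdot v|^2 / \|v\|^2$ and concluding immediately.
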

\begin{proof}
We invoke Cauchy--Schwarz, with vectors $u$ and $v$ defined by $u_i = \frac{d_i}{\sqrt{n_i}}$ and $v_i = \sqrt{n_i}$. Since $||u||^2 \geq | u \cdot v | ^2 / ||v||^2$, 
the result follows.
\end{proof}
\begin{theorem}\label{thm:marked}
Choose reserved set $R$ according to  Algorithm \ref{alg:online}. Then the expected number of marked nodes in $R$ is at least $\delta^2 n$. That is, $|R \cap M^*(V_A)| \geq \delta^2 n$ in expectation.
\end{theorem}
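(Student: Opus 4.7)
My plan is to express $\mathbb{E}[|R \cap M^*(V_A)|]$ as a weighted sum over the skeleton pieces, and then apply Lemmas \ref{lem:adv_matches}, \ref{lem:sequence}, and \ref{lem:cauchy} in succession. By Lemma \ref{lem:rounding}, each node $u \in T_i$ is included in $R_i := R \cap T_i$ with marginal probability $d_i/n_i$. Since there are exactly $\ell_i = |M^*(V_A) \cap T_i|$ marked nodes in $T_i$, linearity of expectation gives
\[
\mathbb{E}\bigl[|R \cap M^*(V_A)|\bigr] \;=\; \sum_{i=0}^m \ell_i \cdot \frac{d_i}{n_i}.
\]

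The key observation is that the coefficients $a_i := d_i/n_i = 1 - |S_i|/|T_i|$ form a non-decreasing sequence, because Lemma \ref{lem:skeleton} guarantees $|S_i|/|T_i| > |S_j|/|T_j|$ for $i < j$. I would then take $k_i := d_i$ in Lemma \ref{lem:sequence}. The two hypotheses of Lemma \ref{lem:sequence} are in hand: the prefix inequality $\sum_{i \le t} \ell_i \le \sum_{i \le t} d_i$ is exactly Lemma \ref{lem:adv_matches}, and the matching totals come from $\sum_i \ell_i = |M^*(V_A)| = d$ (since $M^*$ is perfect) and $\sum_i d_i = \sum_i (|T_i| - |S_i|) = |U| - |V_P| = d$.

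Applying Lemma \ref{lem:sequence} then yields
\[
\sum_{i=0}^m \ell_i \cdot \frac{d_i}{n_i} \;\geq\; \sum_{i=0}^m d_i \cdot \frac{d_i}{n_i} \;=\; \sum_{i=0}^m \frac{d_i^2}{n_i},
\]
and invoking Lemma \ref{lem:cauchy} (with the totals $\sum_i d_i = d$ and $\sum_i n_i = n$) bounds this below by $d^2/n = \delta^2 n$, completing the proof.

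Since all the technical ingredients are already packaged in the preceding lemmas, there is no real obstacle: the proof is essentially a three-line chain. The only point requiring a moment of care is verifying the monotonicity of $a_i = d_i/n_i$ so that Lemma \ref{lem:sequence} applies with $k_i = d_i$ in the direction we need, and confirming both totals equal $d$ so that the sequence hypothesis is met.
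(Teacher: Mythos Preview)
Your proposal is correct and follows essentially the same approach as the paper's own proof: compute $\mathbb{E}[|R\cap M^*(V_A)|]=\sum_i \ell_i\,d_i/n_i$ by linearity, observe that the $d_i/n_i$ are non-decreasing, apply Lemma~\ref{lem:sequence} with $k_i=d_i$ (using Lemma~\ref{lem:adv_matches} for the prefix bounds and $\sum_i\ell_i=\sum_i d_i=d$), and finish with Lemma~\ref{lem:cauchy}. Your write-up is, if anything, a bit more explicit in verifying the hypotheses than the paper itself.
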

\begin{proof}
As in Lemma~\ref{lem:adv_matches}, let $\ell_i = \size{M^*(V_A) \cap T_i}$. Again, we have $\sum_{i\leq t} \ell_i \leq \sum_{i\leq t} d_i$ for all $t$ and $\sum_{i\leq m} \ell_i = d = \sum_{i\leq m} d_i$.
For each $i$, the node $u\in T_i$ is chosen to be in $R$ with probability $d_i/n_i$, with the $d_i/n_i$ forming an increasing sequence. So the expected size of $|R \cap M^*(V_A)|$ is given by
\begin{align*}
\sum_i \frac{d_i}{n_i}\ell_i
	&\geq \sum_i \frac{d_i}{n_i} d_i &\mbox{\ by Lemma~\ref{lem:sequence}}\\
    &\geq \frac{d^2}{n} &\mbox{\ by Lemma~\ref{lem:cauchy}.}
\end{align*}
Since $\delta = d/n$, the proof follows.
\ignore{\sum_i \frac{|S_i|-|T_i|}{|S_i|}\ell_i = }
\end{proof}

\begin{proof}[Proof of Theorem \ref{thm:main-alg}]
The size of the matching, restricted to non-adversarial nodes, is $\sum_i (n_i -d_i) = n-d = n-\delta n$. Further, by Theorem \ref{thm:marked}, we have reserved at least $\delta^2 n$ nodes that can be matched to the adversarial nodes. RANKING will match at least a $(1-1/e)$ fraction of these in expectation. So in expectation, the total matching has size at least
$
n-\delta n + \delta^2 n(1-1/e) = n(1-\delta +\delta^2 (1-1/e))
$ as desired.
\end{proof}

\section{Structured Sampling for Imperfect Matchings}
\label{sec:app_imperfect}

In this section, we show that an extension of the Structured Sampling algorithm from Section~\ref{sec:structured-sampling} yields the same competitive ratio for the general case when graph $G$ may not contain a perfect matching.

\begin{lemma}\label{lem:app_skeleton} [Restatement of Lemma 2 from \cite{LeeSingla17}]
Given a bipartite graph $H = (\setpair{U, V_P}, E_H)$, there exists a partition of $V_P$ into sets $S_{-\infty}, S_a$, \ldots, $S_b$ and a partition of $U$ into sets $T_a$, \ldots, $T_b, T_{\infty}$ for some integers $a \leq b$
such that the following hold:
\begin{itemize}
    \item All nodes in $S_{-\infty}$ and $T_\infty$ have degree zero.
    \item $\Gamma(\bigcup_{i < j} S_i) = \bigcup_{i<j} T_i$ for all $j$.
    \item $\frac{|S_{a}|}{|T_{a}|} > \cdots > \frac{|S_{-1}|}{|T_{-1}|} > \frac{|S_0|}{|T_0|} = 1 > \frac{|S_1|}{|T_1|} > \cdots > \frac{|S_b|}{|T_b|}$. \footnote{For reasons of exposition, we allow $S_0 = T_0 = \emptyset$, defining the ratio of their sizes to be 1.}
    \item There is a fractional matching in $H$ such that 
    \begin{itemize}
        \item For all $i$ with $0 \leq i \leq b$, the fractional degree of each node in $S_i$ is 1 and the fractional degree of each node in $T_i$ is $|S_i| / |T_i|$
        \item For all $i$ with $a \leq i < 0$, the fractional degree of each node in $T_i$ is 1 and the fractional degree of each node in $S_i$ is $|T_i| / |S_i|$
        \item For all $i$ with $a \leq i \leq b$, nodes in $S_i$ are only matched with nodes in $T_i$ and vice versa.
    \end{itemize}
\end{itemize}
Further, the $(S_i, T_i)$ pairs can be found in polynomial time.
\end{lemma}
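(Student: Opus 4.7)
The plan is to extend the matching-skeleton peeling construction of Goel and Kapralov used in Lemma~\ref{lem:skeleton} to arbitrary bipartite graphs. Compared to the perfect-matching case, three new features must be handled: a bucket $S_{-\infty}$ for $V_P$-nodes that are isolated in $H$, a bucket $T_\infty$ for $U$-nodes with no neighbor in $V_P \setminus S_{-\infty}$ after all peeling, and negative-indexed components that correspond to subsets of $V_P$ whose neighborhoods are strictly smaller than themselves (so Hall's condition fails on that piece).

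First I would place the isolated vertices of $V_P$ into $S_{-\infty}$, which trivially satisfies their part of the first bullet. Then, starting from the remaining graph $H^{\text{rem}}$, I would iteratively pick $S^\ast \subseteq V_P^{\text{rem}}$ that maximizes the density $|S|/|\Gamma_{H^{\text{rem}}}(S)|$, set $T^\ast = \Gamma_{H^{\text{rem}}}(S^\ast)$, and delete $S^\ast \cup T^\ast$. Taking the unique largest maximizer at each step (the union of all maximizers is still a maximizer by the standard submodularity argument on $|\Gamma(\cdot)|$) gives well-defined pairs. I would order them by decreasing density, assigning negative indices to pairs with density $>1$, index $0$ to a (possibly empty) pair with density exactly $1$, and positive indices to pairs with density $<1$. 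Any $U$-node left over after peeling has no neighbor in $V_P \setminus S_{-\infty}$ and is therefore isolated in $H$; collect all such nodes into $T_\infty$, yielding the remaining part of the first bullet.

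Two of the promised properties are then immediate. Property 2, $\Gamma(\bigcup_{i<j} S_i) = \bigcup_{i<j} T_i$, holds because each $T_i$ is, by definition, the neighborhood of $S_i$ in the graph that remains after peeling everything of index less than $i$, so no node of $\bigcup_{i<j} S_i$ has a neighbor outside $\bigcup_{i<j} T_i$. Property 3, strict monotonicity of the ratios, holds because if two successive peels produced equal densities then the union of their $S$-parts at the earlier step would still have been a maximizer, contradicting the choice of the \emph{largest} maximizer at that step.

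The main obstacle is property 4, the existence of the prescribed fractional matching. For a non-negative index $i \geq 0$, the maximality of $(S_i, T_i)$ in its peel means that every $S' \subseteq S_i$ satisfies $|\Gamma(S') \cap T_i|/|S'| \geq |T_i|/|S_i|$; applying Hall's theorem to the bipartite $b$-matching instance in which each vertex of $S_i$ is replicated $|T_i|$ times and each vertex of $T_i$ is replicated $|S_i|$ times produces an integral matching that, after rescaling, gives the desired fractional matching with $S_i$-degrees equal to $1$ and $T_i$-degrees equal to $|S_i|/|T_i|$. For a negative index $i<0$ one needs the symmetric statement: every $T' \subseteq T_i$ satisfies $|\Gamma(T') \cap S_i|/|T'| \geq |S_i|/|T_i|$. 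This is the most delicate step, and I would prove it by showing that if some $T' \subseteq T_i$ violated this bound then $S_i \setminus \Gamma(T')$ together with $T_i \setminus T'$ would form a strictly denser subset than $(S_i, T_i)$ in the peel that produced it, contradicting maximality. With this density inequality in hand, the same $b$-matching argument yields the desired fractional matching saturating $T_i$. Concatenating the per-component matchings (and assigning weight $0$ to all other edges, including any incident to $S_{-\infty}$ or $T_\infty$, of which there are none) produces the global fractional matching required by property 4.
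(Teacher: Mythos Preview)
The paper does not actually prove this lemma: it is explicitly labeled as a restatement of Lemma~2 from \cite{LeeSingla17} (building on \cite{goel2012communication}), and the only argument the paper supplies is the one-sentence remark that isolated vertices, which are excluded in those references, can be handled by adding the buckets $S_{-\infty}$ and $T_\infty$. Everything else is deferred to the cited works.

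Your proposal therefore goes well beyond what the paper does: you reconstruct the densest-subset peeling argument that underlies the cited skeleton decomposition. The sketch is essentially correct and matches the standard construction. The handling of isolated vertices, the verification of $\Gamma(\bigcup_{i<j}S_i)=\bigcup_{i<j}T_i$, the strict monotonicity via ``largest maximizer,'' and the $b$-matching/Hall argument for the fractional matching on non-negative indices are all sound. The delicate step you flag for negative indices---that a violating $T'\subseteq T_i$ would make $S_i\setminus\Gamma(T')$ strictly denser than $S_i$ in its peel---is also correct, though in a full write-up you would want to note the boundary cases ($T'=\emptyset$, $T'=T_i$) and observe that no vertex of $V_P^{\mathrm{rem}}$ can lose all its remaining neighbors during a peel (adding such a vertex to the current $S^\ast$ would strictly increase density), which guarantees each $T_i$ is nonempty and the ratios are well-defined. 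With those small additions your argument would be a complete proof; relative to the paper, it simply supplies what the paper chose to cite.
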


In \cite{goel2012communication} and \cite{LeeSingla17}, the bipartite graph is assumed to have no isolated vertices. In order to handle the case that the predicted graph $H$ has isolated vertices, we extend the decomposition to include the sets $S_{-\infty}$ and $T_{\infty}$ that contain all isolated vertices of $V_P$ and $U$ respectively. Note that we can assume  without loss of generality that $S_{-\infty} = \emptyset$, since any isolated predicted vertex can be dropped from the instance without hurting the algorithm. %Further, by Lemma \ref{lem:app_matching-extension}, we can assume without loss of generality that all the offline vertices in $T_i$ for $i<0$ are matched \textbf{TODO: by which matching?} to a predicted vertex in $V_P$.

\begin{lemma}
\label{lem:app_matching-extension}
There exists a maximum matching $M^*$ of graph $G$ such that 
\begin{itemize}
    \item $M^*$ restricted to $H$ is a maximum matching of $H$.
    \item $M^*(V_A) \cap T^- = \emptyset$, where $T^- = \bigcup_{i\leq 0} T_i$. 
\end{itemize}
\end{lemma}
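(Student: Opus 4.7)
The plan is to start with any maximum matching $M^*$ of $G$ and repeatedly augment inside the predicted graph $H$ until the restriction $M_H := M^* \cap E_H$ is itself a maximum matching of $H$; once that is achieved, property 2 will follow automatically from the structure of the skeleton.

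The first ingredient I would establish is the structural claim that every maximum matching of $H$ saturates every vertex of $T^- := \bigcup_{i\leq 0} T_i$. Applying Lemma \ref{lem:app_skeleton} with $j = 1$ gives $\Gamma(S^-) \subseteq T^-$, where $S^- := \bigcup_{i \leq 0} S_i$. Partitioning the edges of any matching $M$ of $H$ by their $V_P$-endpoint, edges with an $S^-$-endpoint contribute at most $|T^-|$ (they must land in $T^-$), and edges with an $S^+$-endpoint contribute at most $|S^+|$, giving $|M| \leq |T^-| + |S^+|$. This bound is tight, attained by integrally rounding the skeleton's fractional matching as in Lemma \ref{lem:rounding}; equality forces the number of edges incident to $T^-$ to equal $|T^-|$, so $T^-$ is fully saturated in any maximum matching of $H$.

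The second ingredient is an augmentation step that grows $M_H$ without shrinking $M^*$. If $|M_H| < \nu(H)$, choose an $M_H$-augmenting path $P = u_0, v_0, u_1, \ldots, u_k, v_k$ in $H$ with $u_0 \in U$ and $v_k \in V_P$ both unsaturated by $M_H$. Since adversarial edges lie in $U \times V_A$, the vertex $v_k$ has no incident edges outside $E_H$, so $v_k$ is also unsaturated by all of $M^*$. If $u_0$ were likewise unsaturated by $M^*$, then $P$ would be an $M^*$-augmenting path in $G$, contradicting maximality; hence $u_0$ is matched by $M^*$ to some $w \in V_A$ via an adversarial edge. Flipping $P$ and simultaneously deleting $u_0 w$ produces a new matching of $G$ of the same total size but with $|M_H|$ increased by exactly one. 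Iterating terminates when $M_H$ is a maximum matching of $H$; combined with the first ingredient, every vertex of $T^-$ is then matched by $M_H \subseteq M^*$ to a vertex of $V_P$, giving $M^*(V_A) \cap T^- = \emptyset$.

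The main obstacle is verifying the augmentation step: one must confirm that the modified edge set is still a matching of $G$. The intermediate $V_P$-vertices of $P$ have no adversarial edges, so the only potential conflict with $M^*$ is at $u_0$, and that is resolved by removing the single edge $u_0 w$; the rest of the flip is just a standard bipartite augmentation inside $H$.
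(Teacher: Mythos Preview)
Your proof is correct and takes a route that differs from the paper's. The paper fixes, among all maximum matchings of $G$, one that maximizes $|M^*\cap E_H|$, and then argues by contradiction: if $|M^*\cap E_H|<\nu(H)$, some component $P$ of $M^*\cup\tilde{M}$ (for a fixed maximum matching $\tilde{M}$ of $H$) has $|P\cap M^*\cap E_H|<|P\cap\tilde{M}|$; a short case analysis on whether $P$ is a cycle, an even path, or an odd path yields a contradiction with either the extremality of $M^*$ or the maximality of $\tilde{M}$. You instead start from an arbitrary maximum matching and iteratively repair it: take an $M_H$-augmenting path $P$ in $H$, note that its $V_P$-endpoint is $M^*$-exposed (no adversarial edges touch $V_P$) while its $U$-endpoint must be $M^*$-matched into $V_A$ (else $P$ would augment $M^*$), and flip $P$ while deleting that one adversarial edge. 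Your version is constructive and sidesteps the parity case split; the paper's extremal choice is terser in that no iteration is needed. One small point worth making explicit in your write-up is why the intermediate $U$-vertices cause no conflict: they are $M_H$-saturated, so their unique $M^*$-edge already lies on $P$; you currently mention only the intermediate $V_P$-vertices.

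For the second bullet, both arguments reduce to the fact that every maximum matching of $H$ saturates $T^-$. The paper simply asserts this; your counting argument via $\Gamma(S^-)\subseteq T^-$ and the tight bound $|M|\le |T^-|+|S^+|$ is a clean way to justify it.
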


\begin{proof}
Let $M^*$ be a maximum matching of $G$ with the maximum number of edges in $H$. Let $\tilde{M}$ be a maximum matching of $H$. Assume, for the sake of contradiction, that $|M^* \cap E(H)| < |\tilde{M}|$. Since $|M^* \cap E(H)| < |\tilde{M}|$, some connected component $P$ of the edge set $M^* \cup \tilde{M}$ has the property that $|P\cap M^* \cap E(H)| < |P\cap \tilde{M}|$. If $P$ is a cycle or even-length path, augmenting the matching $M^*$ with $P$ results in a maximum matching of $G$ with strictly more edges than $M^*$ in $H$, a contradiction. Therefore, $P$ must be an odd-length path.

$P$ must have its endpoint edge(s) in $M^*$, as otherwise augmenting $M^*$ with $P$ would result in a larger matching of $G$. No non-endpoint vertex of $P$ can be in $V_A \setminus V_P$, because each non-endpoint vertex of $P$ is incident with some edge in $H$. Since $P$ has odd length, its endpoints cannot both be in the same bipartition. $P$ cannot entirely be contained in $H$, as otherwise augmenting $\tilde{M}$ with $P$ would result in a larger matching in $H$. Therefore, $P$ has exactly one endpoint vertex $x$ outside of $H$ and the subpath $P\setminus \{x\}$ is entirely contained in $H$. $P\setminus \{x\}$ is a path with even length with $|(P\setminus \{x\})\cap M^*\cap E(H)| = |(P\setminus \{x\})\cap \tilde{M}|$. Therefore, $|P\cap M^*\cap E(H)| = |P\cap \tilde{M}|$, a contradiction. In particular, $|M^*\cap E(H)| = |\tilde{M}|$, so $M^*$ is a maximum matching whose restriction to $H$ is a maximum matching of $H$.

To prove the second part of the claim, we observe that any maximum matching of $H$ must match all vertices in $T^- = \bigcup_{i\leq 0} T_i$.
\end{proof}

Let $M^*$ denote the maximum matching of $G$ that satisfies Lemma \ref{lem:app_matching-extension}. 
Recall from Section \ref{sec:structured-sampling} that our goal is to sample a maximum matching $M$ in $H$ in order to maximize the expected size of $R \cap M^*(V_A)$ where $R \subseteq U$ is the set of offline vertices left unmatched by $M$.  Algorithm \ref{alg:app_structured_preprocessing_imperfect} shows the new preprocessing phase, while the online phase remains the same as earlier (Algorithm \ref{alg:online}). In the preprocessing phase, we  find a decomposition of the predicted graph $H$, and for each index $0 \leq i \leq b$, we sample a random maximum matching between $S_i$ and $T_i$ using Lemma~\ref{lem:rounding}. Similarly, for each index $i < 0$, we find an arbitrary maximum matching.

\begin{algorithm}
\DontPrintSemicolon
\SetAlgoVlined
\newcommand{\forcond}{$i=0$ \KwTo $n$}
\SetKwFunction{FRecurs}{Preprocess}%
\SetKwProg{Fn}{Function:}{\string:}{}
\Fn(){\FRecurs{H}}{
\KwData{Predicted graph $H$}
\KwResult{Maximum matching in $H$, sequence of nodes from $U$}
\BlankLine
Decompose $H$ into $\{(S_i, T_i)\}_{i=a}^b \cup T_{\infty}$ using Lemma~\ref{lem:app_skeleton}\;
For $i < 0$, $M_i \leftarrow $ Arbitrary maximum matching between $S_i$ and $T_i$\;
For $0 \leq i \leq b$, $M_i \leftarrow $ Random matching between $S_i$ and $T_i$ using Lemma~\ref{lem:rounding}\;
$\displaystyle M \leftarrow \bigcup_i M_i$\;
Let $R_{\text{set}} \subseteq U$ be the set of nodes unmatched by $M$\;
$R \leftarrow$ Uniform random permutation of $R_{\text{set}}$\;
\KwRet{M, R}
}
\caption{Structured Sampling: Preprocessing Phase.}
\label{alg:app_structured_preprocessing_imperfect}
\end{algorithm}

\begin{lemma}
\label{lem:app_babylemma}
The matching $M$ found by Algorithm~\ref{alg:app_structured_preprocessing_imperfect} is a maximum matching of $H$.
\end{lemma}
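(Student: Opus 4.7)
The plan is to show $|M|$ equals the size of a maximum matching in $H$ by computing $|M|$ explicitly and matching it against the size of the fractional matching promised by Lemma~\ref{lem:app_skeleton}, then invoking bipartite LP integrality.

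First I would observe that $M = \bigcup_i M_i$ is a valid matching. Since $\{S_i\}_{a\le i\le b} \cup \{S_{-\infty}\}$ partitions $V_P$ and $\{T_i\}_{a\le i\le b} \cup \{T_\infty\}$ partitions $U$, and each $M_i$ uses only edges between $S_i$ and $T_i$, the component matchings $M_i$ use pairwise disjoint vertex sets; hence their union is a matching.

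Next I would compute $|M|$ blockwise. For $0 \le i \le b$, Lemma~\ref{lem:rounding} produces a matching $M_i$ of size $|S_i|$ (every vertex of $S_i$ has fractional degree $1$ and is therefore matched with probability $1$). For $a \le i < 0$, the fractional matching of Lemma~\ref{lem:app_skeleton} restricted to the bipartite subgraph induced on $(S_i,T_i)$ has total size $|T_i|$ and saturates $T_i$. By LP integrality of the bipartite matching polytope (equivalently, by Hall's theorem applied with this fractional matching as a witness), there is an integral matching of this subgraph of size $|T_i|$, which is what the algorithm picks as $M_i$. Summing,
$$|M| \;=\; \sum_{a \le i < 0} |T_i| \;+\; \sum_{0 \le i \le b} |S_i|.$$

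Finally, the fractional matching of Lemma~\ref{lem:app_skeleton} on the full graph $H$ has size exactly equal to this same quantity: for $i\ge 0$ the contribution is $\sum_{u\in S_i} 1 = |S_i|$ (sum of fractional degrees on the saturated side), and for $i<0$ the contribution is $\sum_{u\in T_i} 1 = |T_i|$. Since $H$ is bipartite, the maximum integral matching of $H$ has the same size as the maximum fractional matching of $H$, which is at least the value above. Conversely every integral matching is a fractional matching, so $|M|$ is simultaneously a lower bound matching this upper bound. Therefore $M$ is a maximum matching of $H$.

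The only subtle point is the second bullet: justifying that the arbitrary maximum matching between $S_i$ and $T_i$ for $i<0$ actually has size $|T_i|$. This is not entirely trivial from the statement of Lemma~\ref{lem:app_skeleton} alone, but follows immediately once one notes that the guaranteed fractional matching restricted to the pair saturates $T_i$, so bipartite LP integrality gives an integral matching of the same size.
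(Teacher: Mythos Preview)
Your computation of $|M| = \sum_{i<0}|T_i| + \sum_{i\ge 0}|S_i|$ is fine, and so is your check that the $M_i$ are disjoint and that the fractional matching of Lemma~\ref{lem:app_skeleton} has this same total size. The gap is in the last paragraph, where the inequalities all point the wrong way. You argue that the maximum integral matching equals the maximum fractional matching, which is \emph{at least} the size of the particular fractional matching from Lemma~\ref{lem:app_skeleton}; this gives $\nu(H)\ge |M|$. You then say ``every integral matching is a fractional matching, so $|M|$ is simultaneously a lower bound matching this upper bound,'' but that again only yields $\nu(H)\ge |M|$. Nowhere have you shown $\nu(H)\le |M|$. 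The implicit assumption you are relying on---that the fractional matching from Lemma~\ref{lem:app_skeleton} is a \emph{maximum} fractional matching---is true for the skeleton decomposition, but it is not stated in the lemma as restated here, so you cannot invoke it without proof.

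The paper closes this gap differently: rather than appealing to maximality of the fractional matching, it takes an arbitrary maximum matching $\tilde M$ of $H$ and uses the neighborhood property of the decomposition. From $\Gamma(\bigcup_{i<0}S_i)=\bigcup_{i<0}T_i$ one deduces $\Gamma(\bigcup_{i\ge 0}T_i)\subseteq \bigcup_{i\ge 0}S_i$, so $|\tilde M(\bigcup_{i\ge 0}T_i)|\le \sum_{i\ge 0}|S_i|$; together with the trivial bound $|\tilde M(\bigcup_{i<0}T_i)|\le \sum_{i<0}|T_i|$ this gives $|\tilde M|\le |M|$. You could salvage your approach by inserting exactly this neighborhood argument to certify that the fractional matching of Lemma~\ref{lem:app_skeleton} is optimal, after which LP integrality finishes the job as you intended.
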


\begin{proof}
By construction, the matching $M$ has size $\sum_{i < 0} |T_i| + \sum_{i \geq 0} |S_i|$. Consider any maximum matching $\tilde{M}$ of $H$. We have $|\tilde{M}| = |\tilde{M}(\bigcup_{i < 0} T_i)| + |\tilde{M}(\bigcup_{i \geq 0} T_i)|$. However, by Lemma~\ref{lem:app_skeleton}, $\Gamma(\bigcup_{i \geq 0} T_i) \subseteq \bigcup_{i \geq 0} S_i$. Thus, $|\tilde{M}(\bigcup_{i \geq 0} T_i)| \leq |\bigcup_{i \geq 0} S_i| = \sum_{i \geq 0} |S_i|$. This yields $|M| \geq |\tilde{M}|$ and the claim follows.
\end{proof}

As in Section~\ref{sec:structured-sampling}, let $n_i = |T_i|$, $d_i = |T_i| - |S_i|$, and $\ell_i = |M^*(V_A) \cap T_i|$. We first present the following technical lemma.

\begin{lemma}\label{lem:app_sequence}
$\sum_{i> 0} \frac{n_i \ell_i}{d_i} \leq \sum_{i> 0} (n_i - d_i + \ell_i)$.
\end{lemma}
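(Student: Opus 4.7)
My plan is to reduce the target inequality to an Abel summation argument supported by a partial-sum bound that extends Lemma~\ref{lem:adv_matches} to the imperfect-matching setting. First I would rewrite the inequality, using $n_i - d_i = |S_i|$ and moving $\sum_{i>0}\ell_i$ to the left, as the equivalent statement $\sum_{i>0}\ell_i\cdot|S_i|/d_i \le \sum_{i>0}|S_i|$.

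The key auxiliary claim is: for every $t>0$, $\sum_{0<i\le t}\ell_i \le \sum_{0<i\le t}d_i$. To prove it I would first extract from Lemma~\ref{lem:app_skeleton} a ``downward-closed'' neighborhood containment: since $\Gamma(\bigcup_{k<j}S_k) = \bigcup_{k<j}T_k$, each $v \in S_k$ has all its $H$-neighbors in $\bigcup_{l\le k}T_l$; dualizing gives $\Gamma(\bigcup_{j>t}T_j) \subseteq \bigcup_{i>t}S_i$. Writing $\tilde M := M^*\cap E(H)$, Lemma~\ref{lem:app_babylemma} gives $|\tilde M| = |T^-| + \sum_{i>0}|S_i|$, and $\tilde M$ matches all of $T^-$. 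Therefore the number of $\tilde M$-edges with $U$-endpoint in $\bigcup_{j>t}T_j$ is at most $|\bigcup_{i>t}S_i| = \sum_{i>t}|S_i|$, forcing at least $\sum_{0<i\le t}|S_i|$ edges of $\tilde M$ to land with $U$-endpoint in $\bigcup_{0<j\le t}T_j$. Equivalently, $|M^*(V_P)\cap\bigcup_{0<j\le t}T_j| \ge \sum_{0<i\le t}|S_i|$; partitioning $\bigcup_{0<j\le t}T_j$ by $M^*$'s action (into $V_P$-matched, $V_A$-matched, and unmatched vertices) and using $M^*(V_A)\cap T^- = \emptyset$ from Lemma~\ref{lem:app_matching-extension} then yields the partial-sum claim.

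The conclusion now follows by Abel summation. The weights $c_i := |S_i|/d_i = 1/(|T_i|/|S_i| - 1)$ are strictly decreasing in $i>0$ because Lemma~\ref{lem:app_skeleton} strictly orders the ratios $|S_i|/|T_i|$. Applying Abel's formula to $a_i = \ell_i - d_i$ and $b_i = c_i$, writing $A_t = \sum_{i\le t}(\ell_i - d_i)$, gives $\sum_{i=1}^{b}(\ell_i-d_i)c_i = A_b c_b - \sum_{i=1}^{b-1}A_i(c_{i+1}-c_i)$. Each term is non-positive because $A_t\le 0$ (by the partial-sum bound), $c_b>0$, and $c_{i+1}-c_i\le 0$. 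Hence $\sum_i \ell_i c_i \le \sum_i d_i c_i = \sum_i |S_i|$, which is the reformulated inequality.

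I expect the main obstacle to be the partial-sum inequality. The perfect-matching argument of Lemma~\ref{lem:adv_matches} does not transfer directly: the presence of $T^-$ on the offline side forces careful tracking of how $\tilde M$'s edges distribute across levels, and I must combine the structural containment above with both clauses of Lemma~\ref{lem:app_matching-extension}. Once that partial-sum bound is established, the Abel summation step is routine.
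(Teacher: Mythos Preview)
Your argument is correct and takes a genuinely different route from the paper's. Both proofs rest on the same partial-sum inequality $\sum_{0<i\le t}\ell_i \le \sum_{0<i\le t}d_i$; the paper merely cites Lemma~\ref{lem:adv_matches} with a footnote, while you supply a self-contained derivation for the imperfect-matching setting via the neighborhood containment $\Gamma(\bigcup_{j>t}T_j)\subseteq \bigcup_{i>t}S_i$. The divergence is in how the partial sums are converted into the target inequality: the paper runs a lexicographic exchange argument to reduce to the configuration $\ell_i\le d_i$ for all $i>0$ and then applies the termwise bound $\frac{n_i\ell_i}{d_i}\le n_i-d_i+\ell_i$, whereas you rewrite the inequality as $\sum_{i>0}\ell_i\,|S_i|/d_i \le \sum_{i>0}|S_i|$ and finish in one stroke by Abel summation, using that the weights $|S_i|/d_i$ are decreasing. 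Your route is shorter and avoids the exchange step entirely; the paper's route has the mild advantage of isolating exactly when the inequality is tight (namely $\ell_i=d_i$).

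One small citation fix: the facts that $|\tilde M| = |T^-| + \sum_{i>0}|S_i|$ and that $\tilde M$ matches all of $T^-$ do not come from Lemma~\ref{lem:app_babylemma} (which concerns the algorithm's matching $M$). You need the first clause of Lemma~\ref{lem:app_matching-extension} to know that $\tilde M = M^*\cap E(H)$ is a maximum matching of $H$, together with the observation (in that lemma's proof) that every maximum matching of $H$ saturates $T^-$; the size formula then follows from the decomposition. This is purely a labeling issue and does not affect the mathematics.
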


\begin{proof}
Recall that, by Lemma \ref{lem:adv_matches}, we have $\sum_{1 \leq i \leq t} \ell_i \leq \sum_{1 \leq i \leq t} d_i$ for all $t$ \footnote{As stated, the lemma only applies to perfect matchings. However, it can be extended to the general case by noting that every node in $S_i$ (for $i\geq 0$) is matched in every maximum matching of $H$.}; note that we removed the $i=0$ terms since $d_0 = \ell_0 = 0$. We first claim that for fixed $d_i$ and $n_i$, the sum $\sum_{i >0} \frac{n_i \ell_i}{d_i}$ is maximized when $\ell_i \leq d_i,\ \forall i > 0$. Suppose for contradiction that $\{\ell_i\}$ is the lexicographically-largest maximum-attaining assignment such that $\ell_j > d_j$ for some $j > 0$. Without loss of generality, let $j$ be the smallest index such that $\ell_j > d_j$. However, in order to satisfy $\sum_{1 \leq i \leq j} \ell_i \leq \sum_{1 \leq i \leq j} d_i$, this implies we must have $\sum_{1 \leq i < j} \ell_i < \sum_{1 \leq i < j} d_i$. Hence, there must be some index $j' < j$ such that $\ell_{j'} < d_{j'}$.

Let $\ell'_i = \ell_i$ for all $i \notin \{j ,j'\}$ and set $\ell'_{j'} = \ell_{j'}+1$ and $\ell'_j = \ell_j - 1$. Notice that we still satisfy $\sum_{1 \leq i \leq t} \ell'_i \leq \sum_{1 \leq i \leq t} d_i$ for all $t$ and $\{\ell'_i\}$ is lexicographically larger than $\{\ell_i\}$. At the same time,
\begin{align*}
    \sum_{i > 0} \frac{n_i \ell'_i}{d_i} = \sum_{i > 0} \frac{n_i \ell_i}{d_i} + \frac{n_{j'}}{d_{j'}} - \frac{n_{j}}{d_{j}} > \sum_{i > 0} \frac{n_i \ell_i}{d_i},
\end{align*}
which is a contradiction. Hence we have proved that $\sum_{i > 0} \frac{n_i \ell_i}{d_i}$ is maximized when $\ell_i \leq d_i,\ \forall i$.

On the other hand, when $\ell_i \leq d_i$, we have $(d_i - \ell_i) \leq (d_i - \ell_i) \frac{n_i}{d_i}$ which implies that $\frac{n_i \ell_i}{d_i} \leq n_i - d_i + \ell_i$. Summing over all $i>0$ completes the proof.
\end{proof}

\begin{theorem}\label{thm:app_marked}
Choose reserved set $R$ according to  Algorithm \ref{alg:app_structured_preprocessing_imperfect}. Then the expected number of marked nodes in $R$ is at least $\delta^2 \nu(G)$. That is, $|R \cap M^*(V_A)| \geq \delta^2 \nu(G)$ in expectation.
\end{theorem}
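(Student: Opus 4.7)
The plan is to compute $\mathbb{E}[|R\cap M^*(V_A)|]$ exactly in terms of the decomposition parameters, then apply Cauchy--Schwarz in the spirit of Theorem~\ref{thm:marked}, and finish with a short algebraic inequality that absorbs the new ``imperfect'' contributions of $T_\infty$ and the $(S_i,T_i)$ pairs with $i\leq 0$. Let $\ell_i = |M^*(V_A)\cap T_i|$ and $\ell_\infty = |M^*(V_A)\cap T_\infty|$. By Lemma~\ref{lem:app_matching-extension}, $M^*(V_A)$ is disjoint from $T^- = \bigcup_{i\leq 0} T_i$. Inspecting Algorithm~\ref{alg:app_structured_preprocessing_imperfect}, nodes of $T_i$ with $i\leq 0$ never land in $R$ (the matchings for $i<0$ saturate $T_i$ because the decomposition supplies a fractional matching of degree $1$ on $T_i$, and the Lemma~\ref{lem:rounding} matching for $i=0$ has size $|S_0|=|T_0|$); nodes of $T_i$ with $i>0$ land in $R$ with marginal probability $d_i/n_i$ by Lemma~\ref{lem:rounding}; and every node of $T_\infty$ lies in $R$ deterministically. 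Hence
\[
\mathbb{E}[|R\cap M^*(V_A)|] \;=\; \ell_\infty + \sum_{i>0} \frac{d_i}{n_i}\,\ell_i.
\]

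Setting $\tilde{\ell} = \sum_{i>0}\ell_i$, I would apply Cauchy--Schwarz to the vectors $u_i = \sqrt{\ell_i n_i/d_i}$ and $v_i = \sqrt{\ell_i d_i/n_i}$ over $i>0$ to obtain $\tilde{\ell}^2 \leq \bigl(\sum_{i>0} n_i\ell_i/d_i\bigr)\bigl(\sum_{i>0} d_i\ell_i/n_i\bigr)$, then invoke Lemma~\ref{lem:app_sequence} to bound the first factor by $\sum_{i>0}(n_i-d_i+\ell_i) = \nu_+ + \tilde{\ell}$, where $\nu_+ := \sum_{i>0}|S_i|$ is the part of $\nu(H)$ coming from the positive-index buckets. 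The prefix hypothesis $\sum_{1\leq i\leq t}\ell_i \leq \sum_{1\leq i\leq t} d_i$ required by Lemma~\ref{lem:app_sequence} still holds in the imperfect setting because $M^*|_H$ is a maximum matching of $H$ and hence matches all of $\bigcup_{i\geq 0} S_i$, so the argument from Lemma~\ref{lem:adv_matches} goes through with $\leq$ replacing the equality used for perfect matchings. Combining these bounds gives
\[
\mathbb{E}[|R\cap M^*(V_A)|] \;\geq\; \ell_\infty + \frac{\tilde{\ell}^2}{\nu_+ + \tilde{\ell}}.
\]

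Using $d^* := |M^*(V_A)| = \ell_\infty + \tilde{\ell}$ and $\nu(G) = \nu(H) + d^*$, so that $\delta^2\nu(G) = (d^*)^2/\nu(G)$, the theorem reduces to the purely algebraic inequality
\[
\ell_\infty + \frac{\tilde{\ell}^2}{\nu_+ + \tilde{\ell}} \;\geq\; \frac{(\ell_\infty+\tilde{\ell})^2}{\nu(H) + \ell_\infty + \tilde{\ell}}.
\]
I would finish by clearing denominators: the difference simplifies to $\ell_\infty\,\nu_+^2 + (\nu(H)-\nu_+)\bigl(\ell_\infty(\nu_+ + \tilde{\ell}) + \tilde{\ell}^2\bigr)$, a sum of products of non-negative quantities, using only that $\nu(H)-\nu_+ = \sum_{i\leq 0}|S_i| \geq 0$.

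The main obstacle is this last algebraic step. In the perfect-matching analysis of Theorem~\ref{thm:marked}, Cauchy--Schwarz plugged directly into $d^2/n$ because $\sum d_i = \sum \ell_i = d$, whereas here neither sum equals $d^*$, and one has to verify that the deterministic contribution $\ell_\infty$ from $T_\infty$ together with the smaller denominator $\nu_+ + \tilde{\ell}$ (rather than $\nu(G)$) are enough to recover the target bound. The expansion above shows that the slack $\nu(H)-\nu_+$ and the isolated marked nodes $\ell_\infty$ are in fact allies of the inequality rather than obstacles, because the nodes they represent are matched either with probability $0$ on the right-hand side of Cauchy--Schwarz or with probability $1$ in $R$.
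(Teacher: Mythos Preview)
Your proposal is correct and follows essentially the same route as the paper: apply Cauchy--Schwarz with the vectors $\sqrt{\ell_i d_i/n_i}$ and $\sqrt{\ell_i n_i/d_i}$ over $i>0$, then invoke Lemma~\ref{lem:app_sequence} to bound the denominator by $\sum_{i>0}(n_i-d_i+\ell_i)\leq \nu(G)$. The paper's write-up simply folds $T_\infty$ into the positive-index sum (with $d_\infty/n_\infty=1$) rather than isolating $\ell_\infty$ as you do; your more explicit treatment is fine, but note the small slip that $\nu(H)-\nu_+ = \sum_{i<0}|T_i| + |S_0|$, not $\sum_{i\leq 0}|S_i|$ --- either expression is non-negative, so your closing inequality stands.
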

\begin{proof}
Let $\ell_i = \size{M^*(V_A) \cap T_i}$. By Lemma~\ref{lem:app_matching-extension}, we have $\ell_i = 0$ for all $i < 0$. 
For each $i \geq 0$, every node $u\in T_i$ is chosen to be in $R$ with probability $d_i/n_i$, with the $d_i/n_i$ forming an increasing sequence. So the expected size of $|R \cap M^*(V_A)|$ is given by
\begin{align*}
\mathbb{E}[|R \cap M^*(V_A)|] &= 
\sum_{i \geq 0} \frac{d_i}{n_i}\ell_i
= \sum_{i > 0} \frac{d_i}{n_i}\ell_i.
\intertext{Applying Cauchy--Schwarz with vectors $u$ and $v$ defined as $u_i = \frac{\sqrt{d_i \ell_i}}{\sqrt{n_i}}$ and $v_i = \frac{\sqrt{n_i \ell_i}}{\sqrt{d_i}}$}
	&\geq \frac{(\sum_{i> 0} \ell_i)^2}{\sum_{i>0} \frac{n_i \ell_i}{d_i}}\\
&\geq \frac{(\sum_{i> 0} \ell_i)^2}{\sum_{i>0} (n_i - d_i + \ell_i)} &\mbox{By Lemma \ref{lem:app_sequence}.}
\end{align*}
However by Lemma \ref{lem:app_matching-extension}, $\sum_{i > 0} \ell_i = |M^*(V_A)| = \nu(G) - \nu(H)$. Further, we have  $\nu(G) = \nu(H) + \sum_{i \geq 0} \ell_i \geq \sum_{i>0} (n_i - d_i + \ell_i)$. Substituting these values above and recalling that $\delta = 1 - \frac{\nu(H)}{\nu(G)}$, we get
\begin{align*}
\mathbb{E}[|R \cap M^*(V_A)|] &\geq \frac{(\nu(G) - \nu(H))^2}{\nu(G)} = \delta^2 \nu(G), \mbox{ as desired.}
\end{align*}
\end{proof}

We can now present the main theorem.
\begin{theorem}
\label{thm:app_main-alg}
There is a randomized algorithm for the semi-online bipartite matching problem with a competitive ratio of at least $(1-\delta+\delta^2(1-1/e))$ in expectation, even when the graph does not have a perfect matching.
\end{theorem}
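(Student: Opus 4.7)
The plan is to replicate the structure of the proof of Theorem~\ref{thm:main-alg} (the perfect matching case), but replace the numerical quantities $n$ and $d$ with their ``general case'' analogues $\nu(G)$ and $\nu(G)-\nu(H)$, drawing on the two new ingredients developed in this section: Lemma~\ref{lem:app_babylemma}, which ensures $M$ is a maximum matching of $H$, and Theorem~\ref{thm:app_marked}, which lower-bounds the expected number of marked reserved nodes by $\delta^2\nu(G)$.

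First I would account for the matching produced on the predicted side. By Lemma~\ref{lem:app_babylemma}, the matching $M$ returned by Algorithm~\ref{alg:app_structured_preprocessing_imperfect} has size exactly $\nu(H) = (1-\delta)\nu(G)$, so the online phase matches at least $(1-\delta)\nu(G)$ predicted nodes. Next I would analyse what happens to the adversarial nodes. When an adversarial node $v \in V_A$ arrives, the online phase runs RANKING on the induced subgraph $G[R\cup V_A]$. Since $M^*$ is a matching in $G$ and $R \cap M^*(V_A)$ is a set of $U$-endpoints of edges of $M^*$ incident to $V_A$, the edges of $M^*$ between $R \cap M^*(V_A)$ and $V_A$ form a matching in $G[R\cup V_A]$. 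Hence
\[
\nu\bigl(G[R\cup V_A]\bigr) \;\geq\; |R \cap M^*(V_A)|.
\]
By the $(1-1/e)$-competitiveness of RANKING~\cite{karp1990optimal}, the expected number of adversarial nodes matched online is at least $(1-1/e)\cdot \mathbb{E}\bigl[\nu(G[R\cup V_A])\bigr]\ \geq\ (1-1/e)\cdot \mathbb{E}\bigl[|R \cap M^*(V_A)|\bigr]$. Applying Theorem~\ref{thm:app_marked} then gives a lower bound of $(1-1/e)\delta^2\nu(G)$ on the expected matching size from the adversarial phase.

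Finally I would combine the two contributions and use linearity of expectation: the expected total matching size produced by the algorithm is at least
\[
(1-\delta)\nu(G) + \delta^2(1-1/e)\nu(G) \;=\; \bigl(1-\delta+\delta^2(1-1/e)\bigr)\nu(G),
\]
which, since $\mathrm{OPT} = \nu(G)$, yields the claimed competitive ratio.

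There is no genuinely hard step here; the real work was already done in Lemmas~\ref{lem:app_skeleton}--\ref{lem:app_sequence} and Theorem~\ref{thm:app_marked}, where we had to generalise the skeleton decomposition to non-perfect matchings, modify the preprocessing algorithm to deterministically match the $i<0$ components (so that $M$ is a maximum matching of $H$ and no adversarial-marked nodes can lie in $T^-$), and extend the Cauchy--Schwarz argument via Lemma~\ref{lem:app_sequence}. The only potential subtlety in the present proof is the justification that marked $U$-endpoints lying in $R$ constitute a valid matching with $V_A$ in $G[R \cup V_A]$, which follows immediately because $M^*$ itself is a matching in $G$.
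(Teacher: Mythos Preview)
Your proposal is correct and follows essentially the same route as the paper's proof: invoke Lemma~\ref{lem:app_babylemma} to get $|M|=\nu(H)=(1-\delta)\nu(G)$, invoke Theorem~\ref{thm:app_marked} to get $\mathbb{E}[|R\cap M^*(V_A)|]\ge \delta^2\nu(G)$, apply the $(1-1/e)$ guarantee of RANKING on $G[R\cup V_A]$, and add. Your write-up is in fact slightly more explicit than the paper's in justifying the inequality $\nu(G[R\cup V_A])\ge |R\cap M^*(V_A)|$, which the paper states without elaboration.
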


\begin{proof}
By Lemma~\ref{lem:app_babylemma}, the matching $M$ sampled during the preprocessing phase is a maximum matching of $H$ and thus has size $\nu(H)$. Further, by Theorem \ref{thm:app_marked}, we have reserved at least $\delta^2 \nu(G)$ nodes that can be matched to the adversarial nodes. RANKING will match at least a $(1-1/e)$ fraction of these in expectation. So in expectation, the total matching has size at least
$
\nu(H) + \delta^2 \nu(G)(1-1/e) = \nu(G)(1-\delta +\delta^2 (1-1/e))
$ as desired.
\end{proof}

\section{Sampling From Arbitrary Set Systems}
\label{sec:sets-alg}

In Section \ref{sec:structured-sampling}, we used  graph decomposition to sample a matching in the predicted graph such that, in expectation, there is a large overlap between the set of reserved (unmatched) nodes and the unknown set of marked nodes chosen by the adversary. In this section we prove the existence of probability distributions on sets, with this ``large overlap'' property, in settings more general than just bipartite graphs. 

Let $U$ be a universe of $n$ elements and let $\mathcal{S}$ denote a family of subsets of $U$ with equal sizes, i.e., 
$|S| = d, \forall S \in \mathcal{S}$. Suppose an adversary chooses a set $T \in \mathcal{S}$, which is unknown to us. Our goal is to find a probability distribution over $\mathcal{S}$ such that the expected intersection size of $T$ and a set sampled from this distribution is maximized.
We prove in Theorem~\ref{thm:sets-problem} that for any such set system, one can always guarantee that the expected intersection size is at least $\frac{d^2}{n}$. 

The connection to matchings is as follows.
Let $U$, the set of offline nodes in the matching problem, also be the universe of elements. $\mathcal{S}$ is a collection of all maximal subsets $R$ of $U$ such that there is a perfect matching between $U\setminus R$ and $V_P$. All these subsets have size $d=|V_A| = \delta n$. Notice that $M^*(V_A)$ is one of the sets in $\mathcal{S}$, although of course we don't know which. 
What we would like is a distribution such that sampling a set $R$ from it satisfies $\mathbb{E}[|R \cap M^*(V_A)|] \geq d^2/n = \delta^2  n$.

\begin{theorem}
\label{thm:sets-problem}
For any set system $(U, \mathcal{S})$ with $|U|=n$ and $|S| = d$ for all $S \in \mathcal{S}$, there exists a probability distribution $\mathcal{D}$ over $\mathcal{S}$ such that $\forall T\in \mathcal{S}$, $\mathbb{E}_{S \sim \mathcal{D}}[|S \cap T|] \geq \dfrac{d^2}{n}$.
\end{theorem}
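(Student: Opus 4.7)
The plan is to formulate this as a two-player zero-sum game and invoke LP duality (the minimax theorem). Specifically, let the maximizing player choose a distribution $\calD$ over $\calS$, let the adversary choose $T\in \calS$, and let the payoff be $|S\cap T|$ where $S\sim \calD$. Since $\calS$ is finite, both strategy sets are simplices in finite-dimensional space, so von Neumann's minimax theorem applies and
\[
\max_{\calD}\,\min_{T\in \calS}\,\mathbb{E}_{S\sim \calD}[|S\cap T|] \;=\; \min_{\calD'}\,\max_{S\in \calS}\,\mathbb{E}_{T\sim \calD'}[|S\cap T|].
\]
Hence it suffices to show that the right-hand side is at least $d^2/n$, i.e., to show that for every distribution $\calD'$ over $\calS$, some $S\in \calS$ satisfies $\mathbb{E}_{T\sim \calD'}[|S\cap T|]\ge d^2/n$.

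The core trick is to ``play against yourself'': fix an arbitrary distribution $\calD'$ and let $x_u = \Pr_{T\sim \calD'}[u\in T]$ for each $u\in U$. Since every $T\in \calS$ has $|T|=d$, linearity of expectation gives $\sum_{u\in U} x_u = \mathbb{E}_{T\sim \calD'}[|T|]=d$. Now consider what happens when we also draw $S$ from $\calD'$:
\[
\mathbb{E}_{S,T\sim \calD'}[|S\cap T|] \;=\; \sum_{u\in U} \Pr[u\in S]\Pr[u\in T] \;=\; \sum_{u\in U} x_u^2.
\]
By Cauchy--Schwarz applied to the vectors $(x_u)_u$ and $(1,\ldots,1)$, we have $d^2 = \bigl(\sum_u x_u\bigr)^2 \le n\sum_u x_u^2$, so $\sum_u x_u^2 \ge d^2/n$. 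Consequently the average payoff when both players use $\calD'$ is at least $d^2/n$, which in turn forces the existence of some $S$ in the support of $\calD'$ with $\mathbb{E}_{T\sim \calD'}[|S\cap T|]\ge d^2/n$.

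Combining these pieces, $\min_{\calD'}\max_{S}\mathbb{E}_{T\sim \calD'}[|S\cap T|]\ge d^2/n$, and by minimax the same bound holds for the primal, producing the desired distribution $\calD$. The main conceptual subtlety—what could appear to be the obstacle—is that it is not at all clear a priori how to construct $\calD$ directly; the strength of the argument is that duality converts the existential statement about $\calD$ into the concrete symmetric averaging identity $\mathbb{E}_{S,T\sim \calD'}[|S\cap T|]=\sum_u x_u^2$, which then yields to one line of Cauchy--Schwarz. A secondary check is simply that $\calS$ is finite (ensured because $\calS\subseteq 2^U$) so that the minimax theorem applies without compactness issues.
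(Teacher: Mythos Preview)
Your proof is correct and essentially the same as the paper's: both reduce to LP duality (the paper writes the primal/dual explicitly, you invoke von Neumann's minimax theorem) and both hinge on the same Cauchy--Schwarz-type inequality. The paper's dual calculation---defining $w(u)=\sum_{T\ni u}q_T$, multiplying the constraint $\sum_{u\in S}(w(u)-d/n)\le 0$ by $q_S$, and summing---is exactly your ``play against yourself'' identity $\mathbb{E}_{S,T\sim\calD'}[|S\cap T|]=\sum_u x_u^2$ in different clothing, with their pointwise inequality $y(x-y)\le x(x-y)$ being the same $(x-y)^2\ge 0$ that underlies your Cauchy--Schwarz step.
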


As an example, consider $U = \{v, w, x, y, z\}$ and $\mathcal{S} = \{\{v, w\}, \{w,x\}, \{x,y\}, \{y,z\}\}$. Here $n=5$ and $d=2$, so the theorem guarantees a probability distribution on the four sets such that each of them has an expected intersection size with the selected set of at least $\frac{4}{5}$. We can set $\Pr[\{v,w\}] = \Pr[\{y,z\}] =\frac{3}{10}$ and $\Pr[\{w,x\}] = \Pr[\{x,y\}] = \frac{1}{5}$. Then the expected intersection size for the set $\{v,w\}$ is $\Pr[\{v,w\}] \cdot 2 + \Pr[\{w,x\}] \cdot 1 = \frac{4}{5}$ because the intersection size is 2 if $\{v,w\}$ is picked and 1 if $\{w,x\}$ is picked. Similarly, one can verify that the expected intersection for any set is at least $\frac{4}{5}$. However, in general, it is not trivial to find such a distribution via an explicit construction.

Theorem~\ref{thm:sets-problem} is a generalization of Theorem~\ref{thm:marked}, and we could have selected a matching and a reserved set $R$ according to the methods used in its proof. Indeed, this gives the same competitive ratio. However, the set system generated by considering all matchings of size $n-d$ is exponentially large in general. Hence the offline portion of the algorithm would not run in polynomial time.

\subsection{Proof of Theorem~\ref{thm:sets-problem}}
\label{sec:proof-sets-exist}

Let $\calD$ be a probability distribution over $\calS$ with the probability of choosing a set $S$ denoted by $p_S$. Now, for any fixed set $T \in \calS$, the expected intersection size is given by $\mathbb{E}_{S \sim \calD} [|S \cap T|] = \sum_{S \in \calS} p_S \cdot |S \cap T| = \sum_{u \in T} \sum_{S \ni u} p_S$. 
For a given set system $(U, \calS)$, consider the following linear program (Primal-LP) and its dual.

The primal constraints exactly capture the requirement that the expected intersection size is at least $\frac{d^2}{n}$ for any choice of $T$. Thus, to prove the theorem, it suffices to show that the optimal primal solution has an objective value of at most 1. We show that any feasible solution to the dual linear program (Dual-LP) must have objective value at most 1 and hence the theorem follows from strong duality.

\begin{minipage}[t][55mm][t]{0.45\textwidth}
\begin{align}
  &&&\hspace{-1cm}\min \sum_{S \in \calS} p_S \nonumber\\
\text{s.t. }\quad \nonumber \\
  & \forall T \in \calS, && \displaystyle \sum_{u \in T} \sum_{S \ni u} p_S \geq \dfrac{d^2}{n}  \label{eqn:LPassignment}\\
 & \forall S \in \calS, && \displaystyle p_S \geq 0 \label{eqn:LPrelax}
\end{align}
\captionof{figure}{Primal-LP.}
\end{minipage}\hspace{5mm}
\begin{minipage}[t][55mm][t]{0.45\textwidth}
\begin{align}
  &&&\hspace{-1cm}\max \sum_{T \in \calS} q_T \nonumber\\
\text{s.t.}\quad \nonumber \\
  & \forall S \in \calS, && \displaystyle \sum_{u \in S} \sum_{T \ni u} q_T \leq \dfrac{d^2}{n}  \label{eqn:dual-LPassignment}\\
 & \forall T \in \calS, && \displaystyle q_T \geq 0 \label{eqn:dual-LPrelax}
\end{align}
\captionof{figure}{Dual-LP.}
\end{minipage}

\begin{lemma}
\label{lemma:dual-feasibility}
For any set system $(U, \calS)$, the optimal solution to Dual-LP has objective value at most 1.
\end{lemma}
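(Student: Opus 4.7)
My plan is to bound $Q := \sum_{T \in \calS} q_T$ for any feasible dual solution $\{q_T\}$ by combining the dual constraints in a clever way that introduces a quadratic form, which I can then control via Cauchy--Schwarz.

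The key definitions are: for each $u \in U$, set $x_u := \sum_{T \ni u} q_T$. Two easy identities follow immediately. First, summing $x_u$ over all $u$ and swapping the order of summation gives $\sum_{u \in U} x_u = \sum_T q_T |T| = dQ$, since every $T \in \calS$ has size exactly $d$. Second, the dual constraint for $S$ can be rewritten as $\sum_{u \in S} x_u \le d^2/n$.

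The main step is to aggregate the dual constraints with weights $q_S$: multiplying the inequality $\sum_{u \in S} x_u \le d^2/n$ by $q_S \ge 0$ and summing over $S \in \calS$ yields
\begin{equation*}
\sum_{S \in \calS} q_S \sum_{u \in S} x_u \ \le\ \frac{d^2}{n}\sum_{S \in \calS} q_S \ =\ \frac{d^2}{n}\, Q.
\end{equation*}
Interchanging the order of summation on the left turns it into $\sum_{u \in U} x_u \bigl(\sum_{S \ni u} q_S\bigr) = \sum_{u \in U} x_u^2$. Therefore
\begin{equation*}
\sum_{u \in U} x_u^2 \ \le\ \frac{d^2}{n}\, Q.
\end{equation*}

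Now I apply Cauchy--Schwarz (equivalently, the power-mean inequality) to lower-bound the left-hand side: $\sum_{u \in U} x_u^2 \ge (\sum_u x_u)^2 / n = d^2 Q^2 / n$. Chaining this with the previous display gives $d^2 Q^2/n \le d^2 Q/n$, i.e., $Q^2 \le Q$, which forces $Q \le 1$ (the case $Q = 0$ being trivial).

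The only step requiring any ingenuity is deciding to weight the $S$-th constraint by $q_S$ itself; everything else is routine algebra and a one-line application of Cauchy--Schwarz. I do not anticipate any genuine obstacles, so this should give a short, clean proof of the lemma and, combined with strong LP duality as indicated in the setup, establish Theorem~\ref{thm:sets-problem}.
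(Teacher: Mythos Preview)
Your proof is correct and essentially identical to the paper's: both define $w(u)=\sum_{T\ni u}q_T$, weight the $S$-constraint by $q_S$, and swap the order of summation to obtain $\sum_u w(u)^2 \le (d^2/n)Q$. The only cosmetic difference is the final step---you invoke Cauchy--Schwarz $\sum_u w(u)^2 \ge (\sum_u w(u))^2/n$, while the paper uses the pointwise inequality $(w(u)-d/n)^2\ge 0$ to reach $\sum_u w(u)\le d$; both are one-line finishes of the same argument.
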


\begin{proof}
Let $\{q_T\}_{T \in \calS}$ denote an optimal, feasible solution to Dual-LP. For any element $u \in U$, define $w(u) = \sum_{T \ni u} q_T$ to be the total weight of all the sets that contain $u$. From the dual constraints, we have

\begin{align}
\quad\quad\quad\quad\quad\quad\quad\quad\quad\quad\quad &\forall S \in \calS, \quad \sum_{u \in S} w(u) \leq \dfrac{d^2}{n}. \nonumber
\intertext{Since each $S \in \calS$ has exactly $d$ elements, we can rewrite the above as}
&\forall S \in \calS, \quad \sum_{u \in S} \left(w(u) - \dfrac{d}{n}\right) \leq 0. \nonumber
\end{align}
Multiplying each inequality by $q_S$ and adding over all $S \in \calS$ yields
\begin{eqnarray*} 
0 & \geq & \sum_{S \in \calS} \left(\sum_{u \in S} \left(w(u) - \dfrac{d}{n}\right)q_S\right) \\
& = & \sum_{u \in U} \left(\sum_{S \ni u} \left(w(u) - \dfrac{d}{n}\right)q_S\right) \\
& = & \sum_{u \in U} w(u) \left(w(u) - \dfrac{d}{n}\right) \\
& \geq & \sum_{u \in U} \frac{d}{n} \left(w(u) - \dfrac{d}{n}\right),
\end{eqnarray*}
where we used that $y(x-y) \leq x(x-y)$ for any two real numbers $x$ and $y$.  Hence we obtain
\begin{equation}
\sum_{u \in U} w(u) \leq d. \label{ineq:final}
\end{equation}
On the other hand, we have 
\begin{equation}
\sum_{u \in U} w(u) = \sum_{u \in U} \sum_{T \ni u} q_T = \sum_{T \in \calS} \sum_{u \in T} q_T = d \sum_{T \in \calS} q_T.  \nonumber
\end{equation}
Inequality~(\ref{ineq:final}) then shows that  $\sum_{T \in \calS} q_T \leq 1$.
\end{proof}

\newcommand{\ALG}{\mathrm{ALG}}
\newcommand{\OPT}{\mathrm{OPT}}

\section{Fractional Matching}

In this section we consider algorithms for the semi-online \emph{fractional} bipartite matching problem and give tight approximation results using the primal-dual technique. The hardness result of~\cite[Theorem 3.23]{esfandiari2018} applies to the fractional version as well, showing that no approximation better than $1 - \delta e^{-\delta}$ is possible.

We consider the same model for arrival of nodes and edges as described in Section \ref{sec:model}. However, instead of generating an integral matching, the algorithm is required to construct a fractional matching. In particular, when each node $v\in V$ is revealed, the algorithm has to assign fractional values $x_{\{u,v\}} \geq 0$ to edges of $v$ such that $\sum_{w \in \delta(z)} x_{\{w,z\}} \leq 1$ for every vertex $z\in U \cup V$.

Section \ref{sec:frac-pd} presents the primal-dual analysis showing an approximation guarantee for the special case in which the optimal solution fully matches all the offline nodes $U$ in the instance. Section \ref{sec:frac-gen} shows that the same guarantee also holds for arbitrary bipartite graphs, by using a monotonicity property of the algorithm.

We use $\ALG(G)$ to denote the value of the solution found by our algorithm on an instance $G$, and $\OPT(G)$ to denote the value of an optimal solution. Note that, by integrality of maximum matchings, we can assume that the optimal solution to an instance of the fractional matching problem assigns all edges to an extent of $0$ or $1$. Thus, we still use $M^*$ to denote the optimal matching.

\subsection{Case when $U$ is fully matched in OPT} \label{sec:frac-pd}
For simplicity, we first consider the special case when the optimal solution fully matches all the offline vertices in $U$, i.e. $|M^*| = |U|$.
In this section, we prove the following theorem.

\begin{theorem}\label{thm:frac-full}
There is a deterministic algorithm for the semi-online fractional bipartite matching problem with a competitive ratio of $(1 - \delta e^{-\delta})$ when the input graph $G$ has a matching that matches all vertices in $U$.
\end{theorem}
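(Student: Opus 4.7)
The plan is to build the algorithm on the matching skeleton decomposition of $H$ (Lemma~\ref{lem:skeleton}) and analyze it with a primal-dual argument applied to water-filling on the adversarial arrivals. In preprocessing, decompose $H$ into pairs $(S_i,T_i)$ with $|S_i|/|T_i| = 1-\delta_i$; this records a local ``adversarial reserve'' parameter $\delta_i$ for each offline $u \in T_i$. In the online phase, route each predicted $v \in S_i$ fractionally via the skeleton's fractional matching from Lemma~\ref{lem:skeleton}, so that after all of $V_P$ has arrived every $u \in T_i$ has fractional load exactly $1-\delta_i$. When an adversarial $v$ arrives, run ordinary water-filling on its $G$-neighbors: continuously raise the loads of the currently unsaturated neighbors at a common rate until $v$ is saturated or no neighbor has capacity left.

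For the analysis, write the standard fractional matching LP and its dual $\min\,\sum_u\alpha_u+\sum_v\beta_v$ subject to $\alpha_u+\beta_v\geq 1$ for all $(u,v)\in E_G$. For each $u\in T_i$ with final load $y_u\in[1-\delta_i,\,1]$, set
$$\alpha_u \;=\; 1 - \delta_i\exp\!\bigl(-(y_u - (1-\delta_i))/\delta_i\bigr),$$
so that $\alpha_u = 1-\delta_i$ at $y_u = 1-\delta_i$ (only predicted use) and $\alpha_u = 1-\delta_i/e$ at $y_u = 1$ (full saturation). For each online $v$, let $\beta_v$ close the gap $1-\alpha_{u^*}$ at the tightest neighbor $u^*$ of $v$ (the one $v$ fills last, in the predicted case, or the common termination level of water-filling, in the adversarial case). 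A direct differential calculation shows that the instantaneous ratio of dual growth to primal growth during water-filling in component $i$ is $e^{-(y_u-(1-\delta_i))/\delta_i}$; integrating this over $[1-\delta_i,1]$ together with the contribution from $\beta_v$ at termination yields a per-component primal-to-dual ratio of exactly $1-\delta_i e^{-\delta_i}$. Dual feasibility for edges not used by the algorithm follows from the usual water-filling invariant: any unsaturated neighbor of $v$ at termination has final load at least the common level $v$ reached, so $\alpha_u\geq \alpha_{u^*}$ and hence $\alpha_u+\beta_v\geq 1$.

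Summing the per-component bounds gives $\ALG(G)\geq \sum_i n_i(1-\delta_i e^{-\delta_i})$. Since $f(\delta)=1-\delta e^{-\delta}$ has $f''(\delta)=(2-\delta)e^{-\delta}>0$ on $[0,1]$, it is convex, so Jensen's inequality with weights $n_i/n$ and the identity $\delta = \sum_i (n_i/n)\,\delta_i$ gives $\sum_i n_i f(\delta_i)\geq n\,f(\delta) = (1-\delta e^{-\delta})\,\OPT(G)$, which proves the claim. The main obstacle is the cross-component nature of adversarial edges: a single adversarial $v$ can have neighbors in several $T_i$ with very different $\delta_i$, so the water-filling trajectory mixes heterogeneous exponentials. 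The crux is to verify that, because $\alpha_u$ is chosen to depend only on the load and the component of $u$, dual feasibility still holds across components, and the per-component aggregation is valid before one invokes the Jensen step that collects the local ratios into the global bound $1-\delta e^{-\delta}$.
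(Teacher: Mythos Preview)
Your algorithm coincides with the paper's, but the primal--dual analysis has a genuine gap---precisely the one you flag as ``the main obstacle'' and then leave unverified. The per-component ratio $1-\delta_i e^{-\delta_i}$ simply does not hold, because an adversarial $v$ can siphon water across components, driving the primal in one component below your claimed floor.

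Concretely, take $U=\{u_1,u_2,u_3\}$, $V_P=\{p_1\}$ with $p_1$ adjacent to $u_1,u_2$ and $u_3$ isolated in $H$; the skeleton has $T_0=\{u_1,u_2\}$ with $\delta_0=\tfrac12$ and $T_\infty=\{u_3\}$ with $\delta_\infty=1$. Let $a_1$ be adjacent to $\{u_1,u_3\}$ and $a_2$ to $\{u_3\}$, arriving in that order (note that $G$ does match all of $U$, via $p_1\!-\!u_2$, $a_1\!-\!u_1$, $a_2\!-\!u_3$). Water-filling sends $a_1$'s flow first to $u_3$ (up to level $\tfrac12$) and then jointly to $u_1,u_3$ (up to level $\tfrac34$); then $a_2$ fills $u_3$ to $1$. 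The final primal inside component $0$ is $y_{u_1}+y_{u_2}=\tfrac34+\tfrac12=\tfrac54$, yet your per-component assertion would require at least $2\bigl(1-\tfrac12 e^{-1/2}\bigr)\approx 1.393$. So the step ``summing the per-component bounds gives $\ALG(G)\ge \sum_i n_i(1-\delta_i e^{-\delta_i})$'' rests on a false premise: there are no valid per-component bounds to sum.

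The deeper issue is your choice of duals. Because $\alpha_u$ depends on the component's $\delta_i$, two offline nodes at the \emph{same} water level carry different $\alpha$-values; then either dual feasibility fails on the cross-component edge whose endpoint has the smaller $\alpha$, or $\beta_v$ must be inflated to close that gap, which breaks primal\,$\ge$\,dual. The paper avoids this entirely by setting the increment of $\alpha_u$ to $e^{y_u-1}\,dy$ during water-filling and choosing the initial value $e^{-\delta_i}-\delta e^{-\delta}$ so that these telescope to $\alpha_u=e^{y_u-1}-\delta e^{-\delta}$, a function of the load $y_u$ alone. Cross-component adversarial edges are then handled uniformly, and one obtains a \emph{single} global slack $\alpha_u+\beta_v\ge 1-\delta e^{-\delta}$. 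Jensen (applied once, to the concave map $x\mapsto xe^{-x}$) is used at initialization to certify primal\,$\ge$\,dual---not at the end to aggregate heterogeneous ratios as you attempt.
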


\paragraph*{Algorithm} 
Similarly to the algorithm in Section \ref{sec:structured-sampling}, we first find a skeleton decomposition of the bipartite graph $H = (U, V_P, E_H)$ using Lemma \ref{lem:app_skeleton}. Further, we use the fractional matching guaranteed by Lemma \ref{lem:app_skeleton} as the maximum matching on $H$. For any vertex $u \in U$, let $y_u$ denote the fractional degree of $u$ in the obtained fractional matching. 
%Then, by Lemma \ref{lem:app_skeleton}, we have $y_u = |S_i|/|T_i|$, for each $u \in T_i$. 

We first note that we can assume for simplicity that all vertices in $V_P$ arrive before any vertex from $V_A$ arrives. Indeed, if this is not the case, since the algorithm knows the identity of vertices in $V_P$, we can simply pretend that all vertices in $V_P$ arrive first and pretend that they get matched as per the fractional matching obtained using the skeleton decomposition. Once these vertices actually arrive in the online order, they can be assigned using the matching already found.
Finally, in the online phase, we use the standard water level algorithm~\cite{kalyanasundaram2000balance} to find a fractional matching for all the vertices in $V_A$. More formally, each online vertex $v$ is matched at a uniform rate to all of its neighbors with the least fractional degree until it is fully matched or all of its neighbors are fully matched.

% Find the skeleton decomposition and assign all predicted online vertices fractionally so that $y_u = \frac{n_i - d_i}{n_i} = 1 - \delta_i$ for all $u \in T_i$ where $y_u$ is the fractional degree of vertex $u$. In the online phase, use the water-level algorithm.

\paragraph*{Analysis} We analyze the above algorithm using the primal-dual framework~\cite{devanur2013randomized}.
Figures \ref{fig:fractional-primal} and \ref{fig:fractional-dual} show the primal and dual linear programs respectively for the fractional matching problem.

\begin{minipage}[t][60mm][t]{0.45\textwidth}
\begin{align}
  &&&\hspace{-1cm}\max \sum_{e \in E} x_e \nonumber\\
\text{s.t. }\quad \nonumber \\
  & \forall u \in U, && \displaystyle \sum_{v \in \delta(u)} x_{\{u,v\}} \leq 1  \label{eqn:LPdegree_u}\\
  & \forall v \in V, && \displaystyle \sum_{u \in \delta(v)} x_{\{u,v\}} \leq 1  \label{eqn:LPdegree_v}\\
 & \forall E \in E, && \displaystyle x_e \geq 0 \label{eqn:LP_relax}
\end{align}
\captionof{figure}{Primal-LP.}
\label{fig:fractional-primal}
\end{minipage}\hspace{5mm}
\begin{minipage}[t][60mm][t]{0.45\textwidth}
\begin{align}
  &&&\hspace{-1cm}\min \sum_{u \in U} \alpha_u + \sum_{v \in V} \beta_v \nonumber\\
\text{s.t.}\quad \nonumber \\
  & \forall \{u,v\} \in E, && \displaystyle \alpha_u + \beta_v \geq 1 \label{eqn:dual-edge-cover}\\
 & \forall u \in U, && \displaystyle \alpha_u \geq 0 \label{eqn:alpha-non-neg}\\
 & \forall v \in V, && \displaystyle \beta_v \geq 0 \label{eqn:beta-non-neg}
\end{align}
\vspace{7.5mm}
\captionof{figure}{Dual-LP.}
\label{fig:fractional-dual}
\end{minipage}

Let $\{\tilde{x}_e\}_{e \in E}$ denote the fractional matching obtained by the semi-online algorithm. By construction, we are guaranteed that the algorithm always constructs a feasible fractional matching.
We'll use a dual-fitting argument to set dual variables $\{\alpha_u\}_{u \in U}$ and $\{\beta_v\}_{v \in V}$ so that the following two conditions hold true.
\begin{enumerate}
    \item \label{prop1} The primal objective is at least the dual objective, i.e., $\sum_{e \in E} \tilde{x}_e \geq \sum_{u \in U} \alpha_u + \sum_{v \in V} \beta_v$.
    \item \label{prop2} The dual solution is almost feasible, i.e. for any edge $(u,v)$, $\alpha_u + \beta_v \geq 1 - \delta e^{-\delta}$.
\end{enumerate}

We observe that these two properties are sufficient to prove Theorem \ref{thm:frac-full}. This is because scaling up all dual variables by $\left(\frac{1}{1-\delta e^{-\delta}}\right)$ yields a feasible dual solution and thus by weak duality and condition~\ref{prop1}, 
\begin{eqnarray*}
\OPT(G) & \leq & \left(\frac{1}{1-\delta e^{-\delta}}\right)\left(\sum_{u \in U} \alpha_u + \sum_{v \in V} \beta_v\right) \\
& \leq & \left(\frac{1}{1-\delta e^{-\delta}}\right) \sum_{e \in E} \tilde{x}_e \\
& = & \left(\frac{1}{1-\delta e^{-\delta}}\right) \ALG(G),
\end{eqnarray*}
and the theorem follows.

After the preprocessing phase, we initialize the dual variables as follows. 
For each offline vertex $u \in T_i$, set $\alpha_u = e^{-\delta_i} - \delta e^{-\delta}$ where $\delta_i = 1 - \frac{|S_i|}{|T_i|}$. For each predicted vertex $v \in S_i$, set $\beta_v = 1 - e^{-\delta_i}$. (We note that since this is just the analysis and not part of the algorithm, we can use $\delta$ even if the algorithm doesn't know its value.)
The dual variables are updated in the online phase as follows. Whenever the primal $\tilde{x}_{\{u,v\}}$ increases by $dx$ for any edge $(u,v)$, also increase the corresponding dual variables: $\alpha_u$ by $e^{y_u - 1} dx$ and $\beta_v$ by $(1 - e^{y_u - 1}) dx$ where $y_u$ denotes the instantaneous fractional degree of vertex $u$.

\begin{lemma}
$\alpha_u \geq 0$ and $\beta_v \geq 0$ for all $u\in U$ and $v\in V_P$.
\end{lemma}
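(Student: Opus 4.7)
The plan is to separate the analysis into the values assigned during the preprocessing initialization and the incremental updates applied during the online phase. Since each online update adds a non-negative quantity to both $\alpha_u$ and $\beta_v$ (as I will check below), it suffices to prove that the values produced by the preprocessing initialization are non-negative and that every infinitesimal online increment is non-negative.

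For $\beta_v$, the initialization is $\beta_v = 1 - e^{-\delta_i}$ for $v \in S_i$. Because $\delta_i = 1 - |S_i|/|T_i| \geq 0$ (the skeleton decomposition guarantees $|S_i| \leq |T_i|$ on the relevant components), we have $e^{-\delta_i} \leq 1$ and hence $\beta_v \geq 0$. During the online phase each increment is $(1 - e^{y_u - 1})\,dx$, which is non-negative because the fractional degree satisfies $y_u \leq 1$ at every instant, so $e^{y_u - 1} \leq 1$.

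For $\alpha_u$ the online increment is $e^{y_u - 1}\,dx$, which is manifestly non-negative, so all the real content lies in verifying $\alpha_u = e^{-\delta_i} - \delta e^{-\delta} \geq 0$ right after initialization. My plan is to sandwich both terms against the common threshold $1/e$. The function $x \mapsto e^{-x}$ is decreasing on $[0,1]$, so $e^{-\delta_i} \geq e^{-1} = 1/e$ whenever $\delta_i \in [0,1]$. The function $g(x) = x e^{-x}$ has derivative $g'(x) = (1-x) e^{-x} \geq 0$ on $[0,1]$, so $g$ is non-decreasing and $\delta e^{-\delta} \leq g(1) = 1/e$. Chaining these two bounds yields $e^{-\delta_i} \geq 1/e \geq \delta e^{-\delta}$, which is exactly $\alpha_u \geq 0$.

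The main obstacle is this two-sided comparison in the $\alpha_u$ initialization, because the two terms depend on different quantities: $\delta_i$ is local to the skeleton component containing $u$, while $\delta$ is the global adversarial fraction, so they cannot be compared directly and must instead be bounded against a common constant. The key coincidence is that $\min_{x \in [0,1]} e^{-x} = \max_{x \in [0,1]} x e^{-x} = 1/e$, and it is precisely this coincidence that makes the chosen dual initialization compatible with the target competitive ratio $1 - \delta e^{-\delta}$ of Theorem~\ref{thm:frac-full}. The $\beta_v$ case and the monotonicity of both online updates then follow immediately from $\delta_i \geq 0$ and $y_u \in [0,1]$.
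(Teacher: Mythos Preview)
Your proof is correct and follows essentially the same approach as the paper's: verify non-negativity at initialization by bounding $e^{-\delta_i}\ge e^{-1}$ and $\delta e^{-\delta}\le e^{-1}$ separately over $[0,1]$, then observe that the online increments $e^{y_u-1}\,dx$ and $(1-e^{y_u-1})\,dx$ are non-negative since $y_u\le 1$. Your treatment is in fact slightly more explicit than the paper's, which simply states that the values ``can only increase'' in the online phase without spelling out why.
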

\begin{proof}
We verify the claim after initialization; subsequently, these values can only increase.
For $\delta_i \in [0, 1]$, $e^{-\delta_i}$ is minimized at $\delta_i=1$. For $\delta \in [0, 1]$, $(-\delta e^{-\delta})$ is minimized at $\delta=1$. Thus, the initial value of $\alpha_u \ge e^{-1} - 1 e^{-1} = 0$.
Similarly, $\beta_v$ is minimized at $\delta_i=1$, so $\beta_v \geq 1-e^0 = 0$.
\end{proof}

We now verify conditions \ref{prop1} and \ref{prop2} above.

\begin{lemma}
Condition \ref{prop2} is satisfied for the predicted edges $E_H$.
\end{lemma}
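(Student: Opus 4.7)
The plan is to show that at every moment during the execution of the algorithm, $\alpha_u + \beta_v \geq 1 - \delta e^{-\delta}$ for every edge $\{u,v\} \in E_H$. Since the dual updates during the online phase only increase both $\alpha_u$ and $\beta_v$ (the multipliers $e^{y_u-1}$ and $1-e^{y_u-1}$ are both non-negative because $y_u \in [0,1]$ at all times), it suffices to verify the inequality at the initialization and then appeal to monotonicity.

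So first I would fix any edge $\{u,v\} \in E_H$ and use the skeleton decomposition (Lemma~\ref{lem:app_skeleton}) to locate its endpoints. If $v \in S_i$, then $v \in \bigcup_{k < i+1} S_k$, and by the neighborhood property $\Gamma(\bigcup_{k<i+1} S_k) = \bigcup_{k<i+1} T_k$, we must have $u \in T_j$ for some $j \leq i$. Next I would invoke the monotonicity property guaranteed by the skeleton lemma, namely that the ratios $|S_k|/|T_k|$ are strictly decreasing in $k$, to conclude that $\delta_k = 1 - |S_k|/|T_k|$ is non-decreasing in $k$, and in particular $\delta_j \leq \delta_i$. Since $e^{-x}$ is decreasing, this gives $e^{-\delta_j} \geq e^{-\delta_i}$.

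Substituting the initial values yields
\[
\alpha_u + \beta_v \;=\; \bigl(e^{-\delta_j} - \delta e^{-\delta}\bigr) + \bigl(1 - e^{-\delta_i}\bigr) \;=\; 1 + \bigl(e^{-\delta_j} - e^{-\delta_i}\bigr) - \delta e^{-\delta} \;\geq\; 1 - \delta e^{-\delta},
\]
which is exactly condition~\ref{prop2}. Combined with the monotonicity observation from the first paragraph, this completes the proof.

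The main obstacle to anticipate is not really technical but conceptual: one must recognize that the initialization constants $e^{-\delta_i} - \delta e^{-\delta}$ and $1 - e^{-\delta_i}$ are engineered precisely so that (i) the offsets $-\delta e^{-\delta}$ and $+1$ combine to produce the target $1 - \delta e^{-\delta}$, and (ii) the remaining $e^{-\delta_j} - e^{-\delta_i}$ term has the correct sign because the endpoint in $U$ always lies in a component with a smaller (or equal) index than the endpoint in $V_P$. Once this structural correspondence between the skeleton decomposition and the dual initialization is made explicit, the inequality is immediate.
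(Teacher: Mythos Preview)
Your proof is correct and follows essentially the same approach as the paper: locate the endpoints of an edge in $E_H$ via the skeleton decomposition, use the ordering of indices to get $\delta_j \leq \delta_i$ (with your labeling), substitute the initial dual values to obtain $\alpha_u+\beta_v \geq 1-\delta e^{-\delta}$, and note that the duals only increase thereafter. The only cosmetic difference is that the paper names the indices the other way around ($u\in T_i$, $v\in S_j$ with $i\leq j$).
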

\begin{proof}
Consider an edge $uv$ and suppose that $u\in T_i$ and $v\in S_j$. By the properties of the skeleton decomposition, it must be that $i\leq j$. This implies that $\delta_i = 1- \frac{S_i}{T_i} \leq 1- \frac{S_j}{T_j} = \delta_j$.
After the initialization of the dual variables, $\alpha_u + \beta_v = e^{-\delta_i} - \delta e^{-\delta} + 1 - e^{-\delta_j} \geq 1- \delta e^{-\delta}$. During the online phase, $\alpha_u$ can only increase.
\end{proof}

\begin{lemma}
At the end of the algorithm, condition \ref{prop2} is satisfied for edges in $E_G \setminus E_H$.
\end{lemma}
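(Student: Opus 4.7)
The plan is to derive a closed-form expression for $\alpha_u$ in terms of $u$'s current fractional degree $y_u$, then split on whether the adversarial vertex $v$ ends its online step fully matched.

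First I would observe that the update rule $d\alpha_u = e^{y_u - 1}\,dy_u$ integrates to $\alpha_u = \alpha_u^0 + e^{y_u - 1} - e^{y_u^0 - 1}$. For $u \in T_i$ with $i \geq 0$, the initial fractional degree coming out of the preprocessing fractional matching is $y_u^0 = |S_i|/|T_i| = 1-\delta_i$, so $e^{y_u^0 - 1} = e^{-\delta_i}$, which exactly cancels the $e^{-\delta_i}$ term in the initialization $\alpha_u^0 = e^{-\delta_i} - \delta e^{-\delta}$. This yields the clean invariant $\alpha_u = e^{y_u - 1} - \delta e^{-\delta}$ valid throughout the online phase. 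For $u \in T_i$ with $i < 0$ (or $u \in T_\infty$), the initial value $\alpha_u^0$ is already at least $1 - \delta e^{-\delta}$ and never decreases, so such vertices pose no difficulty and can be dispatched immediately.

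Now fix an edge $\{u,v\} \in E_G \setminus E_H$ and assume $u \in T_i$ with $i \geq 0$. If $v$ is not fully matched at the end of its online step, then by the definition of water-filling every neighbor of $v$ in $U$ must already have $y_u = 1$; hence $\alpha_u = e^{0} - \delta e^{-\delta} = 1 - \delta e^{-\delta}$, and the desired inequality follows from $\beta_v \geq 0$.

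In the main case, $v$ is fully matched. Let $y^*(z)$ denote the water level after $v$ has assigned a total fractional mass $z \in [0,1]$ to its neighbors, and write $y^* := y^*(1)$. Two observations suffice. First, by the water-filling rule every neighbor $u$ of $v$ satisfies $y_u \geq y^*$ at the end of $v$'s step---either $u$ was raised precisely to the water level $y^*$, or $u$'s initial degree already exceeded $y^*$ and was left untouched---and $y_u$ can only grow during later steps. Second, the dual update for $v$ gives $\beta_v = \int_0^1 \bigl(1 - e^{y^*(z)-1}\bigr)\,dz$, and since $y^*(\cdot)$ is nondecreasing this integral is at least $1 - e^{y^*-1}$. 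Plugging these into the closed form for $\alpha_u$ yields
\begin{equation*}
\alpha_u + \beta_v \;\geq\; \bigl(e^{y_u-1} - \delta e^{-\delta}\bigr) + \bigl(1 - e^{y^*-1}\bigr) \;\geq\; 1 - \delta e^{-\delta},
\end{equation*}
where the last step uses $y_u \geq y^*$. The main technical step is formalizing the two water-filling facts above (the pointwise comparison $y_u \geq y^*$ and the lower bound $\beta_v \geq 1 - e^{y^*-1}$); once those are established, the remaining algebra is routine.
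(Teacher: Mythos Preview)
Your proof is correct and follows essentially the same approach as the paper: both derive the closed form $\alpha_u = e^{y_u-1} - \delta e^{-\delta}$ by integrating the update rule from the initial degree $1-\delta_i$, and both finish by combining this with a water-filling bound on $\beta_v$. The only cosmetic difference is the case split: the paper splits on whether $u$ is full (and when it is not, concludes directly that $v$ is fully allocated to vertices of degree at most $y_u$, giving $\beta_v \ge 1 - e^{y_u-1}$), whereas you split on whether $v$ is fully matched and route the bound through the intermediate final water level $y^*$. The paper's split is marginally shorter since it avoids introducing $y^*$, but the content is the same.

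One small correction: your parenthetical claim that $\alpha_u^0 \ge 1 - \delta e^{-\delta}$ for $u \in T_\infty$ is not right; an isolated $u$ has $\delta_\infty = 1$ and $\alpha_u^0 = e^{-1} - \delta e^{-\delta}$, which is generally below $1 - \delta e^{-\delta}$. This does not affect your argument, however, since such vertices have $y_u^0 = 0 = 1 - \delta_\infty$ and are therefore already covered by your $i \ge 0$ closed-form analysis.
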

\begin{proof}
Consider an arbitrary online edge $(u,v)$. We have two cases, either $u$ is full (has fractional degree 1) or it is not.
If $u$ is full, then we have $\alpha_u = e^{-\delta_i} - \delta e^{-\delta} + \int_{1-\delta_i}^1 e^{x-1} dx = 1 - \delta e^{-\delta}$. On the other hand, if $u$ is not full, let $y_u$ denote its final fractional degree. Then $\alpha_u = e^{-\delta_i} - \delta e^{-\delta} + \int_{1-\delta_i}^{y_u} e^{x-1} dx = e^{y_u - 1} - \delta e^{-\delta}$. But in this case, notice that $v$ must be fully allocated and further should be allocated only to offline vertices whose fractional degree (at the time) was less than $y_u$. Thus, $\beta_v \geq 1 - e^{y_u - 1}$. Thus, $\alpha_u + \beta_v \geq 1 - \delta e^{-\delta}$ as desired.
\end{proof}

\begin{lemma}
Condition \ref{prop1} holds throughout the algorithm.
\end{lemma}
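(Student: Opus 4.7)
The plan is to split the argument into two pieces. The easy piece is maintenance during the online phase: whenever the water-filling algorithm increases $\tilde x_{\{u,v\}}$ by $dx$, the primal objective grows by $dx$ while the dual objective grows by $e^{y_u-1}\,dx + (1 - e^{y_u-1})\,dx = dx$. So the primal and dual move in lockstep once online arrivals begin, and any slack established at initialization is preserved throughout.

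All the real work therefore lives in verifying Condition \ref{prop1} immediately after preprocessing. The primal value after preprocessing equals the value of the skeleton fractional matching, which is $\sum_i |S_i| = |V_P| = n-d$. On the dual side, plugging in the initialization values and collecting terms (separating the $e^{-\delta_i}$ contributions coming from $T_i$ and $S_i$, and pulling out the $\delta e^{-\delta}$ piece) yields
\[
\sum_u \alpha_u + \sum_v \beta_v = (n-d) + \sum_i d_i e^{-\delta_i} - \delta n\, e^{-\delta},
\]
where $n_i = |T_i|$, $d_i = n_i - |S_i|$, and $\delta_i = d_i/n_i$. Condition \ref{prop1} therefore reduces to the single clean inequality $\sum_i n_i \delta_i e^{-\delta_i} \leq n \delta\, e^{-\delta}$.

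The key observation is that this is exactly Jensen's inequality for the function $f(x) = x e^{-x}$. Since $f''(x) = (x-2)e^{-x} < 0$ on $[0,1]$, $f$ is concave there. Taking weights $n_i/n$ (which sum to $1$) and using $\sum_i (n_i/n)\,\delta_i = (\sum_i d_i)/n = d/n = \delta$, Jensen yields $\sum_i (n_i/n)\, f(\delta_i) \leq f(\delta) = \delta e^{-\delta}$; multiplying by $n$ produces the desired inequality.

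The main conceptual obstacle is recognizing that the initial slack between primal and dual condenses into a Jensen statement on $xe^{-x}$; once the concavity of this function is identified, everything else is routine bookkeeping, and the lockstep maintenance during the online phase is automatic from the chosen dual update rule. The computation above implicitly uses that all skeleton components satisfy $|S_i| \leq |T_i|$, i.e.\ $\delta_i \in [0,1]$, which matches the standing assumption of Section~\ref{sec:frac-pd} and is exactly the regime in which the earlier positivity lemma for $\alpha_u,\beta_v$ was verified.
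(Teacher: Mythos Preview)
Your proof is correct and follows essentially the same approach as the paper: both arguments observe that the online updates increase primal and dual by the same amount $dx$, reduce the initialization inequality to $\sum_i d_i e^{-\delta_i} \le d e^{-\delta}$, rewrite it as $\sum_i (n_i/n)\,\delta_i e^{-\delta_i} \le \delta e^{-\delta}$, and finish with Jensen applied to the concave function $f(x)=xe^{-x}$ on $[0,1]$. Your explicit computation of the initial dual sum and your remark that the argument relies on $\delta_i\in[0,1]$ are slightly more detailed than the paper's presentation, but the mathematical content is identical.
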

\begin{proof}
During the online phase, each increase of $dx$ in the primal is accompanied by an equal increase of $e^{y_u - 1} dx + (1 - e^{y_u - 1}) dx$  in the sum of the dual variables, maintaining the inequality.

Now, the only part of the argument required to complete the proof is to show that condition \ref{prop1} is satisfied by the initial allocation of the dual values. We need to show that
\begin{align*}
    \text{total initial dual} &\leq \text{total initial primal (matching)}\\
    \Leftrightarrow \sum_{i} \left[ n_i \left( e^{-\delta_i} - \delta  e^{-\delta}\right) + (n_i - d_i) (1 - e^{-\delta_i})\right] &\leq \sum_i (n_i - d_i)\\
    \Leftrightarrow \sum_{i} d_i e^{-\delta_i} &\leq d e^{-\delta}\\
    \Leftrightarrow \sum_{i} \dfrac{d_i}{n} e^{-\delta_i} &\leq \delta e^{-\delta}.
    \intertext{But the above inequality is true by Jensen's (or definition of concavity). Consider}
    \mathrm{LHS} &= \sum_{i} \dfrac{d_i}{n} e^{-\delta_i}\\
    &= \sum_{i} \dfrac{n_i}{n} \delta_i e^{-\delta_i}\\
    \intertext{Consider $f(x) = x e^{-x}$. $f$ is concave on the interval $(-\infty, 2)$, which includes the possible values of $\delta_i \in [0, 1]$; so applying $\sum_i \alpha_i f(x_i) \leq f(\sum_i \alpha_i x_i)$, we get}
    &\leq f(\sum_i \dfrac{n_i}{n} \delta_i) = f(\delta)\\
    &= \mathrm{RHS}.
\end{align*}
This completes the proofs of the lemma and the theorem.
\end{proof}

\subsection{Fractional matching for arbitrary bipartite graphs} \label{sec:frac-gen}

\paragraph*{Algorithm} 
Let ALG be the following algorithm. In the preprocessing phase, find the skeleton decomposition and assign all predicted online vertices fractionally so that $y_u = \frac{n_i - d_i}{n_i} = 1 - \delta_i$ for all $u \in T_i$ where $y_u$ is the fractional degree of vertex $u$. In the online phase, use the water-level algorithm.

\paragraph*{Analysis} 
% \begin{theorem}
% For instances in which the optimal solution fully matches all nodes in $U$, $ALG \geq (1 - \delta e^{-\delta}) \cdot OPT$.
% \end{theorem}
% \begin{proof}
% See previous section.
% \end{proof}

%Note that, by integrality of maximum matchings, we can assume that the optimal solution to an instance of the fractional matching problem assigns all edges to an extent of $0$ or $1$. 
Among integral optimal matchings for $G$, let $M^*$ be one that maximizes the matching size in $H$. 
%By Lemma ?, OPT restricted to $H$ is a maximum matching on $H$.
We define $U' \subseteq U$ to be the set of offline nodes matched by $M^*$, and $G'$ be the instance just like $G$, except with nodes in $U\setminus U'$ and their incident edges removed.

\begin{lemma}
$\OPT(G') = \OPT(G)$.
\end{lemma}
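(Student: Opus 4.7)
The plan is to prove both inequalities $\OPT(G') \leq \OPT(G)$ and $\OPT(G') \geq \OPT(G)$ separately, both of which are essentially immediate from the definitions; no quantitative estimate is needed.

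For the upper bound $\OPT(G') \leq \OPT(G)$, I would observe that $G'$ is obtained from $G$ only by deleting nodes (those in $U \setminus U'$) and their incident edges. Therefore any (fractional) matching in $G'$ is also a (fractional) matching in $G$ when extended by setting $x_e = 0$ on all deleted edges: vertex-degree feasibility is preserved because removing edges can only decrease fractional degrees. This gives $\OPT(G') \leq \OPT(G)$.

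For the lower bound $\OPT(G') \geq \OPT(G)$, the key point is the choice of $M^*$. By definition, $U'$ is the set of offline nodes matched by $M^*$, so every edge of $M^*$ has both endpoints in $V \cup U'$ and hence is present in $G'$. Thus $M^*$ is itself a feasible (integral, and therefore fractional) matching in the smaller instance $G'$, giving $\OPT(G') \geq |M^*| = \OPT(G)$.

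There is no real obstacle here: the only subtlety is that one must make sure the ``removal'' operation really does leave $M^*$ intact, and this is guaranteed purely by the definition of $U'$. The reason this lemma is stated at all, I expect, is to set up the next step of the argument in Section~\ref{sec:frac-gen}, where one wants to apply the analysis of Section~\ref{sec:frac-pd} (which required the hypothesis that $U$ is fully matched in $\OPT$) to the reduced instance $G'$, where by construction $U'$ is indeed fully matched by $M^*$. So in the write-up I would keep the proof to two short displayed inequalities and emphasize that the choice of $M^*$ maximizing $|M^* \cap E(H)|$ (from Lemma~\ref{lem:app_matching-extension}) is what makes the reduction to $G'$ compatible with the subsequent monotonicity argument relating $\ALG(G)$ to $\ALG(G')$.
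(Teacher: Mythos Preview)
Your proposal is correct and takes essentially the same approach as the paper: the paper's proof simply notes that $M^*$ does not use the deleted nodes $U\setminus U'$ and hence remains feasible for $G'$, leaving the trivial direction $\OPT(G')\le\OPT(G)$ implicit. Your additional remark about $M^*$ maximizing $|M^*\cap E(H)|$ is not needed for this lemma (only that $M^*$ is optimal and $U'$ is its matched set); that extra property is what powers the \emph{next} lemma, Lemma~\ref{lem:size_h}.
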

\begin{proof}
$M^*$ does not match the nodes $U\setminus U'$ that were removed, so the same solution is feasible for $G'$.
\end{proof}

\begin{lemma}\label{lem:size_h}
The size of a maximum matching on $H$ is equal to that on $H'$.
\end{lemma}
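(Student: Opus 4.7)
The plan is to show the two inequalities $\nu(H') \le \nu(H)$ and $\nu(H') \ge \nu(H)$ separately, where the first is trivial and the second is the content of the lemma.

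For the easy direction, since $H'$ is the subgraph of $H$ obtained by deleting the nodes $U \setminus U'$ and their incident edges, any matching in $H'$ is also a matching in $H$, hence $\nu(H') \le \nu(H)$.

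For the reverse direction, the key observation is that $M^*$ was chosen to be an optimal integral matching of $G$ that \emph{maximizes the number of edges lying in $H$}. I will combine this with Lemma~\ref{lem:app_matching-extension}, which guarantees the existence of at least one maximum matching $\hat{M}$ of $G$ whose restriction $\hat{M}\cap E(H)$ is already a maximum matching of $H$, i.e.\ $|\hat{M}\cap E(H)| = \nu(H)$. By the choice of $M^*$ we then have $|M^* \cap E(H)| \ge |\hat{M} \cap E(H)| = \nu(H)$, so $M^* \cap E(H)$ is itself a maximum matching of $H$.

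Finally, every edge of $M^* \cap E(H)$ is incident only with offline nodes matched by $M^*$, i.e.\ only with nodes in $U'$. Hence $M^* \cap E(H)$ survives the deletion $U \setminus U'$ and remains a matching in $H'$, which gives $\nu(H') \ge |M^*\cap E(H)| = \nu(H)$. Combining the two inequalities completes the proof. The only potentially subtle step is invoking the tie-breaking choice of $M^*$ together with Lemma~\ref{lem:app_matching-extension} to conclude that $M^* \cap E(H)$ attains $\nu(H)$; everything else is immediate from the fact that $M^*$ does not touch $U \setminus U'$.
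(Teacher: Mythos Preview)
Your proof is correct and follows essentially the same approach as the paper: both invoke Lemma~\ref{lem:app_matching-extension} together with the tie-breaking choice of $M^*$ to conclude that $M^*\cap E(H)$ is a maximum matching of $H$, and then observe that this matching survives in $H'$ since $M^*$ touches only $U'$. Your version simply makes the two inequalities and the invocation of the tie-breaking rule more explicit.
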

\begin{proof}
 By Lemma \ref{lem:app_matching-extension} and the choice of $M^*$, $M^*$ restricted to $H$ is a maximum matching in $H$. Since all endpoints of $M^*$ are preserved in $H'$, the size of the maximum matching is unaffected.
\end{proof}

The following lemma is proved in section \ref{sec:proof-alg-monot}.
\begin{lemma}\label{lem:alg-monot}
$\ALG(G) \geq \ALG(G')$.
\end{lemma}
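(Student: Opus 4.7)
The plan is to reduce the inequality $\ALG(G) \geq \ALG(G')$ to a monotonicity property of the water-level algorithm, once we separate out the contribution of the preprocessing phase. First I would observe that by Lemma~\ref{lem:size_h}, the preprocessing on $H$ and on $H'$ produce fractional matchings of the same total size $\nu(H)=\nu(H')$. Hence any gap between $\ALG(G)$ and $\ALG(G')$ must arise from the online phase, and the question reduces to comparing two water-level executions that agree in total preprocessing mass but may distribute that mass differently over the offline side.

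Next I would couple the two runs by letting the adversarial arrivals appear in the same order. The neighborhood of each adversarial vertex $v$ in $G$ is a superset of its neighborhood in $G'$, so intrinsically $v$ has at least as much capacity available in the $G$-run. The subtlety is that the shared offline nodes $U'$ may already be more saturated in $G$'s run than in $G'$'s run at the moment $v$ arrives, because the skeleton decompositions of $H$ and of $H'=H[U']$ can partition the preprocessing mass over $U'$ differently.

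The core technical step is a monotonicity lemma for water-level: given two instances with identical adversarial arrival sequences where instance~(a) has a superset of offline nodes and an initial residual-capacity profile that dominates instance~(b) in a suitable Hall-type sense on every subset of the shared offline side, water-level on~(a) ends with at least as much total matching as on~(b). I would prove this lemma by induction on the arrivals, maintaining the dominance as an invariant through each water-level update.

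To close the loop I would verify that the required dominance holds between $H$'s and $H'$'s preprocessing states. By the choice of $M^*$ together with Lemma~\ref{lem:app_matching-extension}, every vertex in $U\setminus U'$ must lie in a skeleton block $T_i$ with $|S_i|\leq |T_i|$ (i.e.\ index $i\geq 0$) or in $T_\infty$; consequently each such vertex has initial fractional degree strictly below $1$ in $G$'s run and carries surplus residual capacity that exactly compensates for any extra preprocessing mass placed on $U'$ in $G$'s decomposition relative to $G'$'s. The hard part will be cleanly stating and proving the dominance invariant so that it is both preserved by each water-level step and implied by the structural relationship between the skeleton decompositions of $H$ and $H'$; I expect this to require a careful comparison of the $(S_i,T_i)$ blocks of $H$ with those of $H'$ obtained by deleting $U\setminus U'$.
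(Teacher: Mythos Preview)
Your high-level structure matches the paper's: equal preprocessing mass (via Lemma~\ref{lem:size_h}), then an inductive dominance invariant maintained through the water-level steps on the coupled arrival sequence. Where you diverge is in the \emph{form} of the invariant and in how the base case is established.

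The paper does not use a Hall-type, subset-aggregated dominance. It proves the stronger and much simpler \emph{pointwise} property~(\ref{prop}): for every $u\in U'$, the water level of $u$ in the $G$-run is at most its level in the $G'$-run. Pointwise dominance is straightforward to carry through a water-level step (the paper's Lemma~\ref{lem:level-ind} is a short computation), and immediately yields that each arriving $v$ is matched at least as much in $G$ (Lemma~\ref{lem:match-more}). Your proposed subset invariant would be harder both to state precisely and to propagate, and is unnecessary once you notice that pointwise comparison already holds.

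The main technical content you have not yet pinned down is the base case: why, after preprocessing, every $u\in U'$ has water level $|S_i|/|T_i|$ in $H$ at most $|S'_{i'}|/|T'_{i'}|$ in $H'$. The paper's Lemma~\ref{lem:decomp-mon} proves exactly this by viewing the two skeleton fractional matchings as flows $f,f'$ of equal total value, decomposing the difference $f-f'$ into paths, and deriving a contradiction from the monotone ordering of skeleton ratios along any path that would witness $f(u)>f'(u)$. Your observation that $U\setminus U'$ lies in blocks with $i\ge 0$ is correct and relevant, but the ``exactly compensates'' heuristic does not by itself give the pointwise inequality; you need the flow-difference argument (or something equivalent) to compare the two decompositions node by node.
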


\begin{theorem}
$\ALG$ is a $(1 - \delta e^{-\delta})$-approximation for the semi-online fractional matching problem.
\end{theorem}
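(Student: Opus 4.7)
The plan is to reduce the general case to the special case handled by Theorem~\ref{thm:frac-full}, by running $\ALG$ on the trimmed instance $G'$ and invoking the monotonicity property proved in Lemma~\ref{lem:alg-monot}.

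First, I would observe that $G'$ satisfies the hypothesis of Theorem~\ref{thm:frac-full}. By the definitions of $U'$ and $G'$, the restriction of $M^*$ to $G'$ is still a matching, and it saturates every vertex of the remaining offline side $U'$. So $G'$ admits a matching covering all of its offline vertices, which is exactly what Theorem~\ref{thm:frac-full} requires.

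Next, I would verify that the parameter $\delta$ is preserved, so that Theorem~\ref{thm:frac-full} applied to $G'$ yields the bound with the same exponent as claimed for $G$. Since $M^*$ is a valid matching in $G'$, we have $\nu(G') \geq |M^*| = \nu(G)$; the reverse inequality is immediate from $G' \subseteq G$, giving $\nu(G') = \nu(G)$. Combined with Lemma~\ref{lem:size_h}, which gives $\nu(H') = \nu(H)$, this yields
$$\delta(G') \;=\; 1 - \nu(H')/\nu(G') \;=\; 1 - \nu(H)/\nu(G) \;=\; \delta.$$

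Applying Theorem~\ref{thm:frac-full} to $G'$ then gives $\ALG(G') \geq (1 - \delta e^{-\delta})\,\OPT(G')$. The theorem follows by chaining the inequalities:
$$\ALG(G) \;\geq\; \ALG(G') \;\geq\; (1 - \delta e^{-\delta})\,\OPT(G') \;=\; (1 - \delta e^{-\delta})\,\OPT(G),$$
where the first inequality is Lemma~\ref{lem:alg-monot} and the final equality is the already-established $\OPT(G') = \OPT(G)$.

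All of the genuinely nontrivial analytical work is packaged inside Theorem~\ref{thm:frac-full} (the primal-dual analysis of the water-level algorithm on the saturated case) and Lemma~\ref{lem:alg-monot} (monotonicity of $\ALG$ under the deletion of offline vertices that the optimum does not use). Granted these, the argument here is essentially bookkeeping to confirm that the hypothesis of Theorem~\ref{thm:frac-full} carries over and that $\delta$ is unchanged when passing from $G$ to $G'$; no substantive obstacle remains at this stage.
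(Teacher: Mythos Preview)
Your proof is correct and follows essentially the same approach as the paper: define $G'$, apply Theorem~\ref{thm:frac-full} to it, and chain the inequalities via Lemma~\ref{lem:alg-monot} and $\OPT(G')=\OPT(G)$. Your explicit verification that $\delta(G')=\delta(G)$ is a detail the paper leaves implicit, but otherwise the arguments coincide.
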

\begin{proof}
For a given instance $G$, we define the instance $G'$ as above. ($G'$ is only used for analysis and is not needed by the algorithm.) Then by the preceding lemmas and Theorem \ref{thm:frac-full} applied to $G'$, we have
$$\ALG(G) \geq \ALG(G') \geq (1 - \delta e^{-\delta}) \cdot \OPT(G') = (1 - \delta e^{-\delta}) \cdot \OPT(G).$$
\end{proof}

\subsubsection{Proof of Lemma \ref{lem:alg-monot}} \label{sec:proof-alg-monot}
We need to show that when our algorithm ALG runs on an instance $G$, which has a superset of offline nodes compared to $G'$, then it finds a fractional matching that is at least as big. To do this, we compare the runs of ALG on $G$ and $G'$ step by step.
Let us say that the \emph{water level} of an offline node $u$ at a particular point in the algorithm is the extent to which it has been matched to online nodes up to that point. Recall that ALG fractionally matches $u$ in the preprocessing step, and then possibly increases its water level some more in the online phase. We consider the following key property.

%\noindent \textbf{Property ($*$):} Water level of each node $u\in U'$ is lower for $G$ than it is for $G'$. \tag{$*$}

\begin{equation}\tag{$*$}\label{prop}
    \text{Water level of each node $u\in U'$ is lower for $G$ than it is for $G'$}
\end{equation}

\noindent Then, in a sequence of proofs, we develop the following argument.

\begin{enumerate}
    \item The sizes of fractional matchings found by the preprocessing phase in $G$ and $G'$ are equal (Lemma \ref{lem:init-eq}).
    \item Property (\ref{prop}) holds after the preprocessing phase (Corollary \ref{cor:init-level}).
    \item If (\ref{prop}) holds at the beginning of a step of the water-filling algorithm in which a node $v\in V_P$ is processed, then:
    \begin{enumerate}
        \item The extent to which $v$ is matched in $G$ is the same or higher than it is in $G'$ (Lemma~\ref{lem:match-more}).
        \item (\ref{prop}) still holds at the end of this step (Lemma \ref{lem:level-ind}).
    \end{enumerate}
\end{enumerate}

Given these results, Lemma \ref{lem:alg-monot} easily follows: the sizes of the two fractional matchings are the same after the preprocessing phase, and the one in $G$ increases at least as much as in $G'$ with each step of the online phase.

The rest of the section proves the above claims.

\begin{lemma}\label{lem:init-eq}
The sizes of fractional matchings found by the preprocessing phase in $G$ and $G'$ are equal.
\end{lemma}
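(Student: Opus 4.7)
The plan is short because the preprocessing phase is essentially fixed: it just runs the skeleton decomposition on the predicted graph and assigns fractional weights so that each $u \in T_i$ reaches fractional degree $1 - \delta_i$. So the first step is to simply compute the total fractional mass produced by this procedure.

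Concretely, when ALG runs on $G$, I would note that the fractional matching it outputs after preprocessing has total size
\[
\sum_{u \in U} y_u \;=\; \sum_i |T_i|\cdot\bigl(1-\delta_i\bigr) \;=\; \sum_i |T_i|\cdot\tfrac{|S_i|}{|T_i|} \;=\; \sum_i |S_i|,
\]
together with the analogous contribution from negatively-indexed blocks (using the matching guaranteed by Lemma~\ref{lem:app_skeleton} on those blocks); this total coincides with $\nu(H)$. The same computation applied to $H'$ (the predicted subgraph of $G'$) yields total size $\nu(H')$.

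The final step is to invoke Lemma~\ref{lem:size_h}, which tells us $\nu(H) = \nu(H')$, so the two fractional matchings have the same size. The only thing to be careful about is to confirm that the preprocessing really does produce a matching of size $\nu(H)$ (and not something smaller because, for example, $U'\subsetneq U$ causes the decomposition of $H'$ to have different block sizes); this is exactly the content of Lemma~\ref{lem:size_h}, so once that lemma is available the conclusion is immediate. There is no real obstacle here—this lemma is essentially a one-line consequence of Lemma~\ref{lem:size_h} together with the identity that the skeleton-based preprocessing outputs a maximum fractional matching on the predicted graph.
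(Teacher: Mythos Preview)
Your proposal is correct and follows essentially the same approach as the paper: both argue that the preprocessing phase outputs a fractional matching of size $\nu(H)$ (respectively $\nu(H')$) and then invoke Lemma~\ref{lem:size_h} to conclude equality. You simply unpack the computation of $\sum_u y_u$ over the skeleton blocks in more detail than the paper does, but the logical structure is identical.
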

\begin{proof}
In the preprocessing phase, ALG selects a fractional matching whose size is equal to the maximum matching on $H$. 
%$U'$ was defined in such a way that 
By Lemma \ref{lem:size_h}, the maximum matching on $H$ is equal to that on $H'$. Thus, the preprocessing phase of ALG finds equal-sized matchings in both cases.
\end{proof}

\begin{lemma}\label{lem:decomp-mon}
Consider the bipartite graphs $H=(U, V_P, E)$ and $H'=(U', V_P, E')$ where $U'\subseteq U$ as described above. 
Let $(\{S_i\}, \{T_i\})$ be the skeleton decomposition of $H$ and $(\{S'_i\}, \{T'_i\})$ be the skeleton decomposition of $H'$.
For any node $u \in U'$, if $u$ is in the components $T_i$ and $T'_{i'}$ of the two decompositions respectively, then
$\frac{|S_i|}{|T_i|} \leq \frac{|S'_{i'}|}{|T'_{i'}|}$.
\end{lemma}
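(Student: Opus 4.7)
The plan is to use a ``densest-subset greedy'' characterization of the skeleton decomposition implicit in Lemma~\ref{lem:app_skeleton}: the first component $(S_a, T_a)$ is obtained by selecting the subset $S \subseteq V_P$ maximizing $|S|/|\Gamma_H(S)|$ (taking the inclusion-wise union of all maximizers to get a canonical choice), setting $T_a = \Gamma_H(S)$; subsequent components arise by recursing on $H$ with this component removed. Under this characterization, for $u \in T_i$, the ratio $|S_i|/|T_i|$ is exactly the density value selected at the iteration when $u$'s component is designated. I would first verify that this iterative procedure produces the unique decomposition guaranteed by Lemma~\ref{lem:app_skeleton}, which follows from the strict ordering $|S_a|/|T_a| > \cdots > |S_b|/|T_b|$ together with $\Gamma_H(\bigcup_{i<j} S_i) = \bigcup_{i<j} T_i$.

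The central monotonicity fact is that set density can only weakly increase under removal of offline nodes: for any fixed $S \subseteq V_P$, $\Gamma_{H'}(S) = \Gamma_H(S)\cap U' \subseteq \Gamma_H(S)$, so $|S|/|\Gamma_{H'}(S)| \geq |S|/|\Gamma_H(S)|$. Armed with this, I would argue the lemma by contradiction. Suppose $|S_i|/|T_i| > |S'_{i'}|/|T'_{i'}|$. Consider the candidate set $R = S_{\leq i} \setminus \bigcup_{j<i'} S'_j$ in the graph remaining at iteration $i'$ of the greedy on $H'$. Because $\Gamma_H(S_{\leq i}) = T_{\leq i}$ by Lemma~\ref{lem:app_skeleton}, $\Gamma_{H'}(R) \subseteq T_{\leq i} \cap U'$. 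I would then establish $|R|/|\Gamma_{H'}(R) \setminus \bigcup_{j<i'} T'_j| \geq |S_i|/|T_i|$, which contradicts that the greedy on $H'$ selected a subset of density exactly $|S'_{i'}|/|T'_{i'}|$ at iteration $i'$ (and hence nothing denser was available).

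The crucial counting step for the density bound on $R$ uses the fractional matching guaranteed by Lemma~\ref{lem:app_skeleton} applied to $H'$. Each $v \in S_{\leq i} \cap \bigcup_{j<i'} S'_j$ is matched to unit extent via $H'$-edges landing in $\bigcup_{j<i'} T'_j$. Since $v \in S_{\leq i}$, its $H$-neighbors, and therefore its $H'$-neighbors, lie in $T_{\leq i}$. Hence $|T_{\leq i} \cap U' \cap \bigcup_{j<i'} T'_j| \geq |S_{\leq i} \cap \bigcup_{j<i'} S'_j|$, i.e.\ the $U'$-prefix absorbed is at least as large as the $V_P$-prefix absorbed. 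Combined with the averaging bound $|S_{\leq i}|/|T_{\leq i}| \geq |S_i|/|T_i|$ (a weighted-mean consequence of $|S_j|/|T_j| \geq |S_i|/|T_i|$ for all $j\leq i$), a standard mediant-of-fractions argument yields the required lower bound on the density of $R$ in the remaining graph.

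The main obstacle is this final counting/mediant step: translating the two ingredients (neighborhood containment in $T_{\leq i} \cap U'$ and the ``prefix overlap'' inequality above) into the precise density bound on $R$. There are also a few edge cases to handle, including when $R$ or its restricted neighborhood is empty and when intermediate components are indexed so that nodes from $T^-$ enter the picture, but the hypothesis $u\in T'_{i'}$ with $u\in U'$ together with the restriction to $i\geq 0$ (as $u$ is not isolated) keeps these tractable. Once the single offline-node case is established, the general statement follows by iterating over $U\setminus U'$ and applying the inductive hypothesis.
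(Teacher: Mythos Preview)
Your approach via the densest-subgraph greedy characterization is genuinely different from the paper's, which never invokes that description. The paper instead views the canonical fractional matchings of $H$ and $H'$ (from Lemma~\ref{lem:app_skeleton}) as flows $f,f'$ directed $V_P\to U$, notes that $|f|=|f'|$ by Lemma~\ref{lem:size_h}, and decomposes the difference $h=f-f'$ into paths and cycles. Any $u\in U'$ with $f(u)>f'(u)$ would be the terminus of some path in this decomposition starting at a node $w$ with $f'(w)>f(w)$; chasing that path and using only the skeleton fact ``no edge from $S_i$ to $T_j$ with $i<j$'' in both $H$ and $H'$ forces $f'(u)\ge f'(w)>f(w)\ge f(u)$, a contradiction. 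No set-density comparison is needed.

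Your sketch, by contrast, has a real gap exactly at the step you flag. From $|S_{\le i}|/|T_{\le i}|\ge\rho:=|S_i|/|T_i|$ together with your prefix-overlap bound, the density of $R$ in the residual graph is only shown to be at least $(|S_{\le i}|-c)/(|T_{\le i}|-c)$ with $c=|S_{\le i}\cap\bigcup_{j<i'}S'_j|$. When $\rho\le 1$ (i.e.\ $i\ge 0$, the relevant regime here), subtracting the same $c$ from numerator and denominator \emph{lowers} the fraction, so you cannot conclude it remains $\ge\rho$; already when $i$ is the first index the prefix ratio equals $\rho$ exactly and any $c>0$ breaks the bound. Two further issues: your ``matched to unit extent'' claim for $v\in\bigcup_{j<i'}S'_j$ requires $j\ge 0$ in the $H'$-decomposition, which is not guaranteed, so even the overlap inequality $|B'_1|\ge|A_1|$ is unjustified in general; and the one-vertex-at-a-time induction does not preserve $\nu(H)=\nu(H')$ at intermediate steps, which is what makes the two decompositions comparable in the first place. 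The paper's flow-path argument sidesteps all of this by comparing levels along a single alternating path rather than densities of candidate sets.
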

\begin{proof}
Consider the fractional matching guaranteed by Lemma \ref{lem:app_skeleton} on the graph $H$ and view it as a flow directed from nodes in $V_P$ to nodes in $U$. Let $f$ be that flow, and $f'$ be the analogous one for $H'$.
%such that each $v\in S_i$ sends one unit of flow to nodes in $T_i$ and each $u\in T_i$ receives $\frac{|S_i|}{|T_i|}$ units of flow from nodes in $S_i$. Such a flow exists by the definition of a skeleton decomposition for $H$. Let $f'$ be a flow defined similarly using the skeleton decomposition of $H'$. 
For convenience, we let $f'$ be defined on the whole graph $H$, with edges in $E\setminus E'$ having zero flow. The total amounts of flow in $f$ and $f'$ are equal to the size of the maximum matchings in $H$ and $H'$ respectively, which are equal for the two graphs by Lemma \ref{lem:size_h}.

Define $f(u) = \sum_{v} f(v,u)$ for $u\in U$ to be the total flow to node $u$, and similarly  $f(v)$ for $v\in V_P$, $f'(u)$, and $f'(v)$. Let $\ell(u)$ be the index of the component in the skeleton decomposition of $H$ that contains $u$, i.e.\ $\ell(u) = i$ if $u\in S_i \cup T_i$. Then $\ell'(u)$ is defined similarly for $H'$. With this notation, the lemma states that for all $u\in U'$, $f(u) \leq f'(u)$.

We consider the difference between the two flows, $h = f - f'$. In particular, for an edge $(v,u)$, if $f(v, u) \geq f'(v, u)$, then $h(v, u) = f(v, u) - f'(v, u)$. If $f(v, u) < f'(v, u)$, then $h$ has flow in the opposite direction: $h(u, v) = |f(v, u) - f'(v, u)|$.
Thus, $f'$ can be transformed into $f$ by adding $h$ to it. 
The flow $h$ can be decomposed into a collection of cycles and maximal paths, such that each path in the decomposition starts at a node $w\in U'$ with $f(w) < f'(w)$ and ends at some $u \in U$ with $f(u) > f'(u)$. Note that these paths do not start or end on the $V_P$ side, as for all $v\in V_P$, $f(v)=f'(v)$.

Suppose for contradiction that there is a node $u\in U'$ such that $f(u) > f'(u)$. Then there must be a directed path in the decomposition of $h$ from some node $w$ to $u$. We consider here the case that this path consists of just two edges, $(w,v)$ and $(v,u)$, for some $v\in V_P$, but the same argument can be applied inductively for the case of a longer path. We make a series of claims that lead to a contradiction, thus proving the lemma.

\begin{enumerate}
    \item \label{ineq1}
    $f(u) > f'(u)$: by assumption.
    \item \label{ineq2}
$f'(w) > f(w)$: because there is a path in the decomposition of $h$ that starts at $w$.
    \item \label{ineq3}
$\ell'(v) = \ell'(w)$: the fact that $h$ has an edge $(w,v)$ means that $f'(v,w)>0$, which means that $v$ and $w$ are in the same component of the decomposition of $H'$.
    \item \label{ineq4}
$\ell(v) = \ell(u)$: the fact that $h$ has an edge $(v,u)$ means that $f(v,u)>0$, which means that $v$ and $u$ are in the same component of the decomposition of $H$.
    \item \label{ineq5}
$\ell'(v) \geq \ell'(u)$: follows from the fact that there is an edge $(v,u)$ in $H'$ by the property of the skeleton decomposition that there are no edges from any $S_i$ to $T_j$ with $i<j$.
    \item \label{ineq6}
$\ell(v) \geq \ell(w)$: follows from the fact that there is an edge $(v,w)$ in $H$ by the property of the skeleton decomposition that there are no edges from any $S_i$ to $T_j$ with $i<j$.
    \item \label{ineq7}
$\ell'(w) \geq \ell'(u)$: combining inequalities \ref{ineq3} and \ref{ineq5}.
    \item \label{ineq8}
$\ell(u) \geq \ell(w)$: combining inequalities \ref{ineq4} and \ref{ineq6}.
    \item \label{ineq9}
    $f'(u) \geq f'(w)$: from \ref{ineq7} and the fact that nodes $x \in T'_i$ in sets with lower index have higher $f'(x)$.
    \item \label{ineq10}
    $f(w) \geq f(u)$: from \ref{ineq8} and the fact that nodes $x \in T_i$ in sets with lower index have higher $f(x)$.
    \item \label{ineq11}
    $f'(u) \geq f'(w) > f(w) \geq f(u)$: combining inequalities \ref{ineq9}, \ref{ineq2}, \ref{ineq10}.
\end{enumerate}

\noindent The last inequality implies that $f'(u) > f(u)$, which contradicts our assumption \ref{ineq1}.
\end{proof}

\begin{corollary}\label{cor:init-level}
After the preprocessing phase of ALG, the water level of a node $u\in U'$ is lower for $G$ than for $G'$.
\end{corollary}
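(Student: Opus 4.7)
The plan is to observe that this corollary is essentially an immediate translation of Lemma \ref{lem:decomp-mon} into the language of water levels. First I would recall that the preprocessing phase of ALG sets the water level of each offline node $u$ equal to its fractional degree in the skeleton-based fractional matching of Lemma \ref{lem:app_skeleton}. Concretely, if $u \in T_i$ in the skeleton decomposition of $H$, then its water level after preprocessing on instance $G$ is exactly $\lvert S_i \rvert / \lvert T_i \rvert$, and similarly if $u \in T'_{i'}$ in the decomposition of $H'$, its water level after preprocessing on $G'$ is $\lvert S'_{i'} \rvert / \lvert T'_{i'} \rvert$.

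Next, for any $u \in U'$, I would apply Lemma \ref{lem:decomp-mon} with this particular $u$ to conclude that
\[
\frac{\lvert S_i \rvert}{\lvert T_i \rvert} \;\leq\; \frac{\lvert S'_{i'} \rvert}{\lvert T'_{i'} \rvert},
\]
which, together with the identifications above, is exactly the statement that the water level of $u$ on $G$ is at most that on $G'$.

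There is essentially no obstacle here beyond making sure the identification between ``fractional degree in the preprocessing matching'' and ``water level after preprocessing'' is stated explicitly, since the online water-filling phase has not yet started. One small wrinkle to address is whether $u \in U'$ could fail to lie in some $T_i$ of the decomposition of $H$; but by Lemma \ref{lem:app_skeleton}, $U$ is partitioned by the $T_i$ together with the isolated set $T_\infty$, and a node in $T_\infty$ has fractional degree $0$, so the inequality still holds (trivially on the $G$ side). Thus the corollary follows directly.
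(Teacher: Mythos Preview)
Your proposal is correct and follows essentially the same approach as the paper: identify the preprocessing water level of $u$ with the ratio $|S_i|/|T_i|$ (resp.\ $|S'_{i'}|/|T'_{i'}|$) and invoke Lemma~\ref{lem:decomp-mon}. Your extra remark about $T_\infty$ is a harmless elaboration the paper omits.
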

\begin{proof}
Reusing the notation of Lemma \ref{lem:decomp-mon}, after the preprocessing phase of ALG, the water level of $u$ in $G$ is $\frac{|S_i|}{|T_i|}$, and that in $G'$ is $\frac{|S'_{i'}|}{|T'_{i'}|}$. The result follows from the lemma.
\end{proof}

\begin{lemma}\label{lem:match-more}
If (\ref{prop}) holds at the beginning of a step of the water-filling algorithm in which a node $v\in V_P$ is processed, then the extent to which $v$ is matched in $G$ is the same or higher than it is in $G'$.
\end{lemma}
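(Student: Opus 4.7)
The plan is to derive a closed form for the amount that water-filling matches $v$ in terms of the residual capacities of its neighbors, and then use hypothesis (\ref{prop}) together with $U' \subseteq U$ to compare the two runs directly. I would first record the observation that, when water-filling processes $v$ with initial neighbor water levels $\{y_u\}_{u \in N(v)}$, it raises these levels uniformly starting from the smallest until either $v$ has emitted a full unit of flow or every neighbor of $v$ has reached water level~$1$; in either case the amount that $v$ contributes is exactly
\[
\min\Bigl(1,\;\sum_{u \in N(v)} (1 - y_u)\Bigr).
\]

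With this characterization in hand, the comparison is immediate. When $v$ is processed in $G$, the relevant neighborhood is $N_G(v)$ with levels $\{y_u^G\}$; in $G'$ it is $N_G(v) \cap U'$ with levels $\{y_u^{G'}\}$. Two monotonicities apply: the neighborhood in $G$ is larger since $U' \subseteq U$, and by (\ref{prop}) at the start of this step, $y_u^G \le y_u^{G'}$ for every $u \in U'$, so $1 - y_u^G \ge 1 - y_u^{G'}$ there. Chaining these,
\[
\sum_{u \in N_G(v)}(1 - y_u^G) \;\ge\; \sum_{u \in N_G(v) \cap U'}(1 - y_u^G) \;\ge\; \sum_{u \in N_G(v) \cap U'}(1 - y_u^{G'}),
\]
and applying the nondecreasing function $\min(1,\cdot)$ to both sides yields the required inequality between the extents to which $v$ is matched in the two instances.

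If one instead reads the hypothesis $v \in V_P$ as referring to a predicted node whose fractional match was fixed during preprocessing rather than being recomputed by water-filling, then the argument is even shorter: by Lemma~\ref{lem:app_skeleton}, every vertex of $V_P$ has fractional degree exactly~$1$ under the fractional matching derived from the skeleton decomposition, in both $G$ and $G'$, so $v$ is matched to extent $1$ in both instances and the conclusion is trivial. The main step I expect to require care in either interpretation is the closed-form characterization of the water-filling output, since the algorithm is described operationally; making it rigorous amounts to tracking the single scalar threshold water level raised by the algorithm and observing that it stops precisely when $v$ has emitted one unit or all neighbors have saturated. Once this is in place, the lemma reduces to the two-line chain above, with no induction or fixed-point argument needed because (\ref{prop}) is supplied as a hypothesis.
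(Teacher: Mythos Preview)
Your proposal is correct and follows essentially the same line as the paper's proof: the paper argues informally that ``with more available capacity of the neighbors, $v$ will be matched to the same or higher extent,'' while you make this precise via the closed form $\min\bigl(1,\sum_{u\in N(v)}(1-y_u)\bigr)$ and the two monotonicities $N_{G'}(v)\subseteq N_G(v)$ and $y_u^G\le y_u^{G'}$. Your observation that the hypothesis $v\in V_P$ is likely a typo for $v\in V_A$ (since water-filling is applied only to adversarial nodes) is also apt, and your handling of both readings is fine.
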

\begin{proof}
The neighbors of $v$ in $G$ consist of nodes in $U'$, whose water level is lower than in $G'$, as well as possibly nodes in $U\setminus U'$, which are not present in $G'$. In the water-filling algorithm, $v$'s matching is increased until either it is fully matched or all its neighbors are fully matched. In either case, with more available capacity of the neighbors, $v$ will be matched to the same or higher extent in $G$.
\end{proof}

\begin{lemma}\label{lem:level-ind}
If (\ref{prop}) holds at the beginning of a step of the water-filling algorithm in which a node $v\in V_P$ is processed, then (\ref{prop}) still holds at the end of this step.
\end{lemma}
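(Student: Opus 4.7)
The plan is a direct monotonicity argument on a single water-filling step. For the fixed online node $v$ being processed, let $L_G$ and $L_{G'}$ denote the common fill level reached during $v$'s step in the run of ALG on $G$ and on $G'$ respectively: that is, the value $L \in [0,1]$ such that every neighbor of $v$ whose prior water level is below $L$ has its level raised to $L$ (capped at $1$), and $L$ is determined by requiring that the total water poured equals $v$'s unit capacity, with $L=1$ if even that is not enough to absorb the full unit. Let $w(u)$ and $w'(u)$ be the water levels of $u$ in $G$ and $G'$ just before $v$ is processed; by hypothesis, $w(u) \leq w'(u)$ for all $u \in U'$.

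The first and main step is to show $L_G \leq L_{G'}$. I would define, for each $L \in [0,1]$,
\[
f_G(L) \;=\; \sum_{u \in N_G(v)} \max\bigl(0,\;L - w(u)\bigr), \qquad
f_{G'}(L) \;=\; \sum_{u \in N_{G'}(v)} \max\bigl(0,\;L - w'(u)\bigr),
\]
the total water $v$ would pour if the fill level were $L$. Since $N_{G'}(v) = N_G(v) \cap U'$ and $w(u) \leq w'(u)$ for every $u \in U'$ by the hypothesis, a termwise comparison together with the extra non-negative contributions from neighbors in $U \setminus U'$ (which are present in $G$ but absent in $G'$) yields $f_G(L) \geq f_{G'}(L)$ pointwise. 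Since the fill level in each run is the smallest $L$ at which this function equals $v$'s unit capacity (or $L=1$), monotonicity immediately gives $L_G \leq L_{G'}$.

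The second step is to verify property~($*$) for each $u \in U'$ after the step. If $u \notin N(v)$, both water levels are untouched and the inequality is inherited. If $u \in N(v)$, the water-filling update sets the new level of $u$ to $\max(w(u), L_G)$ in $G$ and to $\max(w'(u), L_{G'})$ in $G'$; combining $w(u) \leq w'(u)$ with $L_G \leq L_{G'}$ and using the monotonicity of $\max$ delivers the required $\max(w(u), L_G) \leq \max(w'(u), L_{G'})$.

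The main place to be careful is the boundary case $L_G = 1$, where $v$'s neighbors in $G$ all saturate before $v$ itself is fully matched; the update formula must then be read as $\min\bigl(1,\max(w(u), L_G)\bigr)$, but since both capping and maxing are monotone, the inequality is still preserved. Aside from this mild bookkeeping, there is nothing combinatorially subtle: the real work was carried out in Lemma~\ref{lem:decomp-mon}, which established the invariant at the end of the preprocessing phase, and here we are just propagating it through one water-filling step.
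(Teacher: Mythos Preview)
Your proof is correct and follows essentially the same approach as the paper: both arguments establish that the common fill level satisfies $L_G \le L_{G'}$ (the paper calls these $z$ and $z'$) and then conclude via monotonicity of $\max$. Your pointwise comparison $f_G(L) \ge f_{G'}(L)$ is a slightly more direct route to $L_G \le L_{G'}$ than the paper's contradiction argument, but the underlying idea is identical.
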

\begin{proof}
The water level of nodes that are not neighbors of $v$ doesn't change, so we only consider the neighbors of $v$.
Consider the case that in $G'$, the water-filling step ends because all of $v$'s neighbors become full. Then their final water level is $1$, and the lemma follows as the final level in $G$ can't exceed $1$.

The remaining case is that $v$ is fully matched in $G'$. By Lemma \ref{lem:match-more}, $v$ must be fully matched in $G$ as well.
In the water-filling algorithm, all offline nodes whose level changes in a particular step have the same level at the end of the step. For $G$, let $z$ be this common level;  $y_u$ be the initial water level of a node $u$, and $\Gamma(v)$ be the neighbors of $v$. For $G'$, we define $z'$, $y'_u$, and $\Gamma'(v)$ analogously.

We'd like to show that $z \leq z'$, so assume for contradiction that $z > z'$. By assumption that (\ref{prop}) holds at the beginning of the step, we have $y_u \leq y'_u$ for all $u\in U'$.
For a node $u \in \Gamma(v)$, the water level increases in this step if $y_u<z$, in which case it increases by $(z-y_u)$.
As $v$ is fully matched in both $G$ and $G'$, the total increase in both cases is equal to $1$. So
$$
1 =
\sum_{\substack{u\in \Gamma(v) \\ y_u<z}} (z - y_u) \geq
\sum_{\substack{u\in \Gamma'(v) \\ y_u<z}} (z - y_u) \geq
\sum_{\substack{u\in \Gamma'(v) \\ y'_u<z'}} (z - y_u) >
\sum_{\substack{u\in \Gamma'(v) \\ y'_u<z'}} (z' - y'_u) = 
1,
$$
where the middle inequality follows because the nodes satisfying $y'_u<z'$ are a subset of those satisfying $y_u<z$, due to our assumptions. Having arrived at a contradiction, we conclude that  $z \leq z'$.

Now, the final water level of each neighbor of $v$ at the end of the step is $\max(y_u, z)$ in $G$ and $\max(y'_u, z')$ in $G'$. As $y_u \leq y'_u$ and $z \leq z'$, the lemma follows.
\end{proof}

\section{Extensions - Imperfect Predictions and Agnosticism}
\label{sec:extensions}

In this section, we consider a more general model where we allow the predicted graph to have small random errors. We define the $(d, \epsilon)$ semi-online model as follows - 
We are given a predicted graph $H = (\setpair{U, V}, E_H)$, where $|U| = |V| = n$. As before, $U$ are the offline nodes and $V$ are the online\footnote{The algorithm does not know the arrival order of nodes in $V$.} nodes. However, we do not explicitly separate $V$ into predicted and adversarial nodes; all nodes are seen by the offline preprocessing stage, but some subset of these nodes will be altered adversarially.

An adversary selects up to $d$ online nodes and may arbitrarily change their neighborhoods. In addition, we allow the realized graph $G$ to introduce small random changes to the remaining predicted graph after the adversary has made its choices. Specifically, each edge in $H$ not controlled by the adversary is removed independently with probability $\epsilon$. Further, for each $u\in U, v\in V$, we add edge $(u,v)$ (if it does not already exist in the graph) independently with probability $\epsilon |M|/n^2$, where $M$ is a maximum matching in $H$. Note that in expectation, we will add fewer than $\epsilon |M|$ edges; simply adding edges with probability $\epsilon$ (instead of $\epsilon |M|/n^2$) would overwhelm the embedded matching.
We call an algorithm \emph{agnostic} if it does not know the $d$ nodes chosen by the adversary during the preprocessing (offline) phase. There are two variants - either the algorithm knows the value of $d$ or it does not. We show a hardness result in the former case and consider algorithms in the latter case.

We first consider agnostic algorithms to find integral matchings in this $(d,\epsilon)$ semi-online model and give a hardness result and a corresponding tight algorithm for the case when $\epsilon = 0$.

\begin{theorem}\label{thm:agnostic hard}
In the $(d,\epsilon)$ semi-online model with $d<n/4$,
no (randomized) agnostic algorithm can find a matching of size more than $n - d - \epsilon (n-3d) + O(\epsilon^2 n)$ in expectation, taken over the randomness of the algorithm and the randomness of the realized graph. This holds even if $d$ is known in advance by the algorithm.
\end{theorem}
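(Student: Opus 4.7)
The plan is to apply Yao's principle with a specific random input distribution against which every deterministic agnostic algorithm attains at most the desired bound in expectation. I take the predicted graph $H$ to be a perfect matching $\{(v_i, u_i)\}_{i=1}^n$, so the maximum matching $M$ governing the edge-addition probability has size $n$ and each non-edge is added independently with probability $\epsilon/n$. The adversary then draws a uniformly random subset $A \subseteq V$ of size $d$ and sets $N(v) = \emptyset$ for every $v \in A$; the realized graph $G$ is produced from this by the prescribed $\epsilon$-noise process.

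The key observation driving the bound is the trivial inequality $|M_{\mathrm{ALG}}| \le |\{v \in V : N_G(v) \neq \emptyset\}|$, valid for every (randomized) online algorithm because an online vertex with no incident edge in $G$ cannot possibly be matched, regardless of arrival order. Since the right-hand side depends on neither the algorithm nor the arrival order, taking expectations over the random $A$, the noise, and the algorithm's coins reduces the task to upper bounding $\sum_{v \in V} \Pr[N_G(v) \neq \emptyset]$.

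I would compute this sum by conditioning on whether $v \in A$. If $v \in A$, its neighborhood consists solely of random additions, so by Bernoulli's inequality $\Pr[N_G(v) = \emptyset \mid v \in A] = (1 - \epsilon/n)^n \ge 1 - \epsilon$, giving $\Pr[N_G(v) \neq \emptyset \mid v \in A] \le \epsilon$. If $v \notin A$, then $v$ is isolated in $G$ precisely when its $H$-edge is removed by the noise and no random addition occurs, yielding $\Pr[N_G(v) = \emptyset \mid v \notin A] = \epsilon(1 - \epsilon/n)^{n-1} \ge \epsilon(1 - \epsilon)$ and hence $\Pr[N_G(v) \neq \emptyset \mid v \notin A] \le 1 - \epsilon + \epsilon^2$. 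Combining these with $\Pr[v \in A] = d/n$ and summing over $v \in V$ gives
\[
\mathbb{E}[|M_{\mathrm{ALG}}|] \;\le\; d\epsilon + (n-d)(1 - \epsilon + \epsilon^2) \;=\; n - d - \epsilon(n - 2d) + \epsilon^2(n-d),
\]
and since $n - d - \epsilon(n - 2d) \le n - d - \epsilon(n - 3d)$ whenever $d \ge 0$, the desired bound $n - d - \epsilon(n - 3d) + O(\epsilon^2 n)$ follows. Yao's principle then extracts a worst-case adversary realizing this guarantee against any randomized algorithm.

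The main conceptual step is recognizing that, in the agnostic setting, the adversary can simply strand its $d$ chosen vertices with empty neighborhoods so that they become matchable only through the unlikely event that a random added edge appears; once that is in hand, the bound is a first-moment computation controlled by a single application of Bernoulli's inequality. The hypothesis $d < n/4$ plays no essential role in this argument (the bound is already informative for all $d < n/3$) and can be carried through as a harmless assumption.
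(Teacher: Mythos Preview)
Your argument is correct and in fact yields the slightly stronger coefficient $\epsilon(n-2d)$ in place of the paper's $\epsilon(n-3d)$, but it takes a quite different route from the paper. The paper's hard instance consists of $n/2$ three-edge ``gadgets'' $\{(v_{2i+1},u_{2i+1}),(v_{2i+1},u_{2i+2}),(v_{2i+2},u_{2i+2})\}$; the adversary secretly redirects $v_{2i+2}$ to $u_{2i+1}$ in $d$ randomly chosen gadgets, so that the algorithm, not knowing which gadgets were tampered with, must guess how to match $v_{2i+1}$ and loses one unit per wrong guess. Your instance is instead a bare perfect matching in which the adversary simply isolates its $d$ vertices, and you bound any matching by the number of non-isolated online vertices.

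Your approach is more elementary and makes the randomization of $A$ and the appeal to Yao unnecessary (the bound already holds pointwise for every fixed $A$). What the paper's construction buys, and yours does not, is that the hardness is genuinely attributable to agnosticism: in their instance the realized graph still has a perfect matching, so an omniscient algorithm achieves $n$ while every agnostic algorithm is stuck at $n-d$; in your instance the \emph{offline optimum itself} is roughly $n-d$, so you are exhibiting an information-theoretic ceiling rather than a price of agnosticism. For the theorem as literally stated (an absolute upper bound on matching size) your proof suffices, but it would not extend, say, to a competitive-ratio statement comparing agnostic algorithms to $\OPT(G)$.
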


\begin{proof}
Assume $n$ is even.
Our hard instance consists of the following predicted graph $H$: 
For each integer $i \in  [0,\frac{n}{2})$, add edges $(v_{2i + 1}, u_{2i+1})$, $(v_{2i + 1}, u_{2i+2})$, and $(v_{2i+2}, u_{2i+2})$. This creates $n/2$ connected components. See Figure \ref{fig:agnostic-hardness} for an illustration.

The adversary chooses $d$ components uniformly at random. Let $\mathcal{A} = \{i_1, i_2, \ldots, i_d\} \subset [0, \frac{n}{2})$ denote the indices of the $d$ components selected by the adversary. For each index $i \in \mathcal{A}$, the adversary then selects $v_{2i+2}$ and changes its neighborhood so it only connects with $u_{2i+1}$ (instead of $u_{2i+2}$).

\begin{figure}[htbp]
\centering
\includegraphics[width=0.7\textwidth]{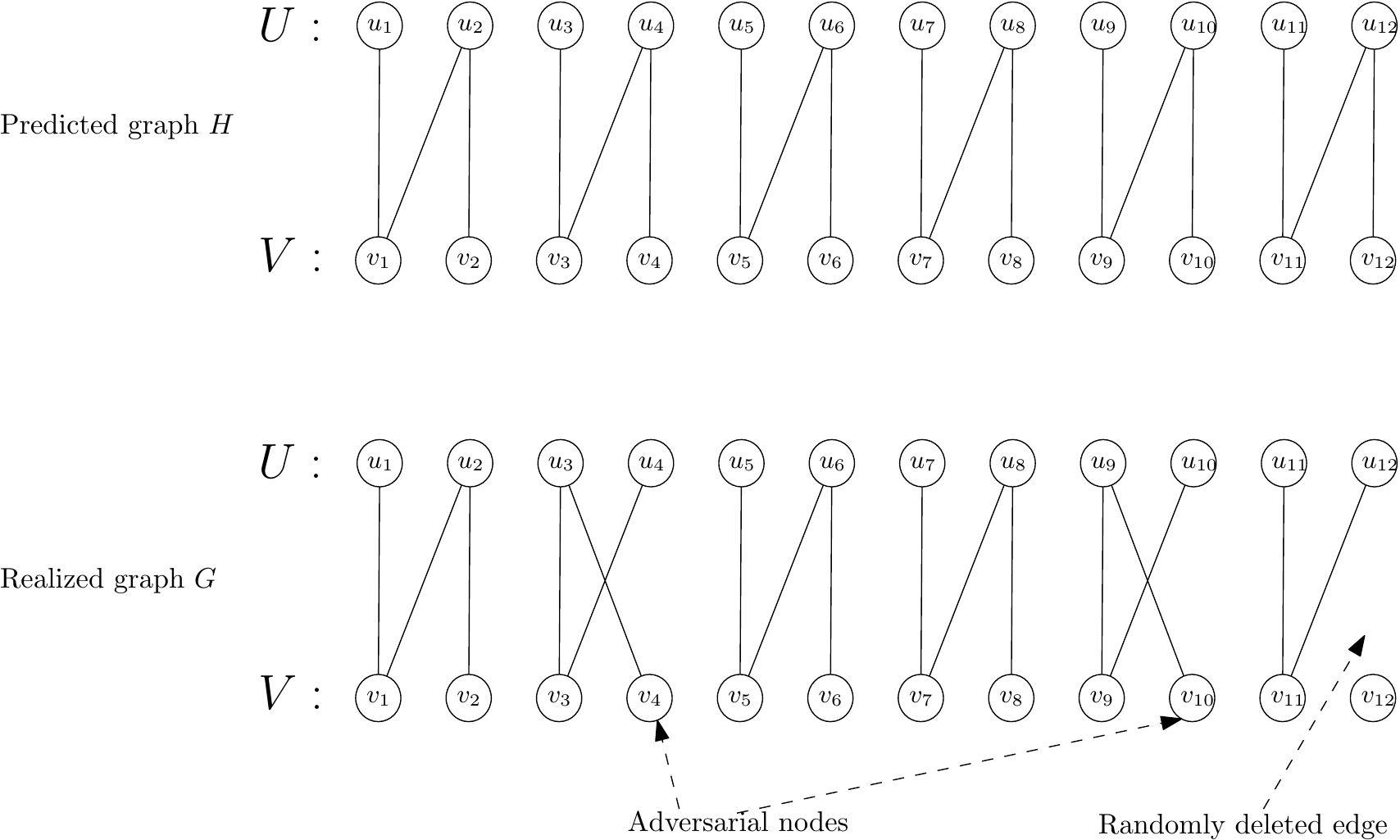}
\caption{Hard instance for Agnostic algorithms}
\label{fig:agnostic-hardness}
\end{figure}

For simplicity, let's first consider the case when $\epsilon = 0$. The algorithm can do no better than picking some $p\in[0,1]$, and matching $v_{2i+1}$ to $u_{2i+1}$ with probability $p$, and matching $v_{2i+1}$ to $u_{2i+2}$ otherwise, for all $i$. The algorithm then matches $v_{2i+2}$ to its neighbor, if possible.
Now, for all $i \in \mathcal{A}$ (components selected by the adversary), this gets an expected matching of size $p + 2(1-p) = 2 - p$. On the other hand, for all $i \notin \mathcal{A}$, the expected matching is size $2p + (1-p) = 1+p$. Since there are $d$ components with an adversary and $n/2 - d$ components without, this gives a total matching of size $(2-p)d + (1+p)(n/2-d)=n/2+d+p(\frac{n}{2}-2d)$. This is maximized when $p=1$ (since $d<n/4$) to yield a matching of size $n-d$.

When $\epsilon > 0$, the algorithm still should set $p = 1$; if the desired edge is removed, then the algorithm will match with whatever node is available. Components with an adversarial node \emph{gain} an edge in the matching when the edge $(v_{2i+1}, u_{2i+1})$ is removed since the algorithm is forced into the right choice; if both edges $(v_{2i+1}, u_{2i+1})$ and $(v_{2i+1}, u_{2i+2})$ are removed, we neither gain nor lose. The expected gain is $\epsilon-\epsilon^2$.
Components without an adversarial node lose an edge in the matching whenever either edge $(v_{2i+1},u_{2i+1})$ or edge $(v_{2i+2},u_{2i+2})$ is removed, and they lose an additional edge if all three edges of the component are removed. So the expected loss is $2\epsilon - \epsilon^2 + \epsilon^3$
Since there are $d$ components with adversarial nodes and $n/2-d$ without, this is a total of loss of
$$
-d(\epsilon-\epsilon^2) + (n/2-d)(2\epsilon - \epsilon^2 + \epsilon^3)
	= \epsilon(n-3d) - O(\epsilon^2 n)
$$
Hence, the total matching is size
$n - \epsilon(n-3d) + O(\epsilon^2 n)$, as claimed.
\end{proof}

%For agnostic algorithms without error, we have the following tight bound.
\begin{theorem}
Given a predicted graph $H$ with a perfect matching, suppose there are $d$ adversarial nodes and $\epsilon = 0$ as described above in the $(d,\epsilon)$ semi-online model. Then there is an agnostic algorithm that does not know $d$ that finds a matching of expected size $n - d$.
\end{theorem}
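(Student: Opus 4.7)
The plan is to exhibit a simple deterministic agnostic algorithm and argue that it matches every non-adversarial online node. In the preprocessing phase, compute any perfect matching $M$ of the predicted graph $H$ (which exists by assumption). In the online phase, when an online node $v \in V$ arrives, the algorithm inspects the edge $\{v, M(v)\}$ in the realized graph $G$; if that edge is present and $M(v)$ is still unmatched, match $v$ to $M(v)$, and otherwise leave $v$ unmatched. Note that this policy is oblivious both to which nodes are adversarial and to the value of $d$, so it qualifies as an agnostic algorithm that does not know $d$.

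The key structural observation is that the algorithm can \emph{only} assign $M(v)$ to $v$: the offline endpoint $M(v)$ is never used for any other online node, because the only matching rule is ``match $v$ to $M(v)$,'' and $M$ is an injection. Consequently, when any non-adversarial node $v$ arrives, its offline partner $M(v)$ is still available. Since $\epsilon = 0$, the realized neighborhood of every non-adversarial node coincides with its predicted neighborhood, so in particular the edge $\{v, M(v)\}$ survives in $G$. Combining these two facts, every non-adversarial node is successfully matched by the algorithm.

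Since there are exactly $n - d$ non-adversarial online nodes, the algorithm deterministically produces a matching of size at least $n - d$, which trivially gives the claimed expected bound. Together with Theorem~\ref{thm:agnostic hard} at $\epsilon = 0$, this establishes that $n - d$ is the tight value in this setting. I do not foresee any serious obstacle in writing out the argument: the only subtle point is verifying that the adversary cannot ``trick'' the algorithm into wasting the edge $\{v, M(v)\}$ on a different online node, which is immediate from the fact that the only rule ever proposes $M(v)$ as $v$'s partner.
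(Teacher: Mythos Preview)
Your approach is essentially the paper's: precompute a perfect matching $M$ in $H$ and follow it online, so that every non-adversarial node is matched and at most $d$ matches are lost. The core argument and the bound are the same.

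There is one modeling subtlety worth flagging. Your algorithm presumes that when an online node $v$ arrives, the algorithm knows its identity in $V$ and can therefore look up $M(v)$ directly. The paper's proof instead identifies an arriving node by its neighborhood (``attempt to identify $v$ with an online node in the predicted graph with the same neighborhood''), which is consistent with the convention from Section~\ref{sec:model} that arriving nodes are recognized only by their edges. Under that convention your invariant ``$M(v)$ is never used for any other online node'' needs one more sentence: an adversarial node could arrive with the same neighborhood as some non-adversarial $v'$ and be assigned $M(v')$, but this is harmless because such a node is isomorphic to $v'$ and we still lose at most one match per adversarial node. The paper makes exactly this remark; with it, both versions of the algorithm yield the same $n-d$ bound.
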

\begin{proof}
Before any online nodes arrive, find a perfect matching $M$ in $H$. In the online stage, as each node $v$ arrives,
we attempt to identify $v$ with an online node in the predicted graph with the same neighborhood, and match $v$ according to $M$. If no node in the predicted graph has neighborhood identical to $v$, we know that $v$ is adversarial and we can simply leave it unmatched.
%In the case that no such node exists, we know $v$ is adversarial, and we can simply ignore it. 
(Note that adversarial nodes can mimic non-adversarial nodes, but it doesn't actually hurt us since they are isomorphic.) The predicted matching had size $n$, and we lose one edge for each adversarial node, so the obtained matching has size $n-d$.
\end{proof}

\subsection{Fractional matchings for predictions with errors}
In this section, we show that we can find an almost optimal \emph{fractional} matching for the $(d, \epsilon)$  semi-online matching problem. 

We use a result from~\cite{vee2010matching}, which gives a method of reconstructing a fractional matching using only the local structure of the graph and a single stored value for each offline node.
They provide the notion of a \emph{reconstruction function}. Their results extend to a variety of linear constraints and convex objectives, but here we need only a simple reconstruction function. For any positive integer $k$, define $g^k:(\mathtt{R}^+_0)^k \rightarrow (\mathtt{R}^+_0)^k$ by
 \[g^k(\alpha_1, \alpha_2, \ldots, \alpha_k) = (\alpha_1 - \max(0, z),~ \alpha_2 - \max(0, z),~ \ldots,~ \alpha_k - \max(0, z))\]
where $z$ is a solution to $\sum_j \min(\max(0, \alpha_j - z), 1) = 1$.

The reconstruction function $g$ is this family of functions. Note that this is well-defined: there is always a solution $z$ between $-1$ and the largest $\alpha_j$, and the solution is unique unless $z\leq 0$.

Lemma~\ref{lem:vee} assigns a value $\alpha_u$ to each $u$ in the set of offline nodes, and reconstructs a matching on the fly as each online node arrives, using only the neighborhood of the online node and the stored $\alpha$ values. Crucially, the reconstruction assigns reasonable values even when the neighborhood is different than predicted. In this way, it is robust to small changes in the graph structure.

\begin{lemma}[Restated from \cite{vee2010matching}]
\label{lem:vee}
Let $g^k_i$ be defined as above, and 
let $H = (\setpair{U, V}, E_H)$ be a bipartite graph with a perfect matching of size $n=|U|=|V|$. Then there exist values $\alpha_u$ for each $u\in U$ (which can be found in polynomial time)
such that the following holds: For all $v\in V$, define $x_{u_i, v} = g_i(\alpha_{u_1}, \alpha_{u_2}, \ldots \alpha_{u_k})$, where $u_1, u_2, \ldots, u_k$ is the neighborhood of $v$. Then $x$ defines a fractional matching on $H$ with weight $n$.
\end{lemma}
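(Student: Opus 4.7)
Since this lemma is restated from~\cite{vee2010matching}, I would follow the approach there: identify the $\alpha_u$ values as Lagrange multipliers of a suitable global convex program, and then verify that $g$ locally recovers the corresponding primal solution at each online node.

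The plan is to formulate the program: maximize $\sum_{u,v} x_{u,v}$ subject to $\sum_{v \in N(u)} x_{u,v} \leq 1$ for each $u \in U$, $\sum_{u \in N(v)} x_{u,v} = 1$ for each $v \in V$, and $x_{u,v} \in [0,1]$. The equality constraints on the $V$-side are feasible because $H$ has a perfect matching of size $n$, so the optimum value is $n$. I would then introduce multipliers $\alpha_u \geq 0$ for the $U$-side capacity constraints and unconstrained $\beta_v$ for the $V$-side equality constraints. KKT stationarity forces the optimal primal $x^*$ to satisfy $x^*_{u,v} = \min(\max(0, \alpha_u - \beta_v), 1)$ on every edge; primal feasibility on the $V$-side then pins each $\beta_v$ down as the unique root of $\sum_{u \in N(v)} \min(\max(0, \alpha_u - \beta_v), 1) = 1$, which is exactly how $z$ is defined inside $g$.

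Consequently, storing the $\alpha_u$'s (polynomial-time computable by solving the LP or a convex regularization of it) yields a prescription in which, for each online $v$, applying $g$ to the stored values on $v$'s neighborhood recovers the globally optimal $x^*_{u,v}$ edge by edge. The resulting fractional matching is feasible on the $U$-side by construction of $x^*$ and has total weight $\sum_v \sum_u x^*_{u,v} = \sum_v 1 = n$, giving the lemma.

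The main obstacle is showing that the purely local computation of $z$ at each $v$ is globally consistent with a single KKT solution. The key observation is that the stationarity condition at an edge $(u,v)$ involves only the per-$v$ multiplier $\beta_v$, so once $\alpha$ is fixed each $\beta_v$ is determined independently by its own primal constraint $\sum_u x_{u,v} = 1$. A technical subtlety is the potential non-uniqueness of the multipliers when many edges are tight at $0$ or $1$; I would resolve this by a strictly convex perturbation of the objective (or an explicit tie-breaking rule), both of which preserve the reconstructed matching in the limit.
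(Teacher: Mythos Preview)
Your overall strategy---extract per-$u$ dual values from a global program and recover each $x_{u,v}$ locally via the $V$-side constraint---matches the paper's. However, the specific program you propose does not yield the reconstruction formula you claim. For a \emph{linear} objective, KKT stationarity at an edge reads $1 - \alpha_u - \beta_v + \tau^0_{uv} - \tau^1_{uv} = 0$; the primal variable $x_{uv}$ does not appear. Combined with complementary slackness on the box constraints, this only tells you that $x^*_{uv}=0$ when $\alpha_u+\beta_v>1$, that $x^*_{uv}=1$ when $\alpha_u+\beta_v<1$, and that $x^*_{uv}$ is undetermined by the duals when $\alpha_u+\beta_v=1$. So the identity $x^*_{uv}=\min(\max(0,\alpha_u-\beta_v),1)$ does not follow from your LP, and the local solve for $z$ inside $g$ would not reproduce the global optimum.

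The paper's fix is not a perturbation but the choice of objective itself: it minimizes $\tfrac{1}{2}\sum_{(u,v)} x_{uv}^2$ subject to $\sum_v x_{uv}=1$ on the $U$-side, $\sum_u x_{uv}\le 1$ on the $V$-side, and $0\le x_{uv}\le 1$. Now stationarity reads $x_{uv}-\alpha_u+\beta_v-\tau^0_{uv}+\tau^1_{uv}=0$, so $x_{uv}$ appears linearly and, after complementary slackness on the box, one gets exactly $x^*_{uv}=\min(\max(0,\alpha_u-\beta_v),1)$. The $V$-side complementary slackness $\beta_v\bigl(\sum_u x^*_{uv}-1\bigr)=0$ then lets you solve for $\beta_v$ from the $\alpha$'s alone, which is precisely the computation inside $g$. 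Your remark about a strictly convex perturbation points in the right direction, but it is not merely a tie-breaking device for multipliers; it is what makes the reconstruction identity hold in the first place, and it must be the specific quadratic rather than an arbitrary regularizer sent to zero.
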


Interested readers can find the proof in section \ref{sec:lemma-vee}.
Given this reconstruction technique, we can now describe the algorithm:
\begin{itemize}
\item In the preprocessing phase, find the $\alpha_u$ values for all $u\in U$ using Lemma \ref{lem:vee}.
\item In the online phase, for each online node $v$, compute $\tilde{x}_{u_i,v} = g_i(\alpha_{u_1}, \ldots, \alpha_{u_k})$, where $u_1, \ldots, u_k\in\Gamma_G(v)$, as described above.
Assign weight $\tilde{x}_{u_i,v}$ to the edge from $u_i$ to $v$; if that would cause node $u_i$ to have more than a total weight 1 assigned to it, just assign as much as possible. 
\end{itemize}
Note that we make the online computation based on the neighborhood in $G$, the realized graph, although the $\alpha_u$ values were computed based on $H$, the predicted graph.
We have the following.
\begin{theorem}\label{thm:fractional-matching}
In the $(d, \epsilon)$ semi-online matching problem in which the predicted graph has a perfect matching, there is a deterministic agnostic algorithm that gives a fractional matching of size $n(1-2\epsilon - \delta)$ in expectation, taken over the randomized realization of the graph. The algorithm does not know the value of $d$ or the value of $\epsilon$ in advance.
\end{theorem}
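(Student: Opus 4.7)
My plan is to compare the algorithm's output to the fractional matching $x$ on $H$ defined by the $\alpha_u$ values from Lemma~\ref{lem:vee}. Because $H$ has a perfect matching and $|U|=|V|=n$, the guarantee $\sum_u x_{u,v}=1$ for every $v$ together with $\sum_v x_{u,v}\le 1$ at every $u$ and a counting argument forces $\sum_v x_{u,v}=1$ for every $u$. Let $\tilde{x}_{u,v}$ be the raw assignment the algorithm produces in the online phase (applying $g$ to $\Gamma_G(v)$ rather than $\Gamma_H(v)$), and $\hat{x}_{u,v}$ be the assignment after truncating to the offline capacity. Writing $B_u=\sum_v \tilde{x}_{u,v}$, the output size is $\sum_u\min(1,B_u)$, and since $\sum_u B_u=|\{v:\Gamma_G(v)\ne\emptyset\}|$ by the defining property $\sum_i g_i=1$ of $g$, the total loss decomposes as
\[
\mathrm{Loss} \;=\; n-\sum_u\min(1,B_u) \;=\; |\{v:\Gamma_G(v)=\emptyset\}| + L_C, \qquad L_C \;:=\; \sum_u\max(0,B_u-1).
\]

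I will bound the two contributions by node type. Adversarial nodes (at most $d$ of them) together contribute at most $d$: any adversarial $v$ adds at most $1$ to the empty-neighborhood count and at most $1$ to the adversarial portion of $L_C$, and these cannot both be nonzero for the same $v$, so their combined contribution is at most $|V_A|=d$. For non-adversarial $v$, the key inequality, obtained from $\sum_v x_{u,v}=1$ at every $u$ and the water-filling monotonicity of $g$, is
\[
\sum_u\max(0,\tilde{x}_{u,v}-x_{u,v}) \;\le\; m_v + a_v, \quad m_v:=\sum_{u:(u,v)\text{ removed}}x_{u,v}, \quad a_v:=\sum_{u:(u,v)\text{ added}}\tilde{x}_{u,v}.
\]
This yields $L_C^{\text{non-adv}}\le\sum_{v\text{ non-adv}}(m_v+a_v)\,\mathbb{1}[\Gamma_G(v)\ne\emptyset]$ via the crude bound $\max(0,\sum_v c_v)\le\sum_v\max(0,c_v)$ applied at each $u$.

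The crucial step is a joint bound that pairs the non-adversarial empty-neighborhood loss with the redistribution term $m_v$, so as not to double-count edge removals. The observation is that $\Gamma_G(v)=\emptyset$ forces every $H$-edge of $v$ to be removed, which forces $m_v=1$; hence for each non-adversarial $v$,
\[
\Pr[\Gamma_G(v)=\emptyset] + \mathbb{E}\bigl[m_v\,\mathbb{1}[\Gamma_G(v)\ne\emptyset]\bigr] \;=\; \mathbb{E}[m_v] \;=\; \epsilon.
\]
Summed over non-adversarial $v$, this contributes at most $\epsilon(n-d)\le\epsilon n$. The added-edge contribution is bounded by $\mathbb{E}[\sum_v a_v]\le\epsilon|M|\le\epsilon n$, since each non-$H$-edge is present in $G$ with probability at most $\epsilon|M|/n^2$ and $\tilde{x}_{u,v}\le 1$. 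Combining, $\mathbb{E}[\mathrm{Loss}]\le d+\epsilon n+\epsilon n=d+2\epsilon n$, giving $\mathbb{E}[\sum_{u,v}\hat{x}_{u,v}]\ge n(1-\delta-2\epsilon)$. The algorithm qualifies as agnostic because the $\alpha_u$'s are computed from $H$ alone, independent of $d$ and $\epsilon$. The main obstacle is verifying the per-$v$ monotonicity inequality: a clean route is to compare the water levels $z_H$ and $z_G$ that $g$ uses on the two neighborhoods and show that, depending on the sign of $z_G-z_H$, the preserved coordinates move coherently, so that in fact $\sum_u\max(0,\tilde{x}_{u,v}-x_{u,v})=\max(m_v,a_v)$, from which the stated bound follows.
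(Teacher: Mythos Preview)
Your argument is correct and arrives at the same bound, but the route differs from the paper's. The paper argues by a four–way case split on each online node: (1) adversarial nodes forfeit their unit of weight; (2) non-adversarial nodes that received any random added edge are treated as if adversarial, and there are at most $\epsilon n$ of them in expectation; (3) nodes whose neighborhood is exactly as predicted lose nothing; (4) nodes with only deletions have $z_G\le z_H$, so $\tilde x_{uv}\ge x_{uv}$ on every surviving edge and the loss is just the $x$-weight of the deleted edges, at most $\epsilon n$ in expectation. The water-level monotonicity is invoked only in Case~4; the added-edge case is handled by the crude ``write the whole node off'' bound.

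Your decomposition is more structural: you write the total shortfall as (empty neighborhoods) $+$ (offline overflow $L_C=\sum_u\max(0,B_u-1)$), use $\sum_v x_{u,v}=1$ at every $u$ to turn $L_C$ into a sum of per-$v$ overshoots, and then prove the sharper identity $\sum_u\max(0,\tilde x_{u,v}-x_{u,v})=\max(m_v,a_v)$ by comparing $z_G$ and $z_H$ in both directions. This unifies the paper's Cases~2 and~4 into a single inequality and avoids discarding nodes with added edges wholesale. Your pairing of the empty-neighborhood indicator with $m_v$ (via $\Gamma_G(v)=\emptyset\Rightarrow m_v=1$) is a nice way to avoid double counting that the paper does not need because of its cruder Case~2. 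What the paper's approach buys is simplicity: the credit/excess bookkeeping makes the ``truncate at capacity'' step transparent without the $\min(1,B_u)$ algebra. What yours buys is a cleaner quantitative statement ($\max(m_v,a_v)$ rather than $m_v+a_v$ or a full unit) and no need to single out nodes with added edges.

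One small point to make explicit in your write-up: the identity $\sum_u B_u=|\{v:\Gamma_G(v)\ne\emptyset\}|$ relies on $\sum_i g_i(\alpha_{u_1},\dots,\alpha_{u_k})=1$ whenever $k\ge 1$, which follows directly from the defining equation for $z$ but is worth stating, since the paper's presentation of $g$ is slightly informal on this point.
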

\begin{proof}
If the realized graph were exactly as predicted, we would give the fractional assignment $x$ guaranteed in Lemma~\ref{lem:vee}, which has weight $n$. However, the fractional matching that is actually realized is somewhat different.
For each online node that arrives, we treat it the same whether it is adversarial or not. But we have a few cases to consider for analysis:
\begin{itemize}
\item Case 1: The online node $v$ is adversarial. In this case, we forfeit the entire weight of 1 in the matching. We may assign some fractional matching to incident edges. However, we count this as `excess' and do not credit it towards our total. In this way, we lose at most $\delta n$ total weight.
\item Case 2: The online node $v$ is not adversarial, but it has extra edges added through a random process. There are at most $\epsilon n$ such nodes in expectation. In this case, we treat them the same as adversarial. We forfeit the entire weight of 1, and ignore the `excess' assignment. This loses at most $\epsilon n$ total weight in expectation.
\item Case 3: The online node $v$ is as exactly as predicted. In this case, we correctly calculate $x_{uv}$ for each $u\in\Gamma(v)$. Further, we assign $x_{uv}$ to each edge, unless there was already `excess' there. Since we never took credit for this excess, we will take $x_{uv}$ credit now. So we do not lose anything in this case.
\item Case 4: The online node $v$ is as predicted, except each edge is removed with probability $\epsilon$ (and no edges are added). In this case, when we solve for $z$, we find a value that is bounded above by the true $z$. The reason is that in the predicted graph, we solved $\sum_{u\in\Gamma(v)} \min(\max(0, \alpha_u - z), 1) = 1$ for $z$ when computing $g$.
In the realized graph, this same sum has had some of its summands removed, meaning the solution in $z$ is at most what it was before. So the value of $\tilde{x}_{uv}$ that we calculate is at least $x_{uv}$ for all $u$ in the realized neighborhood. We take a credit of $x_{uv}$ for each of these, leaving the rest as excess. Note that we have assigned $0$ to each edge that was in the predicted graph but missing in the realized graph. Since each edge goes missing with probability $\epsilon$, this is a total of at most $\epsilon n$ in expectation.
\end{itemize}
So, the total amount we lose in expectation is $2\epsilon n + \delta n$. Since the matching in the predicted graph has weight $n$, the claim follows.
\end{proof}

\subsection{Semi-Online Algorithms For Ski Rental}
\label{sec:ski-rental}

In this section, we consider the semi-online ski rental problem. In the classical ski rental problem, a skier needs to ski for an unknown number of days and on each day needs to decide whether to rent skis for the day at a cost of 1 unit, or whether to buy skis for a higher cost of $b$ units and ski for free thereafter. We consider a model where the skier has perfect predictions about whether or not she will ski on a given day for a few days in the time horizon. In addition, she may or may not ski on the other days. For instance, say the skier knows whether or not she's skiing for all weekends in the season, but is uncertain of the other days. The goal is to design an algorithm for buying skis so that the total cost of skiing is competitive with respect to the optimal solution for adversarial choices for all the days for which we have no predictions.

Let $x$ denote the number of days that the predictions guarantee the skier would ski. Further, it is more convenient to work with the fractional version of the problem so that it costs 1 unit to buy skis and renting for $z$ (fractional) days costs $z$ units. In this setting, we know in advance that the skier will ski for at least $x$ days. 
There is a randomized algorithm that guarantees a competitive ratio of $1/(1 - (1-x)e^{-(1-x)})$. Our analysis is a minor extension of an elegant result of~\cite{karlin2003stories}.

\begin{theorem}
\label{thm:ski-rental}
There is a $\dfrac{e}{e - (1-x)e^x}$ competitive randomized algorithm for the semi-online ski-rental problem where $x$ is a lower bound of the number of days the skier will ski.
\end{theorem}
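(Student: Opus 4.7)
The plan is to construct an explicit randomized threshold distribution and verify its competitive ratio by direct computation, extending the classical Karlin--Manasse--McGeoch--Owicki algorithm to exploit the $x$-day skiing guarantee.

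Let $c := \frac{e}{e - (1-x)e^x}$; dividing by $e$ and rearranging shows this is equivalent to the identity $c\bigl(1 - (1-x)e^{x-1}\bigr) = 1$. I would define the algorithm to sample a random ``buying time'' $T \in [0,1]$ from the mixed distribution with point mass $p_0 := c x e^{x-1}$ at $t=0$ and density $f(t) := (c/e) e^t$ on $[x,1]$ (and no mass on $(0,x)$): the skier rents until time $T$, and if she is still skiing at time $T$, she buys for cost $1$. A short computation, using the identity above, confirms $p_0 + \int_x^1 f(t)\,dt = 1$, so this is a valid distribution.

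Given any realized (fractional) number of ski days $y \ge x$, the algorithm's expected cost is
\begin{equation*}
\mathbb{E}[C(T,y)] \;=\; p_0 \cdot 1 \;+\; \int_0^y (t+1)\,f(t)\,dt \;+\; y\,\Pr[T > y],
\end{equation*}
since $T \le y$ incurs rent-plus-buy $T+1$ and $T > y$ incurs pure rent $y$. For $y \in [x,1]$, I would evaluate the integral by parts using $\int_x^y (t+1) e^t\,dt = y e^y - x e^x$ and $F(y) = (c/e)(e^y - e^x)$, then substitute the expressions for $p_0$ and $c$; after cancellation, the whole sum collapses to exactly $cy$, so the ratio against $\OPT(y) = y$ is $c$. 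For $y \ge 1$, we always have $T \le 1 \le y$, hence $\mathbb{E}[C(T,y)] = \mathbb{E}[C(T,1)] = c$ is constant, and the ratio against $\OPT(y) = 1$ is again $c$. Combining the two cases gives the claimed bound.

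The main conceptual step (rather than a technical obstacle) is identifying the correct distribution. Because the skier is guaranteed to ski through day $x$, any buying time $t \in (0,x)$ is strictly dominated by $t = 0$: both pay the unit buy cost, but the former also burns $t$ units of rent along the way. This forces the optimal mass over $[0,x)$ to collapse onto a single atom at $t=0$, while the shape on $[x,1]$ remains exponential as in the classical analysis. The constants $p_0$ and $c/e$ are then pinned down by requiring $\mathbb{E}[C(T,y)] = cy$ throughout $[x,1]$: the boundary condition at $y=1$ (together with $\int f = 1$) fixes the coefficient $c/e$, and the transition condition at $y = x$ (where the cumulative distribution jumps from the atom $p_0$ to the continuous density) fixes $p_0 = c x e^{x-1}$, simultaneously determining $c$ as the reciprocal of $1 - (1-x)e^{-(1-x)}$.
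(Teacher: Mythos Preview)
Your proposal is correct and essentially matches the paper's proof: both use the same buying-time distribution (an atom of mass $cxe^{x-1}$ at $t=0$ and density $(c/e)e^t$ on $[x,1]$) and show the expected cost equals $c\cdot\min(y,1)$ for every $y\ge x$. The only difference is presentational---the paper derives the distribution by imposing $\mathrm{Cost}(x,u)=cu$ and solving the resulting ODE $p_x'=p_x$, whereas you postulate the distribution (with a clear domination argument for why $(0,x)$ carries no mass) and verify the cost identity by direct integration.
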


\begin{proof}
Without loss of generality, we can assume that all the days with a prediction occur before any of the adversarial days arrive. Otherwise, the algorithm can always pretend as if the predictions have already occurred, since only the number of skiing days is important and not their order. Recall that $x$ denotes the number of days that the predictions guarantee the skier would ski. 
Let $u \geq x$ be the actual number of days (chosen by the adversary) that she will ski. Since buying skis costs $1$, the optimal solution has a cost of $\min(u, 1)$. Clearly, if $x \geq 1$, we must always buy the skis immediately and hence we assume that $0 \leq x < 1$ in the rest of the section. Further, even the optimal deterministic algorithm buys skis once $z=1$, so we may assume that $u\leq 1$.

\newcommand{\cost}{\ensuremath{\mbox{Cost}}}
Let $p_x(z)$ denote the probability that we buy the skis on day $z$, and let $q(x)$ denote the probability that we buy skis immediately. Recall that $p_x$ is implicitly a function of the prediction $x$. Given a fixed number of days to ski $u$, we can now compute the expected cost of the algorithm as
$$
\cost(x, u) = q(x) + \int_0^u (1+z) \cdot p_x(z) dz 
+ \int_u^1 u \cdot p_x(z) dz
$$
Our goal is to choose a probability distribution $p$ so as to minimize $\cost(x, u) / \min(u,1)$ while the adversary's goal is to choose $u$ to maximize the same quantity. We will choose $p_x$ and $q$ so that $\cost(x, u) / \min(u,1)$ is constant with respect to $u$. As we noted, $u\leq 1$, so $\min(u,1) = u$. Setting the $\cost(x,u) = c\cdot u$ for constant $c$ and taking the derivative with respect to $u$ twice gives us
$$
0 = \frac{\partial}{\partial u} p_x(u) - p_x(u)
$$
Of course, $p_x$ must also be a valid probability distribution.
Thus, we set $p_x(z) = (1-q(x))\cdot \dfrac{e^z}{e-e^x}$ for $z \geq x$. For $z < x$, we set $p_x(z) = 0$ since there is no reason to buy skis while $z<x$ if we did not already buy it immediately.

Recalling that we set $\cost(x,u) = c\cdot u$, we can substitute $p_x(z)$ and solve for $q(x)$, finding
\begin{align*}
q(x) &= \dfrac{x e^x}{e - (1-x)e^x}
\end{align*}
Hence, the competitive ratio is thus given by
$$
\frac{\cost(x,u)}{u} = 
	\frac{1}{u}\bigg(q(x) 
    + \frac{1-q(x)}{e-e^x}\int_x^u (1+z)e^z dz
    + \frac{1-q(x)}{e-e^x}\int_u^1 u\cdot e^z dz
    \bigg)
$$
Substitute $q(x)$, and after some manipulation, this becomes
$$
\frac{\cost(x,u)}{u} = 
	\frac{e}{e-(1-x)e^x}
$$
Note that when $x=0$, this becomes the classic ski rental problem, and the above bound is $e/(e-1)$, as expected.
\end{proof}

\appendix

\section{Proof of Lemma~\ref{lem:vee}}\label{sec:lemma-vee}
\begin{proof}

%In this section, we describe how a fractional matching can be  reconstructed, using the technique of~\cite{vee2010matching}.

Recall that we are given a graph $H = (U, V, E)$, where $|U| = |V| = n$, which has a perfect matching. We may write the fractional matching problem as the following quadratic program.
\begin{align}
&\mbox{\ \ }\min \frac{1}{2}\sum_{(u,v)\in E} x_{uv}^2 \nonumber \\
\text{s.t. \ \ } 
	& \forall_u,\ \ \sum_{v\in\Gamma(u)} x_{uv} = 1 
    		\label{eqn:alpha}\\
	& \forall_v,\ \ \sum_{u\in\Gamma(v)} x_{uv} \leq 1
    		\label{eqn:beta}\\
    & \forall_{(u,v)\in E},\ \ \ 0 \leq x_{uv} \leq 1
    		\label{eqn:tau}
\end{align}
The objective here may be a little surprising. However, Constraint~\ref{eqn:alpha} guarantees that any feasible solution will have a matching of size $n$. The objective is chosen simply to make reconstructing the solution easy. We will see this in a moment.

Note that since there is a perfect matching, this optimization problem always has a feasible solution. Let's write out its KKT conditions. Let $\alpha_u$ be the dual for Constraint~\ref{eqn:alpha}, $\beta_v$ the dual for Constraint~\ref{eqn:beta}, and $\tau_{uv}^0, \tau_{uv}^1$ be the duals for Constraint~\ref{eqn:tau}. Letting $x^*$ denote the optimal solution, we have
\begin{align}
\forall_{(u,v)\in E},\ x^*_{uv}  - \alpha_u + \beta_v - \tau_{uv}^0 + \tau_{uv}^1 = 0 
	\label{eqn:primal}\\
\intertext{Complementary slackness:} 
\forall_u,\ 	\alpha_u \cdot \bigg(\sum_{v\in\Gamma(u)} x^*_{uv} - 1\bigg) = 0\\
\forall_v,\	\beta_v \cdot \bigg(\sum_{u\in\Gamma(v)} x^*_{uv} - 1\bigg) = 0
    			\label{eqn:comp_beta}\\
\forall_{(u,v)\in E},\	\tau_{uv}^0 \cdot x^*_{uv} = 0 = \tau_{uv}^1\cdot (x^*_{uv} - 1)
    			\label{eqn:comp_tau}\\
\text{with \ } \alpha_u, \beta_v, \tau_{uv}^0, \tau_{uv}^1 \ge 0\nonumber
\end{align}
Constraint~\ref{eqn:primal} can be rewritten as $x^*_{uv}  = \alpha_u - \beta_v + \tau_{uv}^0 - \tau_{uv}^1$. Inspecting this more carefully, note that if $0 < \alpha_u - \beta_v < 1$, then by complementary slackness (Constraint~\ref{eqn:comp_tau}), both $\tau_{uv}^0 = \tau_{uv}^1 = 0$. That is, $x^*_{uv} = \alpha_u - \beta_v$. On the other hand, if $\alpha_u - \beta_v < 0$, then since $x^*_{uv} \geq 0$, we must have $\tau_{uv}^0 = \beta_v - \alpha_u$. Hence, $x^*_{uv} = 0$. An analogous argument shows that if $\alpha_u - \beta_v > 1$, then $x^*_{uv} = 1$. Putting this together, we see
$$
x^*_{uv} = \min(\max(0, \alpha_u - \beta_v), 1)
$$
In other words, we can reconstruct $x^*_{uv}$ using only the $n$ dual $\alpha$ values and $n$ dual $\beta$ values. Wonderfully, we can do even better. Suppose we have only the $\alpha$ values. Given the neighborhood of any online node $v$, we can reconstruct the value of $x^*_{uv}$ for all $u$ in the neighborhood. By Constraint~\ref{eqn:comp_beta}, we have
\begin{align*}
\beta_v \cdot \bigg(\sum_{u\in\Gamma(v)} x^*_{uv} - 1\bigg) = 0\\
\beta_v \cdot \bigg(\sum_{u\in\Gamma(v)} \min(\max(0, \alpha_u - \beta_v), 1) - 1\bigg) = 0 %\\
%\text{Hence, either $\beta_v = 0$ or\ }
%\sum_{j\in\Gamma(i)} \min(\max(0, \alpha_u - \beta_v), 1) = 1
\end{align*}
Since we know all the $\alpha_u$, and $x_{uv} \leq 1$, we can solve this for $\beta_v$ (in linear time). Just solve the equation $\sum_{j\in\Gamma(i)} \min(\max(0, \alpha_u - z), 1) = 1$ for $z$ and set $\beta_v = \max(0, z)$.
In other words, given only the $\alpha$ dual values, we can reconstruct all the $x^*_{uv}$ using the structure of the graph.

Noting that we can solve this quadratic program in polynomial time (using the Ellipsoid Algorithm), we can find the dual values $\alpha_u$ in polynomial time, and Lemma~\ref{lem:vee} follows.
\end{proof}

\bibliography{bibfile}

\end{document}